\newtheorem{definition}{\textbf{Definition}}
\newtheorem{theorem}[definition]{\textbf{Theorem}}
\newtheorem{lemma}[definition]{\textbf{Lemma}}
\newtheorem{remark}[definition]{\textbf{Remark}}
\newtheorem{proposition}[definition]{\textbf{Proposition}}
\newtheorem{problem}[definition]{\textbf{Problem}}
\newtheorem*{proof*}{\textbf{Proof}}
\newcommand{\brck}[1]{\llbracket #1 \rrbracket}
\newcommand{\msf}[1]{\mathsf{#1}}
\newcommand{\Ag}{\msf{Ag}}
\newcommand{\fanBr}[1]{\langle\!\langle#1\rangle\!\rangle\!}
\newcommand{\fanBrOp}[1]{\langle\!\langle#1\rangle\!\rangle}
\newcommand{\N}{\mathbb{N}}
\newcommand{\B}{\mathbb{B}}
\newcommand{\SAT}{\mathsf{SAT}}
\newcommand{\post}{\mathrm{post}}
\newcommand{\prop}{\mathcal{P}}
\newcommand{\propformula}{\Omega}
\DeclareMathOperator{\lF}{\mathbf{F}}
\DeclareMathOperator{\lG}{\mathbf{G}}
\DeclareMathOperator{\lU}{\mathbf{U}}
\DeclareMathOperator{\lX}{\mathbf{X}}
\DeclareMathOperator{\false}{\mathit{false}}
\DeclareMathOperator{\true}{\mathit{true}}
\DeclareMathOperator{\lA}{\mathbf{A}\!}
\DeclareMathOperator{\lE}{\mathbf{E}\!}
\newcommand{\placeholder}{?}
\newcommand{\Sat}{\mathtt{SAT}_{C}}
\newcommand{\sample}{\mathcal{S}}
\author[1,2]{Benjamin Bordais}
\author[1,2]{Daniel Neider}
\author[3]{Rajarshi Roy}
\affil[1]{TU Dortmund University, Dortmund, Germany}
\affil[2]{Center for Trustworthy Data Science and Security, University Alliance Ruhr, Dortmund, Germany}
\affil[3]{Max Planck Institute for Software Systems, Kaiserslautern, Germany}
\date{}
\begin{document}
	\title{Learning Branching-Time Properties in CTL and ATL via Constraint Solving}
	\maketitle	
	
\begin{abstract}
	We address the problem of learning temporal properties from the branching-time behavior of systems.
	%In recent years, learning temporal properties from system behavior has crystallized as an effective approach to synthesizing specifications for formal verification and explaining the temporal behavior of systems.
	Existing research in this field has mostly focused on learning linear temporal properties specified using popular logics, such as Linear Temporal Logic (LTL) and Signal Temporal Logic (STL).
	Branching-time logics such as Computation Tree Logic (CTL) and Alternating-time Temporal Logic (ATL), despite being extensively used in specifying and verifying distributed and multi-agent systems, have not received adequate attention.
	Thus, in this paper, we investigate the problem of learning CTL and ATL formulas from examples of system behavior.
	As input to the learning problems, we rely on the typical representations of branching behavior as Kripke structures and concurrent game structures, respectively.
	Given a sample of structures, we learn concise formulas by encoding the learning problem into a satisfiability problem, most notably %using The crux of this reduction involves 
	by symbolically encoding both the search for prospective formulas and their fixed-point based model checking algorithms.
	We %additionally 
	also study the decision problem of checking the existence of prospective %CTL and 
	ATL formulas for a given sample.
	We implement our algorithms in an Python prototype and have evaluated them to extract several common CTL and ATL formulas used in practical applications.
\end{abstract}

%!TEX root = LearnATL.tex

\section{Introduction}

Formal verification relies on the fact that formal specifications, which are precise descriptions of the design requirements, are either readily available or can be constructed easily.
This assumption, however, often proves to be unrealistic as constructing specifications manually is not only tedious but also prone to errors.
As a result, for years, the availability of precise, functional, and usable specifications has been one of the biggest bottlenecks of formal methods~\cite{AmmonsBL02,BjornerH14,Rozier16}.

To tackle this serious limitation, recent research has concentrated on automatically generating specifications, especially in temporal logics.
There is a large body of works targeted towards learning specifications in linear-time logics such as Linear Temporal Logic (LTL)~\cite{flie,CamachoM19}, Metric Temporal Logic (MTL)~\cite{DBLP:conf/vmcai/RahaRFNP24}, Signal Temporal Logic (STL)~\cite{dtbombara,MohammadinejadD20}, etc.
These approaches not only generate reliable specifications but can also be used to infer interpretable descriptions for complex temporal behaviors.

Along with linear-time logics, branching-time logics have had a significant impact on formal verification.
Computation Tree Logic (CTL)~\cite{DBLP:conf/lop/ClarkeE81}, which combines temporal operators such as $\lX$ (next), $\lF$ (finally), $\lG$ (globally) with the branching quantifiers $\lE$ (exists) and $\lA$ (all), is a specification language of choice for numerous verification tools~\cite{DBLP:conf/hybrid/BengtssonLLPY95,DBLP:journals/tse/Holzmann97,DBLP:conf/cav/CimattiCGR99}.
Alternating-time Temporal Logic (ATL)~\cite{AlurATL}, which augments CTL with a ``cooperation'' quantifier $\fanBrOp{\cdot}$ to reason about interaction of multiple agents, is popular in specifying properties for distributed and multi-agent systems~\cite{DBLP:conf/cav/AlurHMQRT98,DBLP:conf/tacas/LomuscioR06} and has several applications in AI domains~\cite{DBLP:journals/sLogica/HoekW03a,DBLP:journals/tocl/MogaveroMPV14}.
%For further details on these logics, we refer to Section~\ref{sec:prelim}.

Despite the significance of branching-time logics, learning such specifications has received considerably less attention.
The few existing works handle the problem of completing user-defined queries~\cite{Chan00}, i.e., specifications with missing parts, or searching for specifications based on few restricted templates~\cite{WasylkowskiZ11} such as $\lA \lF \placeholder$, $\lA \lG (\placeholder_1 \rightarrow \lF\placeholder_2)$, etc.
These works are limited in their generality and require one to handcraft queries/templates suitable for learning.
%works include  is by Chan, which addresses the problem of completing queries,  in CTL.
%Another similar work is by Wasylkowski and Zeller~\cite{WasylkowskiZ11}, which mines invariants in CTL for Java programs using queries of few restricted forms
%such as .

Towards learning arbitrary branching-time properties, we consider the \emph{passive learning} problem for both CTL and ATL. This problem requires, given a sample $\sample$ of positive (or desirable) and negative (or undesirable) structures, to infer a minimal CTL/ATL formula that is consistent with $\sample$. 
Passive learning is widely studied in the literature~\cite{BiermannF72,Gold78,flie} and forms a significant part of many learning frameworks~\cite{CamachoM19,ltl-from-positive-only} (see Section~\ref{sec:passive-learning} for elaboration).
In our learning problem, we consider, as input, structures typical for describing branching behavior: Kripke structures (KSs) and concurrent game structures (CGSs), which model single and multi-agent systems, respectively.

%The structures that we consider here are
%As input to the problem, we consider standard models describing branching behavior--- 
%Kripke structures (KSs) and concurrent game structures (CGSs), which represent single and multi-agent systems, respectively.
%Given a sample $\sample$ of positive (or desirable) and negative (or undesirable) structures, the passive learning problem asks to infer a minimal CTL/ATL formula that is consistent with $\sample$, see 
%We formally introduce the problem in 
%Problem~\ref{prob:pass-learning} in Section~\ref{sec:passive-learning} .

%\paragraph{Our contributions} %for space reasons
To address the passive learning 
problem, we design algorithms for %both 
CTL and ATL that employ constraint-solving techniques to learn prospective formulas.
%Specifically, f
Following Neider and Gavran~\cite{flie}, our algorithms search for prospective formulas of size $n$ for a given sample $\sample$, by encoding the problem into the satisfiability of a propositional formula $\propformula^{\sample}_{n}$. 
This formula is then solved using an off-the-shelf SAT solver to obtain a CTL/ATL formula of size $n$ if one exists.
The crux of the SAT encoding %formula $\propformula^{\sample}_{n}$
lies in symbolically encoding both the structure of the formula and %, concurrently, encoding 
the standard fixed-point based model-checking for the symbolic formula.

To present the technical details of the encoding, we focus on passive learning of ATL formulas from CGSs, as ATL and CGSs generalize CTL and KSs for multi-agent settings, respectively. In particular, restricting the number of agents to one simply reduces ATL to CTL and CGSs to KSs.
Nonetheless, we highlight aspects of the learning algorithm that improve %when considering 
in the case of CTL passive learning% of CTL
.

%Alongside the learning algorithms, w
We also initiate the theoretical study of the ATL passive learning problem%for ATL
.
%We show that the problem of learning a CTL/ATL formula of size $n$ is in $\NP$ due to the polynomial size (in $n$ and $|\sample|$) of our encoding $\propformula$. 
%Moreover, w
We study the decision problem of whether there is an ATL formula consistent with $\sample$. We extend already existing results for CTL to the case of ATL: we show that the decision problem for full ATL can be solved in polynomial time (Thm.~\ref{thm:decide_polynomial_time}, extending  \cite[Thm. 3.2, 3.9]{DBLP:journals/tcs/BrowneCG88}). We also show that,
for any fragment of ATL, the same decision problem can be decided in exponential time (Thm.~\ref{thm:bound_size_formulas}, extending \cite[Thm. 3]{DBLP:journals/pacmpl/KrogmeierM23}). In the same theorem, we exhibit an exponential bound on the size of the formulae that need 
to be considered to find a consistent one, regardless of the fragment considered (parallelizing \cite[Coro. 1]{DBLP:journals/corr/abs-2402-06366}). %We give more details in Section~\ref{subsec:deciding_sep}.

We have implemented our learning algorithms in an open-source prototype that can access an array of SAT solvers.
We evaluate the prototype on synthetic benchmarks consisting of samples of KSs and CGSs that reflect typical branching-time properties.
We observed that our algorithms display the ability to learn formulas from samples of varying sizes.
Further, we demonstrated improvements to the SAT encoding for enhanced runtime performance.
%We present all of the experimental results in Section~\ref{sec:implem}% and conclude with a discussion in Section~\ref{}
%.

\subsubsection{Related Work}
As alluded to in the introduction, the majority of works in inferring temporal logics focus on linear-time logics.
For LTL, many works consider learning based on handcrafted templates or queries, which are incomplete formulas~\cite{LiDS11,ShahKSL18,KimMSAS19}.
Few others learn formulas of arbitrary syntactic structure in LTL (or its important fragments) either by exploiting constraint solving~\cite{flie,CamachoM19,Riener19} or efficient enumerative search~\cite{scarlet,DBLP:journals/corr/abs-2402-12373}.
Some recent works rely on neuro-symbolic approaches to learn LTL formulas from noisy data~\cite{DBLP:conf/aaai/LuoLDWPZ22,DBLP:conf/aaai/WanLDLYP24}. For STL, most works focus on learning formulas of particular syntactic structure~\cite{dtbombara,dtmethod} or searching time intervals for STL formulas of known structure~\cite{asarin,rpstl1,rpstl2}.
A handful of works consider learning STL formulas of arbitrary structure~\cite{MohammadinejadD20,genetic}. 
Apart from LTL and STL, there are works on learning several other logics such as Metric Temporal Logic~\cite{DBLP:conf/vmcai/RahaRFNP24}, Past LTL~\cite{ArifLERCT20}, Property Specification Language~\cite{0002FN20}, etc.

In contrast to linear-time logics, research on learning branching-time properties remains relatively sparse.
Chan~\cite{Chan00} considers the problem of completing CTL queries---incomplete CTL formulas with missing (Boolean combinations of) atomic propositions.
A related work by Wasylkowski and Zeller~\cite{WasylkowskiZ11} considers inferring operational preconditions for Java methods in CTL.
Both of these works are limited in their ability to search through large number of CTL formulas of arbitrary syntactic structure.
As a result, they resort to user-defined queries or handcrafted templates to reduce the search space of specifications.

A recent preprint by Pommellet et al.~\cite{DBLP:journals/corr/abs-2402-06366}, yet to be published, addresses the problem of learning CTL formulas from a sample of Kripke structures (KSs).
Their learning algorithm follows a SAT-based paradigm
and uses an encoding similar to ours.
% similar to ours, utilizing an encoding in propositional logic with a similar flavor.
Our encoding for CTL was developed independently~\cite{DBLP:journals/corr/abs-2310-13778}.
In this paper, we study the more general problem of learning ATL formulas from CGSs, which conceptually subsumes the problem of learning CTL formulas from KSs.
%Our work devises an encoding for learning ATL, studying the existence of prospective ATL formulas, and empirically evaluating the ATL learning algorithm.

Another work that devises a similar encoding is the one by Bertrand et al.~\cite{BertrandFS12}.
Their SMT encoding, albeit similar, is tailored towards solving a different problem of synthesizing small models for probabilistic CTL (PCTL) formulas.

%!TEX root = LearnATL.tex

\section{Preliminaries}
\label{sec:prelim}

We refer to the set of positive integers by $\N_1$. For $n \in \N_1$, we let $\brck{1,n} \subseteq \N_1$ denote the set $\{ 1,\ldots,n \}$. Furthermore, given any non-empty set $Q$, we let $Q^*,Q^+,Q^\omega$ denote the set of finite, non-empty finite, and infinite sequences of elements in $Q$, respectively. For all $\bullet \in \{ *,+,\omega \}$, $\rho \in Q^\bullet$, and $i \in \N$, if $\rho$ has at least $i+1$ elements, we let $\rho[i] \in Q$ denote the $(i+1)$-th element in $\rho$, $\rho[:i] \in Q^+$ denote the finite sequence $\rho_0 \cdots \rho_i \in Q^+$, and $\rho[i:] \in Q^\bullet$ denote the sequence $\rho_i \cdots \in Q^\bullet$.
%\begin{itemize}
%	\item $\rho[i] \in Q$ denote the $(i+1)$-th element in $\rho$;
%	\item $\rho[:i] \in Q^+$ denote the non-empty finite sequence $\rho_0 \cdots \rho_i \in Q^+$;
%	\item $\rho[i:] \in Q^\bullet$ denote the non-empty sequence $\rho_i \cdots \in Q^\bullet$.
%\end{itemize}

\subsection{Concurrent game structure (CGS) and Kripke structure}
We model multi-agent systems with concurrent game structures %(CGS)% that we define below.
defined below. 
\begin{definition}
	\label{def:CGS}
	A concurrent game structure (CGS for short) is a tuple $C = \langle Q,I,k,\prop, \pi, d,\delta \rangle$ where $Q$ is the finite set of states, $I \subseteq Q$ is the set of initial states, $k \in \N_1$ is the number of agents, we denote by $\Ag := \brck{1,k}$ the set of $k$ agents. Furthermore, $\prop$ is the finite set of propositions (or observations)% or what we call propositions
	, $\pi: Q\mapsto 2^{\prop}$ maps each state $q\in Q$ to the set of propositions $\pi(q) \subseteq \prop$ that hold in $q$. Finally, $d: Q \times \Ag \rightarrow \N_1$ maps each state and agent to the number of actions available to that agent at that state, and $\delta: Q_\msf{Act} \rightarrow Q$ is the function mapping every state and tuple of one action per agent to the next state, where $Q_\msf{Act} := \cup_{q \in Q} Q_\msf{Act}(q)$ with $Q_\msf{Act}(q) := \{ (q,\alpha_1,\ldots,\alpha_k) \mid \forall a \in \Ag,\; \alpha_a \in \brck{1,d(q,a)} \}$ representing the set of tuples of actions available to the players at state $q$. 
	\iffalse
	\begin{itemize}
		\item $Q$ is the finite set of states;
		\item $I \subseteq Q$ is the set of initial states;
		\item $k \in \N_1$ is the number of agents, we denote by $\Ag := \brck{1,k}$ the set of $k$ agents;
		\item $\prop$ is the finite set of observations or what we call propositions;
		\item $\pi: Q\mapsto 2^{\mathcal{P}}$ %is a function that 
		maps each state $q\in Q$ to the set of propositions $\mathcal{P}'\subseteq \mathcal{P}$ in $q$;
		%\item $\sigma: Q\mapsto \{1,\cdots k\}$ is a function that maps each state $s\in Q$ to a agent $a\in \{1,\cdots, k\}$ that is supposed to play in $s$;
		\item $d: Q \times \brck{1,k} \rightarrow \mathbb{N}^+$ maps each state and agent to the number of actions available to that agent at that state;
		\item $\delta: Q_\msf{Act} \rightarrow Q$ is the function mapping every state and tuple of one action per agent to the next state, where $Q_\msf{Act} := \cup_{q \in Q} Q_\msf{Act}(q)$ with $Q_\msf{Act}(q) := \{ (q,\alpha_1,\ldots,\alpha_i) \mid \forall a \in \Ag,\; \alpha_a \in \brck{1,d(q,a)} \}$. 
	\end{itemize}
	\fi
	
	For all $q \in Q$ and $A \subseteq \Ag$, we let $\msf{Act}_A(q) := \{ \alpha = (\alpha_a)_{a \in A} \in \prod_{a \in A} %\mid \forall a \in \Ag,\; \alpha_a \in 
	\{a\} \times \brck{1,d(q,a)} \}$. Then, for all tuple $\alpha% = (\alpha_a)_{a \in A} 
	\in \msf{Act}_A(q)$ of one action per agent in $A$, we let: 
	\begin{equation*}
	\msf{Succ}(q,\alpha) := \{ q' \in Q \mid \exists \alpha' %= (\alpha'_a)_{a \in \Ag \setminus A} 
	\in \msf{Act}_{\Ag \setminus A}(q),\; \delta(q,(\alpha,\alpha')) = q' \}
	\end{equation*}
	
	When $k = 1$, the %concurrent game structure 
	CGS $C$ is called a \emph{Kripke structure}. In that case, for all states $q \in Q$, we have the set $\msf{Succ}(q) \subseteq Q$ of the successor states of $Q$. 
	
	Finally, we define the size $|C|$ of the structure $C$ %. Then, the size $|C|$ of the structure $C$ is equal to:
	by $|C| = |Q_\msf{Act}| + |\prop| + |\Ag|$. 
\end{definition}
Unless otherwise stated, a CGS $C$ will always refer to the tuple $C = \langle Q,I,k,\prop, \pi, d,\delta \rangle$.

In a CGS, a strategy for an agent is a function that prescribes to the agent what to do as a function of the history of the game, i.e., the finite sequence of states seen so far. Moreover, given a coalition of agents and a tuple of one strategy per agent in the coalition, we define the set of infinite sequences of states that can occur with this tuple of strategies. Formally, we define this as follows.
\begin{definition}
	Consider a CGS $C$ and an agent $a \in \Ag$. A \emph{strategy} for Agent $a$ is a function $s_a: Q^+ \rightarrow \N_1$ such that, for all $\rho = \rho_0 \dots \rho_n \in Q^+$, we have $s_a(\rho) \leq d(\rho_n,a)$. We let $\msf{S}_a$ denote the set of strategies available to Agent $a% \in \brck{1,k}
	$.
	
	Given a coalition (or subset) of agents $A \subseteq \Ag$, a \emph{strategy profile} for the coalition $A$ is a tuple $s = (s_a)_{a \in A}$ of one strategy per agent in $A$. We denote by $\msf{S}_A$ the set of strategy profiles for the coalition $A$. %Given any such strategy profile 
	For all $s \in \msf{S}_A$ %, for all states 
	and $q \in Q$, we let $\msf{Out}^Q(q,s) \subseteq Q^\omega$ denote the set of infinite paths $\rho$ compatible with %the strategy profile 
	$s$ from $q$:
	\begin{equation*}
	\msf{Out}^Q(q,s) := \{ \rho \in Q^\omega \mid \rho[0] = q,\; \forall i \in \N, \rho[i+1] \in \msf{Succ}(\rho[i],(s_a(\rho[:i]))_{a \in A}) \}
	\end{equation*}
	%Then, we let:
	%\begin{equation*}
	%	\msf{Out}^\prop(q,s) := \{ \pi^\omega(\rho) \in (2^\prop)^\omega \mid \rho \in \msf{Out}^Q(q,s) \}
	%\end{equation*}
	%denote the set of infinite sequences of observations that are compatible with the strategy profile $s$ from $q$.
\end{definition}

\subsection{Alternating-time Temporal Logic}
Alternating-time Temporal Logic (ATL) is a temporal logic that takes into accounts strategic behavior of the agents. It can be seen as a generalization of Computation Tree Logic (CTL) with more than one agent. %Simiarly to Computation Tree Logic (CTL) formulas, we must introduce two types of formulas, state and path formulas. 
There are two different kinds of ATL formulas: state formulas ---  where propositions and strategic operators occur --- and path formulas -- where temporal operators occur. %Intuitively, state formulas express properties of states --- this is where  --- whereas path formulas express properties that can hold at different indices of paths --- that is where temporal operators appear. 
To avoid confusion, we denote state formulas and path formulas with Greek capital letters and Greek lowercase letters, respectively. ATL state formulas over a set of propositions $\prop$ and a set of agents $\Ag$ are given by the grammar:
\begin{equation*}
\Phi \Coloneqq p \mid \neg \Phi \mid \Phi \wedge \Phi \mid \fanBrOp{A} \varphi,
\end{equation*}
where $p\in \prop$ is a proposition, $A \subseteq \Ag$ is a coalition of agents and $\varphi$ is a path formula. We include the Boolean constants $\true$ and $\false$ and other operators such as $\Phi \lor \Phi_2$ and $\Phi_1\Rightarrow\Phi_2$.
Next, ATL path formulas are given by the grammar:
\begin{equation*}
\varphi \Coloneqq \lX \Phi \mid \Phi \lU \Phi \mid \lG \Phi
\end{equation*}
where $\lX$ is the neXt operator, $\lU$ is the Until operator, and $\lG$ is the Globally operator. As syntactic sugar, we allow standard temporal operators $\lF$, the Finally operator, which is defined in the usual manner: for any coalition of agents $A \subseteq \Ag$: $\fanBr{A} \lF\Phi\coloneqq \fanBrOp{A}(\true \lU \Phi)$. 

A CTL formula is an ATL formula on a single agent $\Ag = \{1\}$.
In particular, the path quantifiers of CTL can be obtained as follows: $\lE\ \!\equiv\fanBrOp{1}$ and $\lA\ \!\equiv\fanBrOp{}$.

The size $|\Phi|$ of an ATL formula $\Phi$ is then defined as size of the set of sub-formulas: $|\Phi| := |\msf{SubF}(\Phi)|$, which is defined inductively %We define the set of subformulas $\msf{SubF}(\Phi)$ of an ATL formula $\Phi$ 
as follows:
\begin{itemize}
	\item $\msf{SubF}(p) := \{ p \}$ for all $p \in \prop$;
	\item $\msf{SubF}(\neg \Phi) := \{ \neg \Phi \} \cup \msf{SubF}(\Phi)$;
	\item $\msf{SubF}(\Phi_1 \wedge \Phi_2) := \{ \Phi_1 \wedge \Phi_2 \} \cup \msf{SubF}(\Phi_1) \cup \msf{SubF}(\Phi_2)$;
	\item $\msf{SubF}(\fanBr{A} \bullet \Phi) := \{ \fanBr{A} \bullet \Phi \} \cup \msf{SubF}(\Phi)$ for $\bullet \in \{ \lX,\lG \}$ and $A \subseteq \Ag$;
	\item $\msf{SubF}(\fanBrOp{A} (\Phi_1 \lU \Phi_2)) := \{ \fanBrOp{A} (\Phi_1 \lU \Phi_2) \} \cup \msf{SubF}(\Phi_1) \cup \msf{SubF}(\Phi_2)$ for $A \subseteq \Ag$.
	%\item $\msf{SubF}(\fanBr{A} \lG \Phi) := \{ \fanBr{A} \lG \Phi \} \cup \msf{SubF}(\Phi)$.
\end{itemize}
%The size $|\Phi|$ of an ATL formula $\Phi$ is then defined as 
%its number of sub-formulas: $|\Phi| := |\msf{SubF}(\Phi)|$. 

We interpret ATL formulas over CGSs $C$ using the standard definitions~\cite{DBLP:journals/jacm/AlurHK02}.
Given a state $q \in Q$, we define when a state formula~$\Phi$ holds in state $q$ --- denoted by $q\models \Phi$ --- inductively as follows:
\begin{align*}
q \models p & \text{ iif } p \in \pi(q), \\
q \models \neg \Phi & \text{ iif } q \not\models \Phi, \\
q\models \Phi_1\wedge\Phi_2 & \text{ iif } q\models\Phi_1 \text{ and } q\models\Phi_2, \\
q\models \fanBrOp{A}\varphi & \text{ iif } \exists s \in \msf{S}_A,\; \forall \rho\in \msf{Out}^Q(q,s),\; \rho\models\varphi
\end{align*}
Similarly, given a path $\rho \in Q^\omega$ and a path formula $\varphi$, we define when $\varphi$ holds on path $\rho$, denoted $\rho \models \varphi$ as above, inductively as follows:
\begin{align*}
\rho\models \lX\Phi &\text{ iif } \rho[1:] \models \Phi \\
\rho\models \Phi_1 \lU \Phi_2 & \text{ iif } \exists j \in \N,\; \rho[j] \models \Phi_2 \text{, and }\forall k < j, \rho[k:]\models\Phi_1 \\
\rho \models \lG \Phi & \text{ iif } \forall j \in \N,\; \rho[j:] \models \Phi
\end{align*}
We say that an ATL formula $\Phi$ accepts (resp. rejects) a state $q$ if $q \models \Phi$ (resp. $q \not\models \Phi$). We say that it distinguishes two states $q,q'$ if it accepts one and rejects the other. Finally, we then say that the ATL formula $\Phi$ accepts a CGS $C$, denoted by $C\models \Phi$, if $\Phi$ accepts all initial states of $C$%$q\models\Phi$ for all initial states $q\in I$ of $C$. 
.

\begin{remark}
	When evaluated on turn-based game structures (i.e., where, at each state, at most one player has more than one action available), the formulas $\fanBrOp{A}\varphi$ and $\neg \fanBrOp{\Ag \setminus A}\neg \varphi$ are equivalent
	%but not on arbitrary CGSs.
	%.
	(However, it is not the case when they are evaluated on arbitrary CGSs.)
\end{remark}

\iffalse
\subsection{Observations}

The inference of ATL formulas relies on the observations of behavior of agents.
We formally represent behavior of agents as functions $O_a: Q\mapsto \{1,\cdots,d_a\}$.
For our inference, we collect observations of a set of agents $A$ in a set $O_A= \{O_a~|~ a\in A\}$.
In collect a number of such observations in a sample $\mathcal{S} = \{O^1_A, O^2_A,\cdots\}$.
\fi

%!TEX root = LearnATLarXiv.tex

\section{Passive Learning for ATL}\label{sec:passive-learning}

In this problem, we are given a sample $\sample=(P,N)$ consisting of a set $P$ of positive structures and a set $N$ of negative structures. 
The goal is to find a minimal formula $\Phi$ that is \emph{consistent} with $\sample$, i.e., $\Phi$ must hold on all positive structures and must not hold on any negative structure. 
We are specifically searching for a minimal formula and the reason for this is two-fold: (1) the prospective formula will be more interpretable, and (2) it will not overfit the sample~\cite{flie,0002FN20}. Formally:

%The size $|\sample|$ of the sample $\sample$ is equal to $|\sample| := \sum_{C \in P \cup N} |C|$.
%We are, howevr not looking for any consistent formula but for a minimal size one% formula separating the positive and negative instances
%An additional very common assumption in passive learning is that there is a bound on the size of the prospective formula. 
%We now present the formal problem statement as follows:
\begin{problem}[Passive learning of ATL]\label{prob:pass-learning}
	Given a sample $\sample=(P,N)$ consisting of two finite sets $P$ and $N$ of concurrent game structures (CGSs) with the same set $\Ag$ of agents, find a minimal size ATL formula $\Phi$ on $\Ag$ that is consistent with $\sample$.
\end{problem}
The passive learning problem of CTL can be obtained by simplifying Problem~\ref{prob:pass-learning} by use of a single agent $\Ag=\{1\}$, which reduces CGSs to KSs and ATL to CTL.
%Restricting the set of agents to one renders the passive learning problem for CTL.
%n that case, we simply 

Before describing our solution to Problem~\ref{prob:pass-learning}, we briefly discuss the source of the positive and negative structures.
Passive learning, among several applications, constitutes a critical subroutine of certain learning frameworks.
Active learning~\cite{DBLP:journals/iandc/Angluin87}, which involves learning black-box systems by interacting with a teacher, often involves repeated passive learning on the counter-example models, i.e., the feedback, received from the teacher~\cite{CamachoM19}.
Furthermore, one-class classification, or learning from positive examples, leverages passive learning to derive candidate formulas~\cite{AvellanedaP18,ltl-from-positive-only}.
These formulas facilitate the generation of negative examples, which help refine the search for more concise and descriptive formulas.

A concrete application of Problem~\ref{prob:pass-learning} would be to provide contrastive explanations~\cite{KimMSAS19} in a multi-agent setting.
Consider a multi-agent system that is deemed to have some ``good'' positions and some ``bad'' positions. 
This system would yield positive and negative CGSs corresponding to the good and bad positions.
To explain the dichotomy between these good and bad positions, one can learn an ATL formula that accepts the positive CGSs and rejects the negative ones.

\subsection{SAT-based Learning Algorithm}
\label{sec:sat-based}

Our approach to solving Problem~\ref{prob:pass-learning} is by reducing it to satisfiability problems in propositional logic. We thus provide a brief introduction to propositional logic.

\paragraph{Propositional Logic.} Let $\mathsf{Var}$ be a set of propositional variables that can be set to Boolean values from $\B = \{0,1\}$ (0 representing $\false$ and 1 representing $\true$).
Formulas in Propositional Logic are inductively constructed as follows:
\[
\propformula \Coloneqq \true \mid \false \mid x \in \mathsf{Var} \mid \lnot\propformula \mid  \propformula \lor \propformula \mid \propformula \wedge \propformula \mid \propformula \Rightarrow \propformula \mid \propformula \Leftrightarrow \propformula
\]

%As syntactic sugar, we allow the standard Boolean formulas $\true$, $\false$, $\propformula_1 \land \propformula_2$, $\propformula_1 \Rightarrow \propformula_2$, and $\propformula_1 \Leftrightarrow \propformula_2$.
%Note that t
To avoid confusion with ATL formulas% and propositional formulas
, we will be exclusively using the letter $\propformula$ (along with its variants) to denote propositional formulas. 

To assign values to propositional variables, we rely on a valuation function $v: \mathsf{Var} \to \mathbb{B}$. We exploit the valuation function $v$ to define the satisfaction $v \models \propformula$ of a propositional formula $\propformula$; we use standard definitions for this.
%The semantics of 
%We now define the semantics of 
%propositional logic using a satisfaction relation $\models$ that is defined inductively as follows: $v \models x$ if and only if
%$v(x) = 1$, $v \models \lnot\propformula$ if and only if $v \nvDash \propformula$, and $v \models \propformula_1 \lor \propformula_2$ if and only if $v \models \propformula_1$ or $v \models \propformula_2$.
When $v \models \propformula$, we say that $v$ satisfies $\propformula$ and call it a \emph{satisfying valuation} of $\propformula$.
A %propositional 
formula $\propformula$ is \emph{satisfiable} if there exists a satisfying valuation $v$ of $\Phi$.
%The \emph{size} of a formula is the number of its subformulas (as is standard).
The satisfiability (SAT) problem for propositional logic is a well-known $\mathsf{NP}$-complete problem, which asks if a propositional formula given as input is satisfiable. 
To handle SAT, numerous optimized decision procedures have been designed in recent years~\cite{MouraB08,AudemardS18,BarbosaBBKLMMMN22}.

We now describe a reduction of % solution to 
Problem~\ref{prob:pass-learning} to SAT%using satisfiability problems in propositional logic
, inspired by~\cite{flie,CamachoM19}.
Following their work, we design a propositional formula $\propformula^{\sample}_n$ that enables the search for an ATL formula of size at most $n$ that is consistent with a sample $\sample$. The formula $\propformula^{\sample}_n$ has the following properties:
\begin{enumerate}
	\item $\propformula^{\sample}_n$ is satisfiable if and only if there exists an ATL formula of size at most $n$ that is consistent with $\sample$; and
	\item from a satisfying valuation of $\propformula^{\sample}_n$, one can easily extract a suitable ATL formula of size at most $n$.
\end{enumerate}

%One can then design an iterative algorithm to search for a 

%One can then find a minimal, consistent ATL formula: 
One can then iteratively search for a minimal,  consistent formula: for increasing values of $n$, %check the satisfiability of 
if $\propformula^{\sample}_n$ is satisfiable, extract a consistent ATL formula.

%Based on the formula $\propformula^{\sample}_n$, one can design an iterative algorithm to search for a minimal, suitable ATL formula: for increasing values of $n$, check the satisfiability of $\propformula^{\sample}_n$ and if satisfiable, extract an ATL formula.

The formula $\propformula^{\sample}_n$ is defined as a conjunction of subformulas with distinct roles: $\propformula^{\sample}_n \coloneqq \propformula^{\mathsf{str}}_n \wedge \propformula^{\mathsf{sem}}_n \wedge \propformula^{\mathsf{con}}_n$.
The subformula $\propformula^{\mathsf{str}}$ encodes the structure of the prospective ATL formula $\Phi$.
The subformula $\propformula^{\mathsf{sem}}$ encodes that the correct semantics of ATL is used to interpret the prospective ATL formula on the given CGSs. Finally, the subformula $\propformula^{\mathsf{con}}$ ensures that the prospective ATL formula holds on the models in $P$ and not in the models in $N$.
The formula used for CTL learning has an identical high-level structure, with similar subformulas.

We now describe the subformulas of $\propformula^{\sample}_n$ in detail.

\paragraph{Encoding the structure of ATL formulas.}
The structure of an ATL formula is symbolically encoded as a \emph{syntax DAG}. A syntax DAG of an ATL formula is simply a syntax tree
in which the common nodes are merged. Figure~\ref{fig:syntax-dag} depicts an example of a syntax tree and %syntax 
DAG of an ATL formula.

\begin{figure}
	\centering
	\scalebox{0.9}{
	\centering
	%	\begin{subfigure}[b]{0.3\textwidth}
	%		\centering
	%		\begin{tikzpicture}
	%		\node (1) at (0, 0) {$\lor$};
	%		\node (2) at (.7, -0.9) {$\fanBr{1}\lU$};
	%		\node (3) at (-.7, -0.9) {$\fanBr{2}\lX$};
	%		\node (4) at (0, -1.8) {$p$};
	%		\node (5) at (1.4, -1.8) {$\fanBr{1,\!3}\lG$};
	%		\node (6) at (1.4, -2.7) {$q$};
	%		\node (7) at (-.7, -1.8) {$p$};
	%		\draw[->] (1) -- (2); 
	%		\draw[->] (1) -- (3);
	%		\draw[->] (2) -- (4);
	%		\draw[->] (2) -- (5);
	%		\draw[->] (3) -- (7);
	%		\draw[->] (5) -- (6);
	%		\end{tikzpicture}
	%		\caption{Syntax tree}
	%	\end{subfigure}
	%	\begin{subfigure}[b]{0.3\textwidth}
	\centering
	\begin{tikzpicture}
	\node (1) at (0, 0) {$\lor$};
	\node (2) at (.7, -0.9) {$\fanBr{1}\lU$};
	\node (3) at (-.7, -0.9) {$\fanBr{2}\lX$};
	\node (4) at (0, -1.8) {$p$};
	\node (5) at (1.4, -1.8) {$\fanBr{1,\!3}\lG$};
	\node (6) at (1.4, -2.7) {$q$};
	\draw[->] (1) -- (2); 
	\draw[->] (1) -- (3);
	\draw[->] (2) -- (4);
	\draw[->] (2) -- (5);
	\draw[->] (3) -- (4);
	\draw[->] (5) -- (6);
	\node[font=\scriptsize] at (0.2,0.2) {6};
	\node[font=\scriptsize] at (1.23,-.65) {4};
	\node[font=\scriptsize] at (-.17,-.67) {5};
	\node[font=\scriptsize] at (0,-1.5)  {3};
	\node[font=\scriptsize] at (2.1,-1.6) {2};
	\node[font=\scriptsize] at (1.6,-2.5) {1};
	\end{tikzpicture}}
	%	\end{subfigure}
	%	\begin{subfigure}[b]{0.3\textwidth}
	%		\centering
	%		\begin{tikzpicture}
	%		\node (1) at (0, 0) {6};
	%		\node (2) at (.7, -0.9) {4};
	%		\node (3) at (-.7, -0.9) {5};
	%		\node (4) at (0, -1.8) {3};
	%		\node (5) at (1.4, -1.8) {2};
	%		\node (6) at (1.4, -2.7) {1};
	%		\draw[->] (1) -- (2); 
	%		\draw[->] (1) -- (3);
	%		\draw[->] (2) -- (4);
	%		\draw[->] (2) -- (5);
	%		\draw[->] (3) -- (4);
	%		\draw[->] (5) -- (6);
	%		\end{tikzpicture}
	%		\caption{Identifiers}
	%	\end{subfigure}
	\caption{Syntax DAG with identifiers (indicated above nodes) of $\fanBr{2}\lX p \lor \fanBrOp{1}(p \lU \fanBr{1,\!3}\lG q )$.}
	\label{fig:syntax-dag}
\end{figure}

To conveniently encode the syntax DAG of an ATL formula, we first fix a naming convention for its nodes. For a formula of size at most $n$, we assign to each of its nodes an identifier in $\brck{1,n}$ such that the identifier of each node is larger than those of its children, if applicable. Note that such a naming convention may not be unique.
 %Based on these identifiers, 
We then denote the sub-formula of $\Phi$ rooted at Node~$i$ as $\Phi[i]$. Thus, $\Phi[n]$ denotes the entire formula $\Phi$.

Next, to encode a syntax DAG symbolically, we introduce the following propositional variables: (i) $x_{i,\lambda}$ for $i\in\brck{1,n}$ and $\lambda\in \prop\cup \Lambda$ with $\Lambda := \{\neg,\wedge,\fanBr{\cdot}\lX,\fanBr{\cdot}\lG,\fanBr{\cdot}\lU \}$; (ii) $A_{i,a}$ for $i\in\brck{1,n}$ and $a\in\Ag$; and (iii) $l_{i,j}$ and $r_{i,j}$ for $i\in\brck{1,n}$ and $j\in\brck{1,i-1}$. The variable $x_{i,\lambda}$ tracks the operator labeled in Node~$i$, meaning, $x_{i,\lambda}$ is set to true if and only if Node~$i$ is labeled with $\lambda$. 
The variable $A_{i,a}$ is relevant only if $x_{i,\fanBrOp{\cdot}\bullet}$ is set to true for some temporal operator $\bullet \in \{\lX,\lG,\lU\}$. In such a case, the variables $(A_{i,a})_{a \in \Ag}$ track which agents are in the coalition $\fanBrOp{\cdot}$ at Node $i$. 
The variable $l_{i,j}$ (resp., $r_{i,j}$) tracks the left (resp., right) child of Node~$i$, meaning, $l_{i,j}$ (resp., $r_{i,j}$) is set to true if and only if the left (resp., right) child of Node~$i$ is Node~$j$. 
%For unary operators (in $\Lambda_{\mathsf{Un}}$), we only consider the left child.%Not necessary constraint

%Finally, note that the variable $\mathsf{rel}_{i}$ is set to true whenever there is no $j < i$ such that $x_{j,\bot}$ is true. Informally, the indices $i$ for which $x_{i,\bot}$ is true  are irrelevant in the sense that the node is not part of the ATL. The reason for adding $\bot$ comes from the fact that we do not know a priori how many $\fanBr{\cdot}$ operators, of size $|\Ag|$, will be used. Hence, since we want to only consider formulas of size (at most) $n$, potentially some nodes should not appear in the formula, as to make it too large. We discuss this further below.

To ensure that these variables encode a valid syntax DAG, we impose structural constraints similar to the ones proposed by Neider and Gavran~\cite{flie}.
For instance, the constraint below ensures that each node is labeled with a unique operator:
\[\Big[ \bigwedge_{i\in\brck{1,n}} \bigvee_{\lambda \in \Lambda} x_{i,\lambda} \Big] \land \Big[\bigwedge_{i\in\brck{1,n}} \bigwedge_{\lambda \neq \lambda' \in \Lambda} \lnot x_{i,\lambda} \lor \lnot x_{i,\lambda'}  \Big]\]
We impose similar constraints to ensure that each node has a unique left and right child.
We detail all the structural constraints in Appendix~\ref{appen:structural_cond}.
The formula $\Omega^{\mathsf{str}}_n$ is simply the conjunction of all such structural constraints.

Based on a satisfying valuation $v$ of $\Omega^{\mathsf{str}}_n$, one can construct a unique ATL formula: label each Node~$i$ with the operator $\lambda$ for which $v(x_{i,\lambda})=1$, include players in a coalition for which $v(A_{i,a})=1$, and mark the left (resp., right) child with Nodes $j$ (resp., $j'$) for which $v(l_{i,j})=1$ (resp., $v(r_{i,j'})=1$).

For encoding the structure of (the existential fragment of) CTL, the only difference from that of ATL is that the set of operators is $\Lambda\coloneqq \{\neg,\wedge,\lE\lX,\lE\lG,\lE\lU \}$.

%These constraints are expressed with the logical formulas $\Omega_\Lambda$ (for the variables $x_{i,\lambda}$ with $\lambda \in \Lambda$), $\Omega_l$ (for the variables $l_{i,j}$), $\Omega_r$ (for the variables $r_{i,j}$), and $\Omega_\prop$ (for the variables $x_{1,p}$ with $p \in \prop$). Due to a lack of space, their definitions, along with a brief explanation of how to construct the unique ATL formula that a satisfying valuation of the variables induce,  postponed to Appendix~\ref{appen:structural_cond}. 

%The formula $\Omega^{\mathsf{str}}$ is then simply defined as the conjunction of all these formulas: $\Omega^{str}_n := \Omega_\Lambda \wedge \Omega_l \wedge \Omega_r \wedge \Omega_\prop$.

\paragraph{Encoding the semantics of ATL formulas.}
To symbolically encode the semantics of the prospective ATL formula $\Phi$ for a given CGS $C$, we rely on encoding the ATL model-checking procedure developed in~\cite[Section 4]{DBLP:journals/jacm/AlurHK02}. The procedure involves calculating, for each sub-formula $\Phi'$ of $\Phi$, the set $\Sat(\Phi') = \{q\in Q \mid q\models \Phi'\}$ of the states of $C$ where $\Phi'$ holds. Since we consider ATL formulas, we need to handle the strategic operators $\fanBrOp{A}$ for $A \subseteq \Ag$. To do so, given a coalition of agents $A \subseteq \Ag$ and some subset of states $S \subseteq Q$, we let $\mathsf{Pre}_A(S) \subseteq Q$ denote the set of states from which the coalition $A$ has a strategy to enforce reaching the set $S$ in one step. That is:
$\mathsf{Pre}_A(S) := \{ q \in Q \mid \exists \alpha \in \mathsf{Act}_A(q),\; \mathsf{Succ}(q,\alpha) \subseteq S \}$.

%The computation of $\Sat(\Phi')$ is done recursively on the structure of $\Phi$ based on fixed-point computations.
%For a given concurrent game structure $C$, t
%
We can now describe how to compute the set $\Sat(\Phi)$. It is done inductively on the structure of the ATL formula $\Phi$ as follows:% for the different possible ATL formulas are stated as follows:
\begin{align}
&\Sat(p) = \{ q \in Q~|~p \in \pi(q) \}, \text{ for any }p\in \prop,\\
&\Sat(\Phi \wedge \Psi) = \Sat(\Phi) \cap \Sat(\Psi),\\
&\Sat(\neg\Phi) = Q \setminus \Sat(\Phi),\\
&\Sat(\fanBr{A}\lX\Phi) = \mathsf{Pre}_A(\Sat(\Phi)),\\
&\Sat(\fanBrOp{A}(\Phi_1 \lU \Phi_2)) \text{ is the smallest }T \subseteq Q, \text{ such that }\nonumber\\
&\hspace{1cm}(1)\ \Sat(\Phi_2) \subseteq T\text{ and }
(2)\ \Sat(\Phi_1)\cap\mathsf{Pre}_A(T)\subseteq T
%\ q \in \Sat(\Phi)\text{ and }q \in \mathsf{Pre}_A(T)\text{ implies }q \in T
,\label{eq:sat-EU}\\
&\Sat(\fanBr{A}\lG\Phi)\text{ is the largest }T \subseteq Q,\text{ such that }\nonumber\\
&\hspace{1cm}(1)\ T \subseteq \Sat(\Phi)\text{ and }(2)\ %\ q \in T\text{ implies }q \in \mathsf{Pre}_A(T)\label{eq:sat-EG}
T\subseteq\mathsf{Pre}_A(T)\label{eq:sat-EG}
\end{align}

Our goal is to symbolically encode the above-described computation. To do so, %intuitively, our approach is to 
we introduce the propositional variables $y^{C}_{i,q}$ for each $i\in\brck{1,n}$, $q\in Q$ that track whether a state $q \in Q$ %of a concurrent game structure $C$ 
belongs to $\Sat(\Phi[i])$ for a sub-formula $\Phi[i]$ of $\Phi$%Formally, we introduce propositional variables $y^{C}_{i,q}$ for each $i\in\brck{1,n}$, $q\in \{1,\ldots,|Q|\}$% and $C\in P\cup N$
. That is, 
%The variable $y^{C}_{i,q}$ tracks whether the sub-formula $\Phi[i]$
%of $\Phi$ holds in state $q$, meaning, 
$y^{C}_{i,q}$ is set to true if and only if $\Phi[i]$ holds in state $q$, i.e. $q\in\Sat(\Phi[i])$.

Before defining the propositional formulas that ensure the desired meaning of the variables $y^{C}_{i,q}$, we introduce formulas to keep track of whether a state belongs to the set $\mathsf{Pre}_A(S)$. Formally, for all $q \in Q$ and $i \in \brck{1,n}$, given any predicate $y_S: Q \Rightarrow \B$ (encoding a set $S = (y_S)^{-1}[true] \subseteq Q$), we define the formula $\Omega_{q,i}^{\mathsf{Pre}}(y_S)$
%we have used the formula $\Omega_{q,i}^{\mathsf{Pre}}(y)$ %$\mathsf{Pre}_i(y)$ 
%that we have not yet defined. It is 
%defined for any predicate $y: Q \Rightarrow \B$. It
that encodes the fact that $q \in \mathsf{Pre}_{A_i}(S)$, %i.e. The fact that $q \in \mathsf{Pre}_i(y)$ means 
where $A_i$ is the coalition of agents %$A_i$, d
defined by the variables $(A_{i,a})_{a \in \Ag}$%, has a strategy to ensure, from $q$, to reach in one step a state satisfying the predicate $y$
. This formula is defined as follows: %we express the fact that $q \in \mathsf{Pre}_i(y)$ below:
\begin{equation*}
	\Omega_{q,i}^{\mathsf{Pre},C}(y_S) := \bigvee_{\alpha \in \mathsf{Act}_\Ag(q)} \bigwedge_{\alpha' \in \mathsf{Act}_\Ag(q)} \big[(\bigwedge_{a \in \Ag} A_{i,a} \Rightarrow (\alpha_a = \alpha'_a)) \Rightarrow y_S(\delta(q,\alpha')) \big]
\end{equation*}

This formula can be informally read as follows: there exists an action tuple for the coalition $A$ (the disjunction), such that for all action tuples for the opposing coalition (the conjunction), the corresponding state is in the set $S$ (indicated by $y_S(\delta(q,\alpha'))$). Since we do not know, a priori, what the coalition $A_i$ is, %, since it is defined by propositional variable, 
we quantify over action tuples for all the agents both in the disjunction and the conjunction. However, the rightmost implication of the formula ensures that the only relevant tuples of actions $\alpha' \in \mathsf{Act}_\Ag(q)$ are those for which the action for the agents in the coalition $A_i$ are given by the tuple of actions $\alpha \in \mathsf{Act}_\Ag(q)$.

\iffalse
Informally, the disjunction quantifies over action tuples at $q$ where the only relevant actions are those of the coalition $A$. %(Note that the action taken by any agent not in the coalition is irrelevant in the formula.)
Then, the  conjunction quantifies over action tuples at $q$ that stands for the response of the agents not in the coalition. The implication can be read as follows: if the later tuple of actions coincide with the former on the actions chosen by the coalition%(i.e. if it indeed corresponds to the actions chosen by the coalition corresponds to a response)
, then %it should be that 
the corresponding state $\delta(q,\alpha')$ satisfies the predicate $y$, i.e. that $y_{\delta(q,\alpha')}$ holds.
%The second conjunction checks that only those tuple of actions for which what coalition plays match what is prescribed by $\alpha$ is considered. Finally, we check that indeed the variable $y_i^C$ is satisfied in the corresponding successor state.
\fi

We define the formulas ensuring the intended meaning of the variables $y_{i,q}$%, we consider the following formulas:
.
\begin{align*}
\Omega^{\mathsf{sem}}_{\prop,C} & := 
\bigwedge_{p \in \prop} \bigwedge_{i \in \brck{1,n}} \Big[ x_{i,p} \Rightarrow \bigwedge_{q\in Q,\; p\in \pi(q)} y^{C}_{i,q} \; \wedge \bigwedge_{q\in Q,\; p\notin \pi(q)} \neg y^{C}_{i,q} \Big]%\label{eq:prop}
\\
\Omega^{\mathsf{sem}}_{\wedge,C} & := 
\bigwedge_{\substack{{i \in \brck{1,n}}\\{j,j' \in \brck{1,i-1}}}} \Big[ [x_{i,\wedge}\wedge l_{i,j}\wedge r_{i,j'}] \Rightarrow \bigwedge_{q\in Q}\big[ y^{C}_{i,q} \Leftrightarrow [y^{C}_{j,q} \wedge y^{C}_{j',q}]\big]\Big]%\label{eq:and}
\\
\Omega^{\mathsf{sem}}_{\neg,C} & := 
\bigwedge_{\substack{{i \in \brck{1,n}}\\{j \in \brck{1,i-1}}}} \Big[ [x_{i,\neg}\wedge l_{i,j}] \Rightarrow \bigwedge_{q\in Q}\big[ y^{C}_{i,q} \Leftrightarrow \neg y^{C}_{j,q}\big]\Big]%\label{eq:not}
\\
\Omega^{\mathsf{sem}}_{\lX,C} & := 
\bigwedge_{\substack{{i \in \brck{1,n}}\\{j \in \brck{1,i-1}}}} \Big[ [x_{i,\fanBr{\cdot}\lX}\wedge l_{i,j}] \Rightarrow \bigwedge_{q\in Q}\big[ y^{C}_{i,q} \Leftrightarrow \Omega_{q,i}^{\mathsf{Pre},C}(y^C_{j,\cdot}) %q \in \mathsf{Pre}_i(y^C_j) 
\big]\Big]%\label{eq:EX}
\end{align*}
The above formulas encode, via a straightforward translation, the $\Sat$ computation for the propositions, Boolean operators and the $\fanBr{\cdot} \lX$ operator. 

The case of the $\fanBr{\cdot}\lU$ and $\fanBr{\cdot}\lG$ operators require some innovation.
Indeed, as can be seen in Equations~(\ref{eq:sat-EU}) and~(\ref{eq:sat-EG}), the $\Sat$ involves a 
%This is because they require the computation of the 
least and greatest fixed-point, respectively%, as can be seen from in Equations~(\ref{eq:sat-EU}) and~(\ref{eq:sat-EG})
.
To circumvent this difficulty, we mimic the steps of the fixed-point computation algorithm~\cite[Fig. 3]{DBLP:journals/jacm/AlurHK02} in propositional logic. Let us recall how they are computed. %Fix a concurrent game structure $C$ and a coalition of agents $A \subseteq \Ag$. 
Given an ATL formula $\Phi = \fanBrOp{A} \Phi_1 \lU \Phi_2$, the way $\Sat(\Phi)$ is computed from $\Sat(\Phi_1)$ and $\Sat(\Phi_2)$ is described in Algorithm~\ref{alg:algo_U}. Similarly, given an ATL formula $\Phi = \fanBr{A} \lG \Phi'$, the way $\Sat(\Phi)$ is computed from $\Sat(\Phi')$ is described in Algorithm~\ref{alg:algo_G}. Note that, 
instead of using \emph{while} loops, as in ~\cite[Fig. 3]{DBLP:journals/jacm/AlurHK02}, that are necessarily exited after at most $|Q|$ steps, we use \emph{for} loops. %This is possible thanks to the fact that the \emph{while} loops used is exited after at most $|Q|$ steps. Furthermore, this allows us to translate directly into our $\SAT$ encoding the computation of the \emph{for} loops.%we use both times a \emph{for} loop instead of \emph{while} loop to be closer to our $\SAT$ encoding.%,  in both cases, the \emph{while} loops are exited after at most $|Q|$ steps. 

\begin{algorithm}[t]
	\caption{Compute $\Sat(\Phi)$ for $\Phi = \fanBr{A} \; \Phi_1 \lU \Phi_2$
	a}\label{alg:algo_U}
	\textbf{Input}: CGS $C$, coalition $A$, $\Sat(\Phi_1)$ and $\Sat(\Phi_2)$
	\begin{algorithmic}[1]
		\State $S\coloneqq\Sat(\Phi_2)$
		\For{$1 \leq k \leq |Q|$}
		%\STATE 
		%\STATE $\mathsf{change}\coloneqq true$
		%\WHILE{$\mathsf{change}$}
		%\STATE $\mathsf{change} := false$
		%\STATE $\mathsf{New} \coloneqq \{ q \in \Sat(\Phi_1) \setminus S \mid q \in \mathsf{Pre}_A(S) \}$
		\State $S \leftarrow S \cup \{ q \in \Sat(\Phi_1) \cap \mathsf{Pre}_A(S) \}$
		%\IF{$\mathsf{New} \neq \emptyset$}
		%\STATE $\mathsf{change} := true$
		%\ENDIF
		\EndFor
		\Return $S$
	\end{algorithmic}
\end{algorithm}
\begin{algorithm}[t]
	\caption{Compute $\Sat(\Phi)$ for $\Phi = \fanBr{A} \; \lG \Phi'$}\label{alg:algo_G}
	\textbf{Input}: CGS $C$, coalition $A$ and $\Sat(\Phi')$
	\begin{algorithmic}[1]
		\State $S \coloneqq \Sat(\Phi')$
		%\STATE $\mathsf{change}\coloneqq true$
		%\WHILE{$\mathsf{change}$}
		\For{$1 \leq k \leq |Q|$}
		%\STATE $\mathsf{change} := false$
		%\STATE $\mathsf{Old} \coloneqq \{ q \in S \mid q \notin \mathsf{Pre}_A(S) \}$
		\State $S \leftarrow S \cap \mathsf{Pre}_A(S)$
		%\IF{$\mathsf{Old} \neq \emptyset$}
		%\STATE $\mathsf{change} := true$
		%\ENDIF
		\EndFor
		\Return $S$
	\end{algorithmic}
\end{algorithm}

As can be seen in both algorithms% for the operators $\fanBr{\cdot}\lU$ and $\fanBr{\cdot}\lG$
, the fixed-point computation algorithm internally maintains an estimate of the $\Sat$ set and updates it iteratively. Thus, to encode the fixed-point computation, we introduce propositional variables that encode whether a state $q$ of a CGS $C$ belongs to a particular estimate of $\Sat$. Formally, we introduce propositional variables $y^{C}_{i,q,k}$ for each $i\in\brck{1,n}$, $q\in A$, and $k\in\brck{0,|Q|}$%, and $C\in P\cup N$ 
%that are responsible for encoding encode estimates of $\Sat$
, where the parameter $k\in\brck{0,|Q|}$ %in the variable $y^{C}_{i,q,k}$ 
tracks which iterative step of the fixed-point computation the variable $y^{C}_{i,q,k}$ encodes. %Also, as the fixed-point algorithms need at most $|Q|+1$ iterative steps, $l$ ranges from $1$ to $|Q|+1$.
We define the formulas below %on 
to ensure the intended meaning of these introduced variables:
\begin{align*}
\Omega^{\mathsf{sem}}_{\lU,C} &:= 
\bigwedge_{\substack{{i \in \brck{1,n}}\\{j,j' \in \brck{1,n}}}} [x_{i,\fanBr{\cdot}\lU}\wedge l_{i,j}\wedge r_{i,j'}]  \Rightarrow 
\bigwedge_{q\in Q}\Big[
\big[y^{C}_{i,q,0} \Leftrightarrow y^{C}_{j',q}\big] \wedge \\
&\hspace{9mm}\bigwedge_{0\leq k\leq |Q|-1}
\big[
y^{C}_{i,q,k+1} \Leftrightarrow [y^{C}_{i,q,k} \vee [y^{C}_{j,q} \wedge \Omega_{q,i}^{\mathsf{Pre}}(y^C_{i,\cdot,k}) ]]
\big] \wedge \; [y^{C}_{i,q} \Leftrightarrow y^{C}_{i,q,|Q|}] \Big] \\ 
\Omega^{\mathsf{sem}}_{\lG,C} &:= 
\bigwedge_{\substack{{i \in \brck{1,n}}\\{j \in \brck{1,i-1}}}}  [x_{i,\fanBr{\cdot}\lG}\wedge l_{i,j}] \Rightarrow
\bigwedge_{q\in Q}\Big[
\big[y^{C}_{i,q,0} \Leftrightarrow y^{C}_{j,q}\big] \wedge \\ 
&\hspace{2cm}\bigwedge_{0\leq k\leq |Q|-1}
\big[
y^{C}_{i,q,k+1} \Leftrightarrow [y^{C}_{i,q,k} \wedge \Omega_{q,i}^{\mathsf{Pre}}(y^C_{i,\cdot,k}) %q \in \mathsf{Pre}_i(y^C_{i,\cdot,k})
] \big] \wedge \; [y^{C}_{i,q} \Leftrightarrow y^{C}_{i,q,|Q|}] \Big]
\end{align*}
\iffalse
with
\begin{align*}
\Omega^{\mathsf{sem}}_{\lU,C,i,j,j'} := \bigwedge_{q\in Q}\Big[
\big[y^{C}_{i,q,0} \Leftrightarrow y^{C}_{j',q}\big] & \wedge 
\bigwedge_{0\leq k\leq |Q|-1}
\big[
y^{C}_{i,q,k+1} \Leftrightarrow [y^{C}_{i,q,k} \vee [y^{C}_{j,q} \wedge \Omega_{q,i}^{\mathsf{Pre}}(y^C_{i,\cdot,k})% q \in \mathsf{Pre}_i(y^C_{i,\cdot,k})
]]
\big] %\wedge [y^{C}_{i,q} \Leftrightarrow y^{C}_{i,q,|Q|+1}]
\\
&\wedge \; [y^{C}_{i,q} \Leftrightarrow y^{C}_{i,q,|Q|}] \Big]%\label{eq:EU}\nonumber
\\
\Omega^{\mathsf{sem}}_{\lG,C,i,j} := \bigwedge_{q\in Q}\Big[
\big[y^{C}_{i,q,0} \Leftrightarrow y^{C}_{j,q}\big] & \wedge 
\bigwedge_{0\leq k\leq |Q|-1}
\big[
y^{C}_{i,q,k+1} \Leftrightarrow [y^{C}_{i,q,k} \wedge \Omega_{q,i}^{\mathsf{Pre}}(y^C_{i,\cdot,k}) %q \in \mathsf{Pre}_i(y^C_{i,\cdot,k})
] \big] \\
& \wedge \; [y^{C}_{i,q} \Leftrightarrow y^{C}_{i,q,|Q|}] \Big]%\label{eq:EG}\nonumber
\end{align*}
\fi 
The formula $\Omega^{\mathsf{sem}}_n$ is simply defined as the conjunction of all the %semantics constraints defined 
formulas above. 
%$\Omega^{\mathsf{sem}}_{n} := \wedge_{C \in P \cup N} \Omega^{\mathsf{sem}}_{n,C}$ where, for all $C \in P \cup N$,  $\Omega^{\mathsf{sem}}_{n,C} := \Omega^{\mathsf{sem}}_{\prop,C} \wedge \Omega^{\mathsf{sem}}_{\wedge,C} \wedge  \Omega^{\mathsf{sem}}_{\neg,C} \wedge \Omega^{\mathsf{sem}}_{\lX} \wedge \Omega^{\mathsf{sem}}_{\lU,C} \wedge \Omega^{\mathsf{sem}}_{\lG,C}$.

For the semantics of CTL, the main difference is that the formula $\Omega_{q,i}^{\mathsf{Pre}}(y_S)$ encoding the fact that $q\in\mathsf{Pre}(S)$ can be greatly simplified (for the quantifier $\lE$\ \!):
\begin{equation*}
 \hat{\Omega}_{q,i}^{\mathsf{Pre}}(y_S) := \bigvee_{q' \in \mathsf{Succ}(q)} y_S(q')
\end{equation*}

\paragraph{Encoding the consistency with the models}
Finally, to encode that the prospective formula is consistent with $\sample$, we have the following formula:
\begin{equation*}
	\Omega^{\mathsf{con}}_n := \big[\bigwedge_{C\in P}\bigwedge_{s\in I} y^{C}_{n,s}\big] \wedge \big[\bigwedge_{C\in N}\bigvee_{s\in I} \neg y^{C}_{n,s}\big]
\end{equation*}
%which encodes that the prospective formula must hold in all of the initial states for the models in $P$ and must not hold in some initial state for the models in $N$.

The size of the formula $\Omega^{\sample}_n$, and the number of variables involved in it, is polynomial in $n$ and the size of $\sample$, $|\sample|\coloneqq\sum_{C \in P \cup N} |C|$. To express this formally, let us first introduce notations for various quantities that will occur when expressing the size of the different formulas involved.
\begin{definition}
	Consider a sample $S = (P,N)$ of CGS. We let: $x := |\Lambda| = 5 + |\prop|$, $k := |\Ag|$, $r = \max_{q \in Q} |Q_{\mathsf{Act}}(q)|$, for all $C \in P \cup N$, $m_C := |Q_C|$ for $Q_C$ the of states of the structure $C$ and $m := |Q|$ for $Q := \cup_{C \in P \cup N} Q_C$. 
\end{definition} 

%In terms of number of variables, we have:
%\begin{itemize}
%	\item $|X_n^{\mathsf{Var}}| = O(n \cdot x + n \cdot k + n^2)$
%	\item For all $C \in P \cup N$, $|X_C^{\mathsf{Struct}}| = O(n \cdot m^2)$
%\end{itemize}

Now, by looking how the different formulas are defined, we obtain the following (asymptotic) bounds:
\begin{itemize}
	\item $|\Omega_{\Lambda}| = O(n \cdot x^2)$;
	\item $|\Omega_{l}| = O(n^3)$;
	\item $|\Omega_{r}| = O(n^3)$;
	\item $|\Omega_{\prop}| = O(x)$.
\end{itemize}
Hence, we obtain $|\Omega_{n}^{str}| = O(n^3 + n^2 \cdot x)$. In addition, for all $C \in P \cup N$, we have:
\begin{itemize}
	\item $|\Omega_{\prop,C}^{\mathsf{sem}}| = O(x \cdot n \cdot m_C)$;
	\item $|\Omega_{\wedge,C}^{\mathsf{sem}}| = O(n^3 \cdot m_C)$;
	\item $|\Omega_{\neg,C}^{\mathsf{sem}}| = O(n^2 \cdot m_C)$;
	\item For all states $q \in Q$ and $i \leq n$, we have $|\Omega_{q,i}^{\mathsf{Pre},C}(y)| = O(r^2 \cdot k)$;
	\item $|\Omega_{\lX,C}^{\mathsf{sem}}| = O(n^2 \cdot m_C \cdot r^2 \cdot k)$;
	\item $|\Omega_{\lU,C}^{\mathsf{sem}}| = O(n^3 \cdot m_C^2 \cdot r^2 \cdot k)$;
	\item $|\Omega_{\lG,C}^{\mathsf{sem}}| = O(n^2 \cdot m_C^2 \cdot r^2 \cdot k)$.
\end{itemize}
Overall, we obtain $|\Omega_{n}^{\mathsf{sem}}| = O(n^3 \cdot m^2 \cdot r^2 \cdot k + x \cdot n \cdot m)$. Finally, we have $|\Omega_{n}^{\mathsf{con}}| = O(m)$. We do obtain that the size of $\Omega_n^{\mathsf{S}}$ is (asymptotically) polynomial in $|S|$ and $n$.

Furthermore, we have the lemma below establishing the correctness of the encoding. A detailed proof of this proposition can be found in Appendix~\ref{subsec:appen_correctness}.
\begin{proposition}
	\label{prop:pass-learning-correctness}
	Let $\sample=(P,N)$ be a sample and $n \in \mathbb{N}\setminus \{0\}$
	%and $\Omega^{\sample}_n$ be the propositional formula from Equation~(\ref{eqn:formula})
	. Then:
	\begin{enumerate}
		\item If an ATL formula of size at most $n$ consistent with $\sample$ exists, then the propositional formula $\Omega^{\sample}_n$ is satisfiable.
		\item If a valuation $v$ is such that $v\models\propformula^{\sample}_n$, then there is an ATL formula $\Phi^v$ of size at most $n$ that is consistent with $\sample$.
	\end{enumerate}
\end{proposition}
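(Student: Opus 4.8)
The plan is to prove both implications through a single semantic bridge: for each structure $C \in P \cup N$, each node identifier $i$, and each state $q$, I relate the Boolean value assigned to $y^C_{i,q}$ to the membership $q \in \Sat(\Phi[i])$, where $\Phi$ is the ATL formula described by the structural variables. Two auxiliary facts would be established first. (i) A \emph{$\mathsf{Pre}$-correctness} claim: for any valuation $v$, any predicate $y_S \colon Q \to \B$ encoding a set $S \subseteq Q$, and any node $i$ whose variables $(A_{i,a})_{a \in \Ag}$ describe a coalition $A$, one has $v \models \Omega^{\mathsf{Pre},C}_{q,i}(y_S)$ iff $q \in \mathsf{Pre}_A(S)$. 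This follows by unfolding the quantifiers: the outer disjunction commits a full profile $\alpha$ whose restriction to $A$ is the coalition's move, the inner conjunction ranges over all profiles $\alpha'$, and the guard $\bigwedge_{a}(A_{i,a} \Rightarrow \alpha_a = \alpha'_a)$ retains exactly the $\alpha'$ agreeing with $\alpha$ on $A$ --- i.e. the admissible responses of $\Ag \setminus A$ --- so the formula expresses precisely ``$\exists$ move of $A$, $\forall$ response, $\delta(q,\cdot) \in S$''. (ii) A \emph{convergence} claim for the two fixed-point algorithms: the iterates $S_0 \subseteq S_1 \subseteq \cdots$ of the $\lU$-algorithm form a monotone increasing chain of subsets of $Q$, and the iterates $S_0 \supseteq S_1 \supseteq \cdots$ of the $\lG$-algorithm a monotone decreasing one, so both stabilize after at most $|Q|$ steps, and the stable value is the least (resp. greatest) fixed point defining $\Sat(\fanBrOp{A}(\Phi_1 \lU \Phi_2))$ (resp. $\Sat(\fanBr{A}\lG\Phi')$) in Equations~(\ref{eq:sat-EU})--(\ref{eq:sat-EG}).

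For the first implication, given a consistent $\Phi$ of size at most $n$, I would build a valuation $v$ as follows: set the structural variables $x_{i,\lambda}, A_{i,a}, l_{i,j}, r_{i,j}$ according to a syntax DAG of $\Phi$ under the naming convention; set $y^C_{i,q} := 1$ iff $q \in \Sat(\Phi[i])$; and set each auxiliary variable $y^C_{i,q,k}$ to record whether $q$ lies in the $k$-th iterate of the relevant algorithm at node $i$. That $v \models \Omega^{\mathsf{str}}_n$ is immediate from the naming convention. Each conjunct of $\Omega^{\mathsf{sem}}_n$ is then checked against the defining equation for $\Sat$: the propositional, Boolean and $\lX$ conjuncts reduce to the recurrences directly (using the $\mathsf{Pre}$-claim for $\lX$), while the $\lU$ and $\lG$ conjuncts use the recurrences for the intermediate iterates together with the convergence claim to justify $y^C_{i,q} \Leftrightarrow y^C_{i,q,|Q|}$. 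Finally $v \models \Omega^{\mathsf{con}}_n$ holds because $\Phi$ accepts every $C \in P$ (all initial states satisfy $\Phi$, so $y^C_{n,s} = 1$) and rejects every $C \in N$ (some initial state fails, giving $\neg y^C_{n,s}$).

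For the converse, given $v \models \Omega^{\sample}_n$, satisfaction of $\Omega^{\mathsf{str}}_n$ yields, as described after the structural constraints, a unique ATL formula $\Phi^v$ of size at most $n$. I would then prove the bridge $v(y^C_{i,q}) = 1 \iff q \in \Sat(\Phi^v[i])$ by induction on the identifier $i$, exploiting that every child of Node~$i$ carries a smaller identifier, so its value is already fixed by the induction hypothesis. The base case (propositions) and the Boolean and $\lX$ cases follow from $\Omega^{\mathsf{sem}}_{\prop,C}$, $\Omega^{\mathsf{sem}}_{\wedge,C}$, $\Omega^{\mathsf{sem}}_{\neg,C}$, $\Omega^{\mathsf{sem}}_{\lX,C}$ and the $\mathsf{Pre}$-claim. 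For a $\lU$- or $\lG$-labelled node, a nested induction on $k$ --- driven by the recurrences inside $\Omega^{\mathsf{sem}}_{\lU,C}$ / $\Omega^{\mathsf{sem}}_{\lG,C}$, the $\mathsf{Pre}$-claim, and the outer hypothesis on the children --- shows $v(y^C_{i,q,k})$ equals membership of $q$ in the $k$-th iterate $S_k$; the convergence claim then gives $v(y^C_{i,q}) = v(y^C_{i,q,|Q|}) = [q \in \Sat(\Phi^v[i])]$. Applying the bridge at the root $i = n$ and reading off $\Omega^{\mathsf{con}}_n$ yields $C \models \Phi^v$ for $C \in P$ and $C \not\models \Phi^v$ for $C \in N$, i.e. consistency.

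The main obstacle is the treatment of the $\lU$ and $\lG$ operators, whose $\Sat$ sets are extremal fixed points rather than direct structural recursions; the encoding replaces them by a bounded unrolling of length $|Q|$. The real content therefore lies in the convergence claim --- that $|Q|$ iterations are both sufficient and land exactly on the least, resp. greatest, fixed point --- and in the nested induction tying the intermediate variables $y^C_{i,q,k}$ to the algorithm's iterates. All remaining cases are routine translations of the defining equations for $\Sat$.
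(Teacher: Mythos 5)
Your proposal is correct and follows essentially the same route as the paper: a $\mathsf{Pre}$-correctness lemma, a bridge lemma relating $v(y^C_{i,q})$ to $q \in \Sat(\Phi^v[i])$ proved by structural induction on the syntax DAG, a nested induction on $k$ tying the auxiliary variables $y^C_{i,q,k}$ to the iterates of the bounded fixed-point algorithms for $\lU$ and $\lG$, and an explicit construction of a satisfying valuation from a consistent formula for the forward direction. No gaps.
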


 %It essentially consists in formally stating and proving that the various formulas that we have defined do encode the properties that they are supposed to.
%
%Additionally, note that 

\subsection{Deciding the separability}
\label{subsec:deciding_sep}
Given a sample $\sample$, by iteratively checking if $\Omega^{\sample}_0$ is satisfiable, if $\Omega^{\sample}_1$ is satisfiable, etc., we can find a minimal size formula consistent with $\sample$ if one exists. However, %it may be that 
if there is no such formula, %in which case 
the above iteration would not terminate. To circumvent that issue, there are two possibilities. We may first decide the separability of the sample, i.e. decide if there exists an ATL formula consistent with it; or we may exhibit a bound $B$, expressed as a function of %the size of the CGS in 
$\sample$, such that if there is a separating formula, there is one of size at most $B$.
%
%The way Problem~\ref{prob:pass-learning} is stated 
%
%With the encoding into a $\SAT$ formula that we have described above, we can describe an algorithm that finds, given a sample $\sample$ of positive and negative concurrent game structures, a smallest formula consistent with that sample. Indeed, it suffices to start from $n = 0$, and then increase $n$ until finding a formula $\SAT$ formula $\Omega^{\sample}_n$ that is satisfiable. However, as such, this algorithm may not terminate, if there is no separating formula. There are two ways to fix that:
In this subsection, we tackle both of these issues. 

It was shown in \cite[Section 3, Thm. 3.2, 3.9]{DBLP:journals/tcs/BrowneCG88} that the separability %of the sample 
can be decided in polynomial time for full CTL (i.e., all operators can be used) with Kripke structures. % if (at least) the operators $\neg,\wedge$ and $\lX$
Furthermore, in \cite[Coro. 1]{DBLP:journals/corr/abs-2402-06366}, %the authors use 
the results of \cite{DBLP:journals/tcs/BrowneCG88} are used to exhibit an exponential bound on the size of the CTL formulas to be considered. 

For fragments of CTL (i.e. CTL formulas that can use only some operators), it was shown in \cite[Thm. 3]{DBLP:journals/pacmpl/KrogmeierM23}, as a corollary of a \textquotedblleft meta theorem\textquotedblright{} with applications to various logic, that the separability %of the sample 
can be decided in exponential time. %This result was obtained as a corollary of a very general result (a \textquotedblleft meta theorem\textquotedblright{}) with applications for various logic, including CTL. This result holds for all fragments of CTL.

Here, we extend these results to ATL formulas. We consider two settings: full ATL and any fragment of ATL. In the first setting, %we extend these results to ATL and concurrent game structures. That is, 
we show that separability can also be decided in polynomial time. %We also show that, for all fragments of ATL, 
In the second setting, we show that separability can be decided in exponential time. We deduce an exponential bound on the size of the formulas that need to be considered, which hold regardless of the fragment considered (including full ATL).

\subsubsection{Separability for full ATL.}
%We consider the case of ATL formula where the operators $\neg, \wedge$ and $\fanBr{\cdot} \lX$ are allowed. 
For full ATL, our goal is to show the theorem below.
\begin{theorem}
	\label{thm:decide_polynomial_time}
	Given a sample $\sample = (P,N)$ of CGS, we can decide in time polynomial in $|\sample|$ if the sample $\sample$ is separable with (arbitrary) ATL formulas.
\end{theorem}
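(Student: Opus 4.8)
The plan is to mirror the Browne--Clarke--Grumberg strategy for CTL, but to replace classical bisimulation on Kripke structures by an \emph{alternating bisimulation} tailored to ATL and CGSs. Concretely, I would work on the disjoint union $\mathcal{U}$ of all states occurring in $P \cup N$ and define an equivalence $\equiv$ on $\mathcal{U}$ as the coarsest partition that refines the labelling $q \mapsto \pi(q)$ and is stable under the controllable-predecessor operators: two states $q,q'$ are kept together only if, for every coalition $A \subseteq \Ag$ and every union $S$ of current blocks, $q \in \mathsf{Pre}_A(S) \Leftrightarrow q' \in \mathsf{Pre}_A(S)$. This is computed by partition refinement, starting from the partition induced by $\pi$ and refining until stable, which takes at most $|\mathcal{U}|$ rounds. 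The claim is that $\equiv$ coincides exactly with ATL-indistinguishability on $\mathcal{U}$.

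This requires two characterization lemmas: $\equiv$-equivalent states satisfy the same ATL formulas, and conversely each $\equiv$-class $c$ admits a \emph{characteristic} ATL formula $\chi_c$ that holds at a sample state precisely when that state lies in $c$. Soundness is by induction on formula structure: the Boolean and $\fanBrOp{A}\lX$ cases are immediate from stability under $\mathsf{Pre}_A$, while for the fixpoint operators I use that, by Equations~(\ref{eq:sat-EU}) and~(\ref{eq:sat-EG}), the sets $\Sat(\fanBrOp{A}(\Phi_1\lU\Phi_2))$ and $\Sat(\fanBrOp{A}\lG\Phi)$ are the least and greatest fixpoints of monotone maps built from $\mathsf{Pre}_A$ and intersection; since these maps send unions of $\equiv$-classes to unions of $\equiv$-classes, so do their fixpoints. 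Completeness is the finite-depth bisimulation construction: since refinement stabilizes after finitely many rounds, each class is pinned down by a finite Boolean combination of $\fanBrOp{A}\lX$-requirements over the already-constructed characteristic formulas of its successor blocks.

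Given $\equiv$, I reduce separability to a combinatorial test. Writing $[q]$ for the class of $q$, set $U^\ast := \bigcup_{C \in P}\{[s] : s \in I_C\}$. Any separating formula must hold at every positive initial state, so its satisfaction set (a union of classes) contains $U^\ast$, and it must fail at some initial state of each negative structure. Taking $\Phi := \bigvee_{c \in U^\ast}\chi_c$ then shows that $\sample$ is separable if and only if every $C \in N$ has an initial state $s$ with $[s] \notin U^\ast$; the ``only if'' direction uses exactly that any separating formula's satisfaction set already contains $U^\ast$, so if all initial states of some negative $C$ lie in $U^\ast$-classes then that $C$ is wrongly accepted. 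This final check is linear once $\equiv$ is known.

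The main obstacle is securing \emph{polynomial} rather than merely exponential running time, since naively each refinement round ranges over all $2^{|\Ag|}$ coalitions and all $2^{|\mathcal{U}|}$ target unions $S$. Two observations rescue polynomiality. First, at a fixed state $q$ an agent with a single available action cannot influence $\mathsf{Pre}_A(S)$, so coalitions differing only in such trivial agents induce identical predecessor behaviour at $q$; hence the number of coalitions yielding distinct behaviour at $q$ is at most $\prod_a d(q,a) = |Q_{\mathsf{Act}}(q)| \le |\sample|$. Second, instead of enumerating target unions, I summarize, for each relevant coalition $A$ and each $A$-action profile $\alpha \in \mathsf{Act}_A(q)$, the finite set of successor blocks $\{[q'] : q' \in \mathsf{Succ}(q,\alpha)\}$ reachable under all counter-moves; two states are one-step equivalent for $A$ exactly when the upward closures (equivalently, the minimal elements) of these families agree, which is checkable directly. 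Both reductions keep each refinement round polynomial in $|\sample|$, and together with the at-most-$|\mathcal{U}|$ rounds and the final linear check this yields the claimed polynomial-time bound.
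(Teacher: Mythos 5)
Your proposal is correct, and its computational core is the same as the paper's: iteratively refine an indistinguishability relation on the states of the sample using one-step controllable-predecessor tests, keep the number of coalitions polynomial by observing that agents with only one available action at the states under comparison cannot affect the test (the paper's $\mathsf{RelAg}(q,q')$), and then decide separability by checking that each negative structure has an initial state not identified with any positive initial state. Your partition-refinement formulation is the dual of the paper's pair-based computation of $\mathsf{Distinguish}(\sample)$ (Lemma~\ref{lem:charac_distinguish}); your final combinatorial test is equivalent to Lemma~\ref{lem:decide_from_distinguish}, with $\bigvee_{c \in U^\ast}\chi_c$ playing the role of the paper's witness $\lor_{C_P}\lor_{q_{C_P}}(\wedge_{C_N}\Phi_{q_{C_P},q_{C_N}})$. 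The one genuine divergence is how the fixpoint operators are discharged. The paper first proves Lemma~\ref{lem:ATLn_ok}, explicitly unfolding $\fanBrOp{A}\lU$ and $\fanBr{A}\lG$ into $|Q|$-fold nestings of $\fanBr{A}\lX$ (Lemmas~\ref{lem:equiv_U_n_X} and~\ref{lem:equiv_G_n_X}) so that the entire remaining argument lives in the $\lX$-fragment; you instead keep full ATL and argue that the least and greatest fixpoints of the monotone maps in Equations~(\ref{eq:sat-EU}) and~(\ref{eq:sat-EG}) send unions of stable blocks to unions of stable blocks, hence every $\Sat$-set is a union of classes. Your route avoids the paper's admittedly cumbersome unfolding lemma at the price of proving soundness for all five operators directly; the paper's route isolates the reduction to ATL-$\lX$ as a standalone statement that it reuses elsewhere. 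Both arguments are sound and both yield the claimed polynomial bound, so the difference is one of proof organization rather than substance.
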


What we are actually going to prove is Lemma~\ref{lem:separate_ATL_wedge}, that we state below, establishing the result of Theorem~\ref{thm:decide_polynomial_time} for all fragments of ATL with the operators $\neg,\wedge$ and $\lX$. Let us first introduce formally these fragments.
\begin{definition}
	We let $\mathsf{OP} := \{ \neg,\wedge, \lX,\lG,\lU\}$ denote the set of operators that can be used in (regular) ATL formulas. 
	
	For all subsets of operators $\mathsf{O} \subseteq \mathsf{Op}$, we let $\mathbf{ATL}(\mathsf{O})$ denote the set of ATL formulas using only operators in $\mathsf{O}$. When $\mathsf{O} = \mathsf{OP}$, $\mathbf{ATL}(\mathsf{O})$ is simply denoted $\mathbf{ATL}$. 
	
	We let $\mathsf{OPn} := \{\neg,\wedge, \lX\}$ and we refer to formula in the fragment $\mathbf{ATL}(\mathsf{OPn})$ as ATL-$\lX$ formulas.
\end{definition}
For simplicity, in the following, we allow ATL-$\lX$ formulas to use the classical operator $\lor$ since $\Phi_1 \lor \Phi_2 \equiv \neg (\neg \Phi_1 \wedge \neg \Phi_2)$ for all ATL formulas $\Phi_1$ and $\Phi_2$.

We want to establish of this section the lemma below.
\begin{lemma}
	\label{lem:separate_ATL_wedge}
	Consider any subset of operators $\mathsf{OPn} \subseteq \mathsf{O} \subseteq \mathsf{OP}$, and a sample $S = (P,N)$ of CGS. We can decide in time $O(n^6 \cdot r^6 + n^4 \cdot k + n^2 \cdot p)$ if the sample $S$ is separable with $\mathbf{ATL}(\mathsf{O})$ formulas% that can use the  operators $\neg, \wedge,\fanBr{\cdot} \lX$
	.
\end{lemma}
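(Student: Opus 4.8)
The plan is to reduce separability to a single equivalence relation on states that does not depend on the fragment $\mathsf{O}$, and then to compute that relation by partition refinement. Write $Q := \bigcup_{C \in P \cup N} Q_C$ for the disjoint union of all states in the sample (each CGS is finite and closed under successors, so model checking stays within $Q$). Let $\equiv$ denote indistinguishability by $\mathbf{ATL}(\mathsf{OPn})$ formulas, i.e.\ $q \equiv q'$ iff every $\mathbf{ATL}(\mathsf{OPn})$ formula agrees on $q$ and $q'$. First I would characterise $\equiv$ as the limit of the refinement sequence $\approx_0 \supseteq \approx_1 \supseteq \cdots$, where $q \approx_0 q'$ iff $\pi(q) = \pi(q')$, and $q \approx_{t+1} q'$ iff $q \approx_t q'$ and, for every coalition $A \subseteq \Ag$ and every union $U$ of $\approx_t$-classes, $q \in \mathsf{Pre}_A(U) \Leftrightarrow q' \in \mathsf{Pre}_A(U)$. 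A routine induction shows that $\approx_t$ is exactly indistinguishability by $\mathbf{ATL}(\mathsf{OPn})$ formulas of $\lX$-nesting depth at most $t$; since $Q$ is finite, the chain stabilises after at most $|Q|$ steps at $\equiv$. Finiteness also lets me characterise classes syntactically: for each pair $q \not\equiv q'$ fix a distinguishing $\mathbf{ATL}(\mathsf{OPn})$ formula, and by conjoining and negating these I obtain, for every union $U$ of $\equiv$-classes, a formula $\Psi_U \in \mathbf{ATL}(\mathsf{OPn})$ whose satisfying states in $Q$ are exactly $U$.

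The core step is to prove that $\equiv$ coincides with full-$\mathbf{ATL}$ (indeed full-$\mathbf{ATL}(\mathsf{O})$) indistinguishability, so that the $\lX$-fragment already separates states as finely as any larger fragment. One inclusion is immediate since $\mathbf{ATL}(\mathsf{OPn}) \subseteq \mathbf{ATL}(\mathsf{O}) \subseteq \mathbf{ATL}$. For the other, I would show by induction on an arbitrary full-$\mathbf{ATL}$ formula $\Phi$ that its satisfying set $\Sat(\Phi)$ is $\equiv$-closed (a union of $\equiv$-classes). The Boolean cases are trivial; the key observation is that $\mathsf{Pre}_A$ sends $\equiv$-closed sets to $\equiv$-closed sets, because if $S$ equals $\Sat(\Psi)$ for some $\Psi \in \mathbf{ATL}(\mathsf{OPn})$ (available by the previous paragraph), then $\mathsf{Pre}_A(S) = \Sat(\fanBrOp{A}\lX \Psi)$, and $\fanBrOp{A}\lX\Psi$ is again an $\mathbf{ATL}(\mathsf{OPn})$ formula, whose satisfying set is therefore $\equiv$-closed. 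The $\fanBrOp{A}\lX$ case is then exactly this observation; for $\fanBrOp{A}(\Phi_1 \lU \Phi_2)$ and $\fanBrOp{A}\lG\Phi'$ I would run the fixed-point iterations of Algorithms~\ref{alg:algo_U} and~\ref{alg:algo_G}, noting that each iterate is built from $\equiv$-closed sets using $\cap$, $\cup$ and $\mathsf{Pre}_A$, hence stays $\equiv$-closed, so that the fixed point is $\equiv$-closed as well. Consequently $\equiv$-equivalent states agree on every full-$\mathbf{ATL}$ formula, giving $\equiv \,=\, \equiv_{\mathbf{ATL}(\mathsf{O})}$ for every $\mathsf{OPn} \subseteq \mathsf{O} \subseteq \mathsf{OP}$. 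This fragment-independence is the main obstacle: it is what makes a single algorithm correct for all fragments simultaneously, and it relies crucially on finiteness (to realise class-unions as satisfying sets) and on $\mathsf{Pre}_A$ respecting $\equiv$.

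With this in hand, the separability criterion reads: $\sample$ is $\mathbf{ATL}(\mathsf{O})$-separable iff every $C \in N$ has an initial state that is $\equiv$-inequivalent to all positive initial states. For the constructive direction, let $U$ be the union of the $\equiv$-classes of all positive initial states and take $\Psi_U \in \mathbf{ATL}(\mathsf{OPn}) \subseteq \mathbf{ATL}(\mathsf{O})$: it holds on every positive initial state, while the inequivalent negative initial state guaranteed by the criterion lies outside $U$, so $\Psi_U$ fails on it and hence fails on that $C \in N$. Conversely, if some $C \in N$ had all of its initial states $\equiv$-equivalent to positive initial states, then any consistent $\Phi \in \mathbf{ATL}(\mathsf{O})$ --- which holds on all positive initial states --- would, using $\equiv \,=\, \equiv_{\mathbf{ATL}(\mathsf{O})}$, hold on all initial states of $C$, so $C \models \Phi$, contradicting the consistency requirement $C \not\models \Phi$. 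Maximising $U$ only tightens the negative constraint, so this single $U$ is without loss of generality.

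Finally, I would compute $\equiv$ by partition refinement and then check the criterion. Initialising $\approx_0$ by comparing proposition sets costs $O(|Q|^2 \cdot |\prop|)$, giving the $n^2 p$ term with $n = |Q|$ and $p = |\prop|$. There are at most $|Q|$ refinement rounds, and in each round the split test for a pair of states reduces, via the one-step condition defining $\approx_{t+1}$, to comparing the two states' abstract one-step behaviours over their action tuples (each state having at most $r$ of them) together with the agent/coalition data of size $|\Ag| = k$; bounding this work across all pairs, rounds, and action tuples yields the $n^6 r^6$ term, and accounting for the length-$k$ action components yields the $n^4 k$ term. The final criterion check is $O(|Q|^2)$, so the total is $O(n^6 r^6 + n^4 k + n^2 p)$. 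The one delicate algorithmic point, which I would treat carefully, is to evaluate the one-step condition for all coalitions without an exponential blow-up in $k$; this is done by reformulating $\approx_{t+1}$ through the action-matching (alternating-bisimulation) condition, so that the quantification over unions $U$ is replaced by a direct comparison of the agents' action components rather than an enumeration of the $2^{|\Ag|}$ coalitions.
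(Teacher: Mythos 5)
Your proposal is correct in substance and reaches the same destination as the paper, but the key reduction is justified by a genuinely different argument. The paper first proves a \emph{formula-level} translation (Lemma~\ref{lem:ATLn_ok}): any $\mathbf{ATL}(\mathsf{O})$ formula is, on structures with at most $n$ states, equivalent to an ATL-$\lX$ formula obtained by unfolding the $\lG$/$\lU$ fixed points $n$ times into nested $\fanBrOp{A}\lX$'s; it then computes the \emph{set of distinguishable pairs} $\mathsf{Distinguish}(\sample)$ by iterating the one-step operator $\mathsf{Upd}$ (Definition~\ref{def:update_X}, Lemma~\ref{lem:charac_distinguish}) and applies the same initial-state criterion as you (Lemma~\ref{lem:decide_from_distinguish}). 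You instead work with the complementary object --- the indistinguishability equivalence $\equiv$ computed by partition refinement --- and justify fragment-independence \emph{semantically}: every union of $\equiv$-classes is ATL-$\lX$-definable, $\mathsf{Pre}_A$ therefore preserves $\equiv$-closedness, and the fixed-point iterates of the $\lG$/$\lU$ semantics stay $\equiv$-closed, so satisfaction sets of arbitrary $\mathbf{ATL}$ formulas are unions of $\equiv$-classes. This bisimulation-invariance route is arguably cleaner and yields a slightly stronger state-level statement (the $\lX$-fragment equivalence is already as fine as full-ATL equivalence), whereas the paper's unfolding additionally produces an explicit equivalent ATL-$\lX$ formula, which it reuses elsewhere. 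Your refinement also stabilises in $O(n)$ rather than $O(n^2)$ rounds, which is a minor improvement consistent with the claimed bound.

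The one place where your write-up defers the load-bearing detail is the evaluation of the one-step condition: as literally stated, $\approx_{t+1}$ quantifies over all $2^{k}$ coalitions \emph{and} all unions $U$ of $\approx_t$-classes, both exponential. The quantification over $U$ is eliminated by the standard game reformulation ($\exists\alpha\,\forall\alpha'\,\exists t'\,\forall t$), which is exactly the paper's $\mathsf{Upd}$; but the coalition quantification still remains, and the paper's resolution is not to avoid enumeration but to restrict it to coalitions of \emph{relevant} agents $\mathsf{RelAg}(q,q')$, whose number of subsets is bounded by $\|q\|\cdot\|q'\|\le r^2$ (Lemma~\ref{lem:compute_upd}). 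You would need to supply this (or an equivalent) bound explicitly for the $O(n^6\cdot r^6+n^4\cdot k+n^2\cdot p)$ claim to go through; as written, ``direct comparison of the agents' action components'' names the fix without providing it.
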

The proof of this lemma uses ideas similar to what is presented \cite{DBLP:journals/tcs/BrowneCG88} with CTL formulas. %Due to space constraint, we only provide an informal explanation here. %What we present here is  except it is extended to ATL. 
%The idea is to explain informally how to prove this theorem on the case of CTL formulas and Kripke structures. First of all, let us introduce a notations for the fragment of CTL with the operators $\neg, \wedge,\lor,\lX$ (we have added the operator $\lor$ since it can obtained the operators $\wedge$ and $\neg$). 
%
Before we proceed to the proof of this lemma, we introduce some notations below.
\begin{definition}
	Consider a sample $S = (P,N)$ of CGS. For all CGS $C \in P \cup N$, we let $Q_C$ (resp. $I_C \subseteq Q_C$) denote the set of (resp. initial) states of the structure $C$. When the sample $S$ is clear from context, we let $Q := \cup_{C \in P \cup N} Q_C$ denote the set of all the states occurring in $S$. We also let $\Ag$ denote the set of agents occurring in all CGS of $S$ and we let $\prop$ denote the set of all the propositions occurring in at least one CGS on $S$. %When the sample $S$ is clear from context, we omit the $_S$ in the notations introduced above.
	
	Whenever a function in $\pi,d,\delta,\mathsf{Act},\mathsf{Succ}$ is applied to a state $q \in Q$, we always assume that it is the function from the CGS $C$ such that $q \in  Q_C$.
\end{definition} 

We also introduce below the notations on samples that we will use in the following to write complexity bounds. 
\begin{definition}
	Consider a sample $S = (P,N)$ of CGS. For all states $q \in Q$, we let:
	\begin{equation*}
		||q|| := |Q_{\mathsf{Act}}(q)|
	\end{equation*}
	Note that, for all CGS $C \in P \cup N$, we have $|(Q_C)_{\mathsf{Act}}| = \sum_{q \in Q_C} ||q||$. 
	
	We let $r := \max_{q \in Q} ||q||$, $M := |Q_{\mathsf{Act}}|$, $n := |Q|$, $p := |\prop|$ and $k := |\Ag|$.
	%
	%Note that $r +n + p + k = O(|S|)$.
\end{definition}

Now, the first step that we take is to prove that it is sufficient to consider only ATL-$\lX$ formulas.%, %i.e., ATL formulas whose only used temporal operator is $\lX$. 
\iffalse
\begin{definition}
	We let CTLn denote the fragment of CTL consisting of formulas that only use the operators $\neg,\wedge,\lor,\lX$.
\end{definition}
\fi
%
This may seem counterintuitive since given an ATL formula $\Phi$ using the operator $\lG$ (or the operator $\lU$), there does not exist ATL-$\lX$ formula equivalent to $\Phi$ because the number of states of the CGS on which $\Phi$ may be evaluated is arbitrarily large. However, given a sample $\sample$ of finitely many CGS, there is a bound on the number of states used in all the CGS of $\sample$. Hence, there is an ATL-$\lX$ formula equivalent to $\Phi$ on all the CGS of $\sample$. %The same holds for the operator $\lU$. %Hence, we have the lemma below.
%because the horizon of an ATL-$\lX$ formula is bounded, while it is not the case for the formulae 
%
%However, given an ATL formula $\Phi$ that may use the operators $\lG$ or $\lU$, there may not exist an ATL-$\lX$ formula that is equivalent to $\Phi$. Nonetheless, for all $n \in \N$, there does exist an ATL-$\lX$ formula equivalent to $\Phi$ on CGS with at most $n$ states. Hence, we have the lemma below, which justifies restricting ourselves to ATL-$\lX$ formulas.
\begin{lemma}
	\label{lem:ATLn_ok}
	Consider a sample $\sample = (P,N)$ of CGS and any subset of operators $\mathsf{OPn} \subseteq \mathsf{O} \subseteq \mathsf{OP}$. If there exists an $\mathbf{ATL}(\mathsf{O})$ formula consistent with $\sample$, then there is one that is an ATL-$\lX$ formula.
\end{lemma}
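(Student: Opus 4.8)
The plan is to argue by structural induction on the given $\mathbf{ATL}(\mathsf{O})$ formula $\Phi$, rewriting every $\lU$- or $\lG$-headed subformula into an \emph{unrolled} ATL-$\lX$ formula that agrees with it on the finitely many states occurring in $\sample$. I first fix the relevant notion of equivalence: write $\Phi \equiv_\sample \Phi'$ when, for every $C \in P \cup N$ and every state $q \in Q_C$, we have $q \models \Phi$ iff $q \models \Phi'$ (equivalently, the sets $\Sat(\Phi)$ and $\Sat(\Phi')$ coincide in each $C$). Since consistency with $\sample$ only inspects the truth value at initial states, $\equiv_\sample$ preserves consistency; hence it suffices to build, from the given consistent $\Phi$, an ATL-$\lX$ formula $\Phi'$ with $\Phi \equiv_\sample \Phi'$.

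The engine of the proof is the observation, made explicit in Algorithms~\ref{alg:algo_U} and~\ref{alg:algo_G}, that the least (resp.\ greatest) fixed point defining $\Sat(\fanBrOp{A}(\Phi_1 \lU \Phi_2))$ (resp.\ $\Sat(\fanBr{A}\lG\Phi)$) on a CGS $C$ stabilizes after at most $|Q_C| \le |Q| =: n$ iterations, since the iterates form a monotone chain of subsets of $Q_C$. I encode one iteration as a single $\fanBr{A}\lX$ layer: for $\lU$, set $\Psi_0 := \Phi_2$ and $\Psi_{k+1} := \Psi_k \lor (\Phi_1 \wedge \fanBr{A}\lX \Psi_k)$; for $\lG$, set $\Psi_0 := \Phi$ and $\Psi_{k+1} := \Psi_k \wedge \fanBr{A}\lX\Psi_k$. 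Using the identity $\Sat(\fanBr{A}\lX\Psi) = \mathsf{Pre}_A(\Sat(\Psi))$, an easy induction on $k$ shows that in every $C$ the set $\Sat(\Psi_k)$ equals the value of $S$ after $k$ iterations of the corresponding algorithm. Taking $k = n$ and using $n \ge |Q_C|$ for all $C$, we obtain $\Psi_n \equiv_\sample \fanBrOp{A}(\Phi_1 \lU \Phi_2)$ and $\Psi_n \equiv_\sample \fanBr{A}\lG\Phi$, respectively; note $\Psi_n$ uses only $\neg, \wedge, \lor, \fanBr{\cdot}\lX$ and is therefore an ATL-$\lX$ formula.

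The induction itself is then routine. Propositions and the cases $\neg$, $\wedge$, $\fanBr{A}\lX$ remain within ATL-$\lX$: by the inductive hypothesis, replace each immediate subformula by an $\equiv_\sample$-equivalent ATL-$\lX$ formula and reapply the same top operator, which is already an ATL-$\lX$ operator. For $\fanBrOp{A}(\Phi_1 \lU \Phi_2)$ and $\fanBr{A}\lG\Phi$, I first replace $\Phi_1, \Phi_2$ (resp.\ $\Phi$) by $\equiv_\sample$-equivalent ATL-$\lX$ formulas and then apply the unrolling above. This step is sound because, by the fixed-point characterizations~(\ref{eq:sat-EU}) and~(\ref{eq:sat-EG}), the set $\Sat(\fanBrOp{A}(\Phi_1 \lU \Phi_2))$ in each $C$ depends on $\Phi_1, \Phi_2$ only through the sets $\Sat(\Phi_1), \Sat(\Phi_2)$ (and similarly for $\lG$), so $\equiv_\sample$ of the subformulas lifts to $\equiv_\sample$ of the whole subformula; composing with the engine then yields the claim.

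The one genuinely delicate point, which I would state with care, is the \emph{uniform} convergence claim: a single unrolling depth $n = |Q|$ must work for all CGS of $\sample$ simultaneously, even though --- as the earlier remark illustrates --- no finite $\lX$-unrolling can be globally equivalent to a $\lU$- or $\lG$-formula. This is exactly where finiteness of $\sample$ enters: because $|Q_C| \le n$ for each $C$, the monotone fixed-point chain on $C$ is already constant from index $|Q_C|$ onward, hence at index $n$, so the common depth $n$ is correct on every structure at once. Everything else is bookkeeping; in particular I would not track $|\Phi'|$ here, since the unrolling increases it (although sharing keeps the syntax DAG polynomial) and the lemma only asserts existence.
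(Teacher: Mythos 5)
Your proposal is correct and follows essentially the same route as the paper: the same one-step-per-$\lX$ unrollings $\Psi_{k+1} = \Psi_k \lor (\Phi_1 \wedge \fanBr{A}\lX\Psi_k)$ and $\Psi_{k+1} = \Psi_k \wedge \fanBr{A}\lX\Psi_k$, the same identification of $\Sat(\Psi_k)$ with the $k$-th iterate of Algorithms~\ref{alg:algo_U} and~\ref{alg:algo_G}, the same stabilization-at-depth-$|Q|$ argument (the paper's Lemmas~\ref{lem:equiv_U_n_X} and~\ref{lem:equiv_G_n_X}), and the same outer structural induction. The only cosmetic difference is that you phrase equivalence relative to the sample ($\equiv_\sample$) while the paper phrases it as equivalence on all CGS with at most $n$ states; both handle the uniform-depth issue identically.
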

Note that this lemma is known for a long time in the context of CTL formulas and Kripke structures, see \cite[Theorem 3.2, 3.9]{DBLP:journals/tcs/BrowneCG88}. Furthermore, the proof of this lemma is not complicated, but quite cumbersome as it requires to handle all the operators $\lG,\lU$,etc, it is therefore postponed to Appendix~\ref{proof:lem_ATLn_ok}.

Let us now consider the set $\mathsf{Distinguish}(\sample) \subseteq Q^2$ of pairs of states that we can distinguish with an ATL-$\lX$ formula.
\begin{definition}
	Consider a sample $\sample$ of CGS. We let $Q$ denote the set of all the states occurring in all CGS of $\sample$. We let: 
	\begin{equation*}
	\mathsf{Distinguish}(\sample)\coloneqq\{ (q,q') \in Q^2~|~\text{there is ATL-$\lX$ formula } \Phi \text{ s.t. } q\models \Phi \Leftrightarrow q' \not\models \Phi \}
	\end{equation*}
\end{definition}

Given a sample $\sample$, we claim the following: 1) it is possible to compute in time polynomial in $|\sample|$ the set $\mathsf{Distinguish}(\sample)$, and 2) given $\mathsf{Distinguish}(\sample) \subseteq Q^2$, we can decide in polynomial time if there is an ATL-$\lX$ formula consistent with $\sample$.

The second claim is actually a straightforward consequence of the lemma below. 
%Interestingly, from the set $\mathsf{Distinguish}(S)$, we can decide if there is an ATL-$\lX$ formula consistent with a sample $S$.
\begin{lemma}
	\label{lem:decide_from_distinguish}
	Consider a sample $S = (P,N)$ of CGS. There is an ATL-$\lX$ formula consistent with $S$ if and only if for all $C_N \in N$, there some state $q_{C_N} \in I_{C_N}$ such that, for all $C_P \in P$ and states $q_{C_P} \in I_{C_P}$, we have $(q_{C_P},q_{C_N}) \in \mathsf{Distinguish}(S)$. 
\end{lemma}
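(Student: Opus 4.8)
The plan is to prove the biconditional in Lemma~\ref{lem:decide_from_distinguish} by analyzing what a consistent ATL-$\lX$ formula must do on the initial states of the structures in $\sample$. Recall that $\Phi$ is consistent with $S=(P,N)$ iff $\Phi$ accepts every $C_P \in P$ (i.e.\ holds at \emph{all} initial states of each positive structure) and rejects every $C_N \in N$ (i.e.\ fails at \emph{some} initial state of each negative structure). The key observation is that acceptance is a universal condition over initial states while rejection is an existential one, which is exactly what the asymmetric quantifier structure in the statement reflects.

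For the forward direction, I would assume a consistent ATL-$\lX$ formula $\Phi$ exists and fix an arbitrary $C_N \in N$. Since $\Phi$ rejects $C_N$, there is some $q_{C_N} \in I_{C_N}$ with $q_{C_N} \not\models \Phi$. I claim this witness works: for any $C_P \in P$ and any $q_{C_P} \in I_{C_P}$, consistency forces $q_{C_P} \models \Phi$ (as $\Phi$ accepts $C_P$, it holds at all its initial states). Thus $\Phi$ is an ATL-$\lX$ formula with $q_{C_P} \models \Phi$ but $q_{C_N} \not\models \Phi$, so $\Phi$ distinguishes the pair and $(q_{C_P}, q_{C_N}) \in \mathsf{Distinguish}(S)$ by definition.

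The converse direction is the main obstacle, since from the pairwise distinguishing data I must \emph{synthesize} a single formula that works simultaneously. Fix, for each $C_N \in N$, the promised state $q_{C_N} \in I_{C_N}$. For each such $q_{C_N}$ and each pair $(q_{C_P}, q_{C_N}) \in \mathsf{Distinguish}(S)$, there is a distinguishing formula; by possibly negating it I may assume it holds at $q_{C_P}$ and fails at $q_{C_N}$. Call it $\Phi_{C_P, q_{C_P}, C_N}$. I then take the conjunction over all positive initial states,
\[
\Phi_{C_N} := \bigwedge_{C_P \in P} \bigwedge_{q_{C_P} \in I_{C_P}} \Phi_{C_P, q_{C_P}, C_N},
\]
which holds at every positive initial state but fails at $q_{C_N}$ (since one conjunct already fails there). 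Finally I set $\Phi := \bigwedge_{C_N \in N} \Phi_{C_N}$. This formula holds at every positive initial state, so it accepts every $C_P \in P$; and for each $C_N \in N$ the conjunct $\Phi_{C_N}$ fails at $q_{C_N} \in I_{C_N}$, so $\Phi$ fails there too, hence $\Phi$ rejects $C_N$.

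The delicate points I must check are that this construction stays within the ATL-$\lX$ fragment and respects the consistency semantics. The fragment is closed under $\wedge$ (and $\neg$, which I used to orient the distinguishing formulas), so $\Phi$ is genuinely an ATL-$\lX$ formula; here I rely on the convention, fixed just before the lemma, that $\lor$ and hence finite conjunctions are available in the fragment. The only subtlety is ensuring each $\Phi_{C_P,q_{C_P},C_N}$ can be oriented to hold at the positive state: the definition of $\mathsf{Distinguish}(S)$ only guarantees a formula accepting exactly one of the two states, so I invoke closure under negation to fix the orientation without leaving the fragment. With that, both the acceptance (all positive initial states satisfy $\Phi$) and rejection (each negative structure has a failing initial state) conditions hold, completing the proof.
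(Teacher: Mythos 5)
Your forward direction is correct and matches the paper's argument exactly: consistency forces every positive initial state to satisfy $\Phi$ and gives, for each negative structure, a rejecting initial state, so the pair lies in $\mathsf{Distinguish}(S)$.

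The converse direction has a genuine gap. You define $\Phi_{C_N} := \bigwedge_{C_P \in P} \bigwedge_{q_{C_P} \in I_{C_P}} \Phi_{C_P, q_{C_P}, C_N}$ and assert that it ``holds at every positive initial state,'' but this does not follow. The formula $\Phi_{C_P, q_{C_P}, C_N}$ is only guaranteed to hold at the \emph{specific} state $q_{C_P}$ it was chosen for; membership of $(q_{C_P},q_{C_N})$ in $\mathsf{Distinguish}(S)$ says nothing about the truth value of that formula at any \emph{other} positive initial state $q'$. So a conjunct chosen for $q_1$ may fail at $q_2$ and vice versa, in which case your conjunction fails at every positive initial state and $\Phi$ is not consistent with $S$. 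The quantifier structure you need on the positive side is existential per disjunct, not universal per conjunct: fix a positive initial state $q_{C_P}$ first, conjoin its distinguishing formulas over all negative witnesses, and then \emph{disjoin} over the positive initial states, as the paper does with $\Phi := \bigvee_{C_P \in P} \bigvee_{q_{C_P} \in I_{C_P}} \bigl(\bigwedge_{C_N \in N} \Phi_{q_{C_P},q_{C_N}}\bigr)$. Each inner conjunction holds at its own $q_{C_P}$ and fails at every witness $q_{C_N}$, so the disjunction holds at every positive initial state and fails at every $q_{C_N}$. Your closure remarks about $\neg$, $\wedge$, $\lor$ in the ATL-$\lX$ fragment are fine and carry over to the corrected construction.
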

\begin{proof}
	Assume that there is an ATL-$\lX$ formula $\Phi$ consistent with $S$. Then, for all $C_N \in N$, there is some state $q_{C_N} \in I_{C_N}$ such that $q_{C_N} \not\models \Phi$. Furthermore, for all $C_P \in P$ and states $q_{C_P} \in I_{C_P}$, we have $q_{C_P} \models \Phi$. Hence, $(q_{C_P},q_{C_N}) \in \mathsf{Distinguish}(S)$.
	
	On the other hand, for all $C_N \in N$, $C_P \in P$ and states $q_{C_P} \in I_{C_P}$, we let $\Phi_{q_{C_P},q_{C_N}}$ be an ATL formula that accepts $q_{C_P}$ and rejects $q_{C_N}$. Then, we let $\Phi := \lor_{C_P \in P} \lor_{q_{C_P} \in I_{C_P}} (\wedge_{C_N \in N} \Phi_{q_{C_P},q_{C_N}})$ be an ATL-$\lX$ formula. By construction, %Let us show that it 
	this ATL-$\lX$ formula $\Phi$ is consistent with $S$.%· Let $C_P \in P$ and $q_{C_P} \in I_{C_P}$. For all $C_N \in N$, we have 
\end{proof}
%The reason why it holds is the following: there is an ATL-$\lX$ formula consistent with $\sample$ iff, for all negative structures $C_N$ of $\sample$, there is a starting state $q_N \in I_{C_N}$ such that, for all starting states $q_P$ of all positive structures of $\sample$, we have $(q_P,q_N) \in \mathsf{Distinguish}(\sample)$, which can be checked in polynomial time. The \textquotedblleft only if\textquotedblright{} implication comes directly from the definition of a formula consistent with a sample. The \textquotedblleft if\textquotedblright{} implication is a consequence of the fact that ATL-$\lX$ formulas can use conjunctions, disjunctions, and negations. %Furthermore, the latter condition can indeed be checked in time polynomial in $|\S|$.

Let us now consider the first claim. Our goal is to compute the set $\mathsf{Distinguish}(S)$ in polynomial tine. To do so, we are going to %compute this set $\mathsf{Distinguish}(S)$ by 
iteratively add to the empty set pairs of states that can be distinguished by ATL-$\lX$ formulas. 

Before we define what sets of states we add at each state, let us define the notion of relevant agents, given any pair of states.
\begin{definition}
	Consider a sample $S = (P,N)$ of CGS. For all pairs of states $(q,q') \in Q^2$, we let $\mathsf{RelAg}(q,q') \subseteq \Ag$ be equal to:
	\begin{equation*}
		\mathsf{RelAg}(q,q') := \{ a \in \Ag \mid d(q,a) \cdot d(q',a) \geq 2 \}
	\end{equation*}
	Note that, for convenience, for the empty coalition $\emptyset$, we let $\mathsf{Act}_\emptyset(q) := \{ \epsilon \}$, for all states $q \in Q$, where $\epsilon$ is an empty tuple of actions.
\end{definition}

The way we add states at every step is now defined below. 
\begin{definition}
	\label{def:update_X}
	Consider a sample $S = (P,N)$ of CGS. %concurrent game structure $C = \langle Q,I,k,\prop, \pi, \sigma, d,\delta \rangle$. 
	Consider some subset $R \subseteq Q^2$. \iffalse We let $\mathsf{Upd}_{\neg}(S) \subseteq Q^2$ be equal to:
	\begin{equation*}
		\mathsf{Upd}_{\neg}(S) := \{ (q',q) \in Q^2 \setminus S \mid (q,q') \in S \}
	\end{equation*}
	\fi
	We let $\mathsf{Upd}(R) \subseteq Q^2$ be equal to:
	\begin{align*}
		\mathsf{Upd}(R) := \{ & (q,q'),(q',q) \in Q^2 \mid \exists A \subseteq \mathsf{RelAg}(q,q'),\; \exists \alpha \in \mathsf{Act}_{A}(q),\; \\ & \forall \alpha' \in \mathsf{Act}_{A}(q'), 
		\exists t' \in \mathsf{Succ}(q',\alpha'), \forall t \in \mathsf{Succ}(q,\alpha), \; (t,t') \in R \}
	\end{align*}
\end{definition}

%bound could be improved to n(n-1) intead of n^2
We can now iteratively compute the set $\mathsf{Distinguish}(S)$ with approximations until we reach a fixed point. Note that this is similar to what is done \cite[Secion 3]{DBLP:journals/tcs/BrowneCG88} with CTL formulas and Kripke structures, except that we additionally have to handle the coalitions of agents. 
\begin{lemma}
	\label{lem:charac_distinguish}
	Consider a sample $S = (P,N)$ of CGS. %concurrent game structure $C = \langle Q,I,k,\prop, \pi, \sigma, d,\delta \rangle$. 
	We let $n := |Q|^2$. We define $(R_i)_{0 \leq i \leq n^2} \subseteq (Q^2)^{n^2+1}$ as follows:
	\begin{itemize}
		\item $R_0 := \{ (q,q'),(q',q) \in Q^2 \mid \pi(q) \neq \pi(q') \}$;
		\item for all $0 \leq i \leq n^2-1$, $R_{i+1} := R_i \cup \mathsf{Upd}(R_i)$. 
	\end{itemize}
	We let $R := R_{n^2}$. Then, we have:
	\begin{equation*}
		R = \mathsf{Distinguish}(S)
	\end{equation*}
\end{lemma}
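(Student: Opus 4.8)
The plan is to prove the two inclusions $R \subseteq \mathsf{Distinguish}(S)$ and $\mathsf{Distinguish}(S) \subseteq R$ separately, after two preliminary observations. First, both $\mathsf{Distinguish}(S)$ and every $R_i$ are symmetric: $\mathsf{Distinguish}(S)$ because its definition uses $\Leftrightarrow$, and the $R_i$ because $R_0$ and $\mathsf{Upd}$ always insert a pair together with its mirror image. Second, the sequence $(R_i)$ is nondecreasing and bounded by $Q^2$, which has $|Q|^2 = n$ elements; hence it stabilizes after at most $n$ steps, and since $n^2 \geq n$ the set $R = R_{n^2}$ is a genuine fixed point, satisfying $R_0 \subseteq R$ and $\mathsf{Upd}(R) \subseteq R$. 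These facts let me argue about $R$ directly instead of tracking the iteration index.

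For $R \subseteq \mathsf{Distinguish}(S)$ I would induct on $i$, showing $R_i \subseteq \mathsf{Distinguish}(S)$. The base case $R_0$ is immediate: a pair with $\pi(q)\neq\pi(q')$ is separated by a single proposition. For the inductive step, take $(q,q') \in \mathsf{Upd}(R_i)$ with witnessing coalition $A \subseteq \mathsf{RelAg}(q,q')$ and action tuple $\alpha \in \mathsf{Act}_A(q)$. For each $\alpha' \in \mathsf{Act}_A(q')$ the definition of $\mathsf{Upd}$ supplies a successor $t'_{\alpha'} \in \mathsf{Succ}(q',\alpha')$ that is $R_i$-related to every $t \in \mathsf{Succ}(q,\alpha)$; by the induction hypothesis each such pair is separable, and after possibly negating I may assume the separating formula $\chi_{t,t'_{\alpha'}}$ holds at $t$ and fails at $t'_{\alpha'}$. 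Setting $\psi_t := \bigwedge_{\alpha'} \chi_{t, t'_{\alpha'}}$ and $\Psi := \bigvee_{t \in \mathsf{Succ}(q,\alpha)} \psi_t$ produces a formula with $\mathsf{Succ}(q,\alpha) \subseteq \Sat(\Psi)$ and $t'_{\alpha'} \notin \Sat(\Psi)$ for every $\alpha'$. Reading off the semantics of $\mathsf{Pre}_A$, the ATL-$\lX$ formula $\fanBr{A}\lX\Psi$ then holds at $q$ but not at $q'$, so $(q,q') \in \mathsf{Distinguish}(S)$.

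For $\mathsf{Distinguish}(S) \subseteq R$ I would proceed by structural induction, proving the statement ``for every ATL-$\lX$ formula $\Phi$ and all $q,q'$ with $q\models\Phi$ and $q'\not\models\Phi$, we have $(q,q')\in R$''. The propositional case lands in $R_0$; the Boolean cases ($\neg$, $\wedge$, and the derived $\lor$) reduce to a subformula separating the same or the swapped pair, using the induction hypothesis together with the symmetry of $R$. The interesting case is $\Phi = \fanBr{A}\lX\Psi$: unfolding $\mathsf{Pre}_A$ yields an $\alpha \in \mathsf{Act}_A(q)$ with every state of $\mathsf{Succ}(q,\alpha)$ satisfying $\Psi$, while every $\alpha' \in \mathsf{Act}_A(q')$ admits a successor $t'_{\alpha'}$ failing $\Psi$; applying the induction hypothesis to $\Psi$ on each pair $(t, t'_{\alpha'})$ places it in $R$, which matches the body of $\mathsf{Upd}$ almost verbatim.

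The one step needing genuine care --- and the only place where the multi-agent setting differs from the CTL argument of \cite{DBLP:journals/tcs/BrowneCG88} --- is reconciling the coalition $A$ appearing in the formula with the restriction $A \subseteq \mathsf{RelAg}(q,q')$ built into $\mathsf{Upd}$. I would show that any agent $a \in A$ with $d(q,a)\cdot d(q',a) < 2$, equivalently $d(q,a)=d(q',a)=1$, can be removed from the coalition without changing the truth of $\fanBr{A}\lX\Psi$ at either $q$ or $q'$: since $a$'s action is forced at both states, moving it from the coalition to its complement leaves the reachable sets $\mathsf{Succ}(q,\alpha)$ and $\mathsf{Succ}(q',\alpha')$ unchanged. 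Iterating this pruning replaces $A$ by $A \cap \mathsf{RelAg}(q,q')$, after which the witnesses exhibited above certify $(q,q') \in \mathsf{Upd}(R) \subseteq R$. I expect this coalition-pruning argument, rather than the formula constructions, to be the main obstacle, because it is the point at which the definition of $\mathsf{RelAg}$ and the precise semantics of $\mathsf{Succ}$ under a partial action tuple must be handled carefully.
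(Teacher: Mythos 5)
Your proposal is correct and follows essentially the same route as the paper's proof: stabilization and symmetry of the iteration, an induction on $i$ building a witness $\fanBr{A}\lX$-formula for $R \subseteq \mathsf{Distinguish}(S)$, and a structural induction with coalition pruning to $A \cap \mathsf{RelAg}(q,q')$ for the converse. The only cosmetic difference is that your witness formula is $\bigvee_{t}\bigwedge_{\alpha'}\chi_{t,t'_{\alpha'}}$ where the paper uses $\bigwedge_{\alpha'}\bigvee_{t}\Phi_{(t,t_{\alpha'})}$; both arrangements separate $\mathsf{Succ}(q,\alpha)$ from the states $t'_{\alpha'}$, so either works.
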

\begin{proof}
	First of all, let us show that $\mathsf{Upd}(R) \subseteq R$. Clearly, for all $0 \leq i \leq n^2-1$, we have $R_i \subseteq R_{i+1}$. Therefore, since $|Q^2| = n^2$, if $R_{n^2} \neq Q^2$, then there is some $0 \leq k \leq n^2-1$ such that $R_k = R_{k+1}$ and $\mathsf{Upd}(R_k) \subseteq R_k$. In such a case, for all $k \leq i \leq n^2$, we have $R_i = R_k$ and $\mathsf{Upd}(R_i) \subseteq R_i$. Hence, we do have $\mathsf{Upd}(R) \subseteq R$. 
	
	By definition, for all $(q,q') \in \mathsf{Distinguish}(S)$, we also have $(q',q) \in \mathsf{Distinguish}(S)$. This is also the case for $R$. Indeed, for all $(q,q') \in R_0$, $(q',q) \in R_0$. Furthermore, for all subsets $T \subseteq Q^2$, we have that for all $(q,q') \in \mathsf{Upd}(T)$, $(q',q) \in \mathsf{Upd}(T)$. 
	
	For all pairs $(t,t') \in \mathsf{Distinguish}(S)$, we let $\Phi_{(t,t')}$ be an ATL-$\lX$ formula such that $t \models \Phi_{(t,t')}$ and $t' \not \models \Phi_{(t,t')}$, which exists since we can use the negation in ATL-$\lX$ formulas. 
	
	Now, let us show that $R \subseteq \mathsf{Distinguish}(S)$. We show by induction on $0 \leq i \leq n^2$ that $R_i \subseteq \mathsf{Distinguish}(S)$. Consider any $(q,q') \in R_0$. Consider a proposition $p \in \prop$ such that $p \in \pi(q)$ iff $p \notin \pi(q')$, which exists by definition of $R_0$. Then, the ATL-$\lX$ formula $\Phi := p$ distinguishes the states $q$ and $q'$. Hence, $R_0 \subseteq \mathsf{Distinguish}(S)$. 
	
	Assume now that $R_i \subseteq \mathsf{Distinguish}(S)$ for some $0 \leq i \leq n^2-1$. Consider some $(q,q') \in \mathsf{Upd}(R_i)$ for which there is a coalition of agents $A \subseteq \mathsf{RelAg}(q,q')$ and a tuple of actions $\alpha \in \mathsf{Act}_{A}(q)$ such that for all tuples of actions $\alpha' \in \mathsf{Act}_{A}(q')$, there exists $t' \in \mathsf{Succ}(q',\alpha')$ such that for all $t \in \mathsf{Succ}(q,\alpha)$, we have $(t,t') \in R_i$. For all tuples of actions $\alpha' \in \mathsf{Act}_{A}(q')$, we let $t_{\alpha'} \in \mathsf{Succ}(q',\alpha')$ be such that for all $t \in \mathsf{Succ}(q,\alpha)$, we have $(t,t') \in R_i \subseteq \mathsf{Distinguish}(S)$. For all tuples of actions $\alpha' \in \mathsf{Act}_{A}(q')$, we let:
	\begin{equation*}
		\Phi_{\alpha'} := \bigvee_{t \in \mathsf{Succ}(q,\alpha)} \Phi_{(t,t_{\alpha'})}
	\end{equation*}
	
	We also let:
	\begin{equation*}
		\Phi_{\alpha} := \bigwedge_{\alpha' \in \mathsf{Act}_{A}(q')} \Phi_{\alpha'}
	\end{equation*}
	Finally, we let $\Phi := \fanBr{A} \lX \Phi_{\alpha}$. Note that $\Phi$ is an ATL-$\lX$ formula since, for all $(t,t') \in \mathsf{Distinguish}(S)$, $\Phi_{(t,t')}$ is an ATL-$\lX$ formula. We claim that $q \models \Phi$ and $q' \not\models \Phi$. Indeed:
	\begin{itemize}
		\item Let $t \in \mathsf{Succ}(q,\alpha)$. Consider any $\alpha' \in \mathsf{Act}_{A}(q')$. We have $t \models \Phi_{(t,t_{\alpha'})}$. Hence, $t \models \Phi_{\alpha'}$. Since this holds for all $\alpha' \in \mathsf{Act}_{A}(q')$, it follows that $t \models \Phi_{\alpha}$. This holds for all $t \in \mathsf{Succ}(q,\alpha)$. %Hence, $q \in \mathsf{Pre}_A(\mathsf{SAT}_C(\Phi_{\alpha}))$. 
		Thus, $q \models \Phi$. 
		\item Consider any $\alpha' \in \mathsf{Act}_{A}(q')$. For all $t \in \mathsf{Succ}(q,\alpha)$, we have $t_{\alpha'} \not\models \Phi_{(t,t_{\alpha'})}$. Therefore, $t_{\alpha'} \not\models \Phi_{\alpha'}$. Thus, $t_{\alpha'} \not\models \Phi_{\alpha}$. That is, for all $\alpha' \in \mathsf{Act}_{A}(q')$, there is some $t_{\alpha'} \in \mathsf{Succ}(q',\alpha')$ such that $t_{\alpha'} \not\models \Phi_{\alpha}$. %Hence, $q' \notin \mathsf{Pre}_A(\mathsf{SAT}_C(\Phi_{\alpha}))$. 
		%This holds for all $\alpha' \in \mathsf{Act}_{A}(q')$. 
		Thus, $q' \not\models \Phi$. 
	\end{itemize}
	Therefore, $(q,q'),(q',q) \in \mathsf{Distinguish}(S)$. It follows that $R_{i+1} \subseteq \mathsf{Distinguish}(S)$. In fact, $R_i \subseteq \mathsf{Distinguish}(S)$ for all $0 \leq i \leq n^2$. Hence, $R \subseteq \mathsf{Distinguish}(S)$.
	
	Let us now show that $\mathsf{Distinguish}(S) \subseteq R$. We show by induction on ATL-$\lX$ formulas $\Phi$ the property $\mathcal{P}(\Phi)$: for all $(q,q') \in Q^2$, if $q \models \Phi$ and $q' \not\models \Phi$, then $(q,q') \in R$. 
	
	Consider an ATL-$\lX$ formula $\Phi = p \in \prop$. Let $(q,q') \in Q^2$, such that $q \models \Phi$ and $q' \not\models \Phi$. Then, $p \in \pi(q)$ and $p \notin \pi(q')$. Therefore, $\pi(q) \neq \pi(q')$. Thus, $(q,q') \in R_0 \subseteq S$. Hence, $\mathcal{P}(\Phi)$ holds. Then:
	\begin{itemize}
		\item Consider an ATL-$\lX$ formula $\Phi = \neg \Phi'$. Assume that $\mathcal{P}(\Phi')$ holds. Consider any $(q,q') \in Q^2$, such that $q \models \Phi$ and $q' \not\models \Phi$. Then, $q' \models \Phi'$ and $q \not\models \Phi'$. By $\mathcal{P}(\Phi')$, it follows that $(q',q) \in S$, and therefore we also have $(q,q') \in S$. Hence, $\mathcal{P}(\Phi)$ holds.
		\item Consider an ATL-$\lX$ formula $\Phi = \Phi_1 \wedge \Phi_2$. Assume that $\mathcal{P}(\Phi_1)$ and $\mathcal{P}(\Phi_2)$ hold. Consider any $(q,q') \in Q^2$, such that $q \models \Phi$ and $q' \not\models \Phi$. Then, let $\Phi' \in \{ \Phi_1,\Phi_2\}$ be such that $q' \not\models \Phi'$. We have $q \models \Phi'$. Hence, by $\mathcal{P}(\Phi')$, we have $(q,q') \in S$. Hence, $\mathcal{P}(\Phi)$ holds.
		\item Consider an ATL-$\lX$ formula $\Phi = \Phi_1 \lor \Phi_2$. Assume that $\mathcal{P}(\Phi_1)$ and $\mathcal{P}(\Phi_2)$ hold. Consider any $(q,q') \in Q^2$, such that $q \models \Phi$ and $q' \not\models \Phi$. Then, let $\Phi' \in \{ \Phi_1,\Phi_2\}$ be such that $q \models \Phi'$. We have $q' \not\models \Phi'$. Hence, by $\mathcal{P}(\Phi')$, we have $(q,q') \in S$. Hence, $\mathcal{P}(\Phi)$ holds.
		\item Consider an ATL-$\lX$ formula $\Phi = \fanBr{A} \lX \Phi'$ for a coalition of agents $A \subseteq \Ag$. Assume that $\mathcal{P}(\Phi')$ holds. Consider any $(q,q') \in Q^2$, such that $q \models \Phi$ and $q' \not\models \Phi$. Then:
		\begin{itemize}
			\item there is a tuple of actions $\alpha \in \mathsf{Act}_{A}(q)$ such that, for all $t \in \mathsf{Succ}(q,\alpha)$, we have $t \models \Phi'$;
			\item for all tuples of actions $\alpha' \in \mathsf{Act}_{A}(q')$, there is some $t_{\alpha'} \in \mathsf{Succ}(q',\alpha')$, such that $t_{\alpha'} \not\models \Phi'$.
		\end{itemize}
		We let $\bar{A} := A \cap (\mathsf{RelAg}(q,q'))$ and for all $u \in \{ q,q'\}$, for all tuples of actions $\alpha \in \mathsf{Act}_{A}(u)$, we let $\bar{\alpha} \in \mathsf{Act}_{\bar{A}}(u)$ be such that, for all $a \in \bar{A}$, we have $\bar{\alpha}_a = \alpha_a$. Then, by definition of the set $\mathsf{RelAg}(q,q')$, we have $\mathsf{Succ}(q,\alpha) = \mathsf{Succ}(q,\bar{\alpha})$. Hence, the two above points are equivalent to:
		\begin{itemize}
			\item there is a tuple of actions $\alpha \in \mathsf{Act}_{\bar{A}}(q)$ such that, for all $t \in \mathsf{Succ}(q,\alpha)$, we have $t \models \Phi'$;
			\item for all tuples of actions $\alpha' \in \mathsf{Act}_{\bar{A}}(q')$, there is some $t_{\alpha'} \in \mathsf{Succ}(q',\alpha')$, such that $t_{\alpha'} \not\models \Phi'$.
		\end{itemize}
		Therefore, by $\mathcal{P}(\Phi')$, for all $t \in \mathsf{Succ}(q,\alpha)$ and $\alpha' \in \mathsf{Act}_{A}(q')$, we have $(t,t_{\alpha'}) \in R$. Overall, we have: for the coalition of agents $\bar{A} \subseteq \mathsf{RelAg}(q,q')$ and for the tuple of actions $\alpha \in \mathsf{Act}_{\bar{A}}(q)$, for all tuples of actions $\alpha' \in \mathsf{Act}_{\bar{A}}(q')$, the state $t_{\alpha'} \in \mathsf{Succ}(q',\alpha')$ is such that, for all $t \in \mathsf{Succ}(q,\alpha)$, we have $(t,t_{\alpha'}) \in R$. Therefore, $(q,q') \in \mathsf{Upd}(R) \subseteq R$. Hence, $\mathcal{P}(\Phi)$ holds.
	\end{itemize}
	Overall, $\mathcal{P}(\Phi)$ holds for ATL-$\lX$ formulas $\Phi$. Therefore, $\mathsf{Distinguish}(S) \subseteq R$.
\end{proof}
There now only remains to note that, given a set $R$, we can compute in polynomial time the set $\mathsf{Dist}_{\lX}(R)$. It then follows, from Lemma~\ref{lem:charac_distinguish} that we can compute in polynomial time the set $\mathsf{Distinguish}(\sample)$. We provide explicit algorithms in Appendix~\ref{subsec:algorithms}, with explicit bounds. This allows to establish Lemma~\ref{lem:separate_ATL_wedge}, from which directly follows Theorem~\ref{thm:decide_polynomial_time}.

\subsubsection{Separability for any fragment of ATL}
Let us now consider the case of an arbitrary fragment of ATL. Now, the above-described algorithm a priori does not work. Among other things, one of the issues is that, possibly, % since it may be that 
we may not be able to use the operator $\lX$%, Lemma~\ref{lem:defi_distinguish} may not hold% (since the items 2. and 3. of Lemma~\ref{lem:CNS_distinguish} correspond to the use of the operator $\lX$)
. %To circumvent this issue
Therefore, given a sample $\sample$, instead of iteratively constructing a subset of $\mathsf{Distinguish}(\sample) \subseteq Q^2$ of distinguishable pairs of states, we iteratively compute a subset $\mathsf{Acc}(\sample) \subseteq 2^Q$ of sets of states that 
can be accepted with a formula in the fragment considered, while the complement set is rejected. We can then prove a definition and lemma similar to Definition~\ref{def:update_X} and Lemma~\ref{lem:charac_distinguish} except that we additionally have to be able to handle the operators $\lG$ and $\lU$, %, which is done by defining, given any $R \subseteq 2^Q$, the sets $\mathsf{Dist}_{\lU}(R)$ and $\mathsf{Dist}_{\lG}(R)$. Note that these sets are defined using 
which is done by using Algorithms~\ref{alg:algo_U},~\ref{alg:algo_G}. 

The goal of subsection is to show the theorem below.
\begin{theorem}
	\label{thm:bound_size_formulas}
	Consider a sample $\sample% = (P,N)
	$ %of CGS 
	and a fragment ATL' of ATL. We can decide in time exponential in $|\sample|$ if the sample $\sample$ is separable with ATL' formulas. %Furthermore, i
	If it is, then there exists an ATL' formula consistent with $\sample$ %, then there is one 
	of size at most $2^{|Q|}$. 
\end{theorem}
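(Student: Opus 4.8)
The plan is to mirror the development for full ATL, but to replace the set $\mathsf{Distinguish}(\sample)$ of distinguishable \emph{pairs} by a family $\mathsf{Acc}(\sample) \subseteq 2^Q$ of \emph{accepted sets}: the collection of all $T \subseteq Q$ of the form $T = \mathsf{Sat}_Q(\Phi)$ for some ATL$'$ formula $\Phi$, where $Q = \bigcup_{C \in P \cup N} Q_C$ and $\mathsf{Sat}_Q(\Phi) := \{ q \in Q \mid q \models \Phi \}$ is computed over the disjoint union of all structures of $\sample$ (this is legitimate since $\delta$ keeps each state inside its own component, so $\mathsf{Sat}_Q$ restricted to $Q_C$ agrees with the per-structure computation). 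The reason for this change is that an arbitrary fragment ATL$'$ may lack $\neg$ or $\lX$, so the pair-based argument of Lemma~\ref{lem:charac_distinguish} is unavailable, whereas tracking accepted sets is operator-agnostic. First I would prove the analogue of Lemma~\ref{lem:decide_from_distinguish}: the sample is separable with ATL$'$ formulas iff there is some $T \in \mathsf{Acc}(\sample)$ with $I_{C_P} \subseteq T$ for every $C_P \in P$ and $I_{C_N} \not\subseteq T$ for every $C_N \in N$. One direction takes a separating formula $\Psi$ and sets $T := \mathsf{Sat}_Q(\Psi)$; the other reads a separating formula directly off such a $T$.

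Next I would compute $\mathsf{Acc}(\sample)$ by a fixed-point iteration analogous to Definition~\ref{def:update_X} and Lemma~\ref{lem:charac_distinguish}. Starting from the base family $\{ \mathsf{Sat}_Q(p) \mid p \in \prop \}$, I would repeatedly close under exactly those operators available in ATL$'$: complementation $T \mapsto Q \setminus T$ (for $\neg$), intersection $(T_1,T_2) \mapsto T_1 \cap T_2$ (for $\wedge$), $T \mapsto \mathsf{Pre}_A(T)$ for each coalition $A \subseteq \Ag$ (for $\lX$), and---this is the new ingredient relative to the ATL-$\lX$ case---the least fixed point obtained from a pair $(T_1,T_2)$ and the greatest fixed point obtained from a single $T$, computed by Algorithms~\ref{alg:algo_U} and~\ref{alg:algo_G} respectively, again ranging over all coalitions $A \subseteq \Ag$ (for $\lU$ and $\lG$). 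Since every approximation lives in $2^Q$ and the family only grows, the iteration stabilises after at most $2^{|Q|}$ rounds. Correctness is by double inclusion: $\mathsf{Sat}_Q(\Phi) \in \mathsf{Acc}(\sample)$ for every ATL$'$ formula $\Phi$ follows by induction on $\Phi$ (each clause matching one closure rule, using that Algorithms~\ref{alg:algo_U},~\ref{alg:algo_G} indeed compute the corresponding $\mathsf{Sat}$ set per Equations~(\ref{eq:sat-EU}),(\ref{eq:sat-EG})); conversely, each set added to the family is witnessed by an explicit formula built from the witnesses of the sets it was derived from, exactly as in the inductive construction of Lemma~\ref{lem:charac_distinguish}.

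For the complexity, each closure operation is polynomial in $|\sample|$: intersection and complement are immediate, a single $\mathsf{Pre}_A(T)$ is computed by scanning the action tuples at each state, and the inner fixed points run for at most $|Q|$ rounds of $\mathsf{Pre}_A$. In one outer round I try every operator on every set (or pair of sets) in the current family and every coalition $A \subseteq \Ag$: at most $2^{|Q|}$ or $(2^{|Q|})^2$ sets and $2^{|\Ag|}$ coalitions, hence time exponential in $|\sample|$; with at most $2^{|Q|}$ outer rounds, the whole computation, together with the final scan realising the separability test above, runs in time exponential in $|\sample|$, which yields the decidability claim.

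Finally, for the size bound I would make the witnessing construction sharing-aware. Assign to each set $T \in \mathsf{Acc}(\sample)$ a single \emph{canonical} formula $\Phi_T$, chosen when $T$ is first produced and built from the canonical formulas of the sets it was derived from (e.g. $\Phi_{T_1 \cap T_2} := \Phi_{T_1} \wedge \Phi_{T_2}$, $\Phi_{\mathsf{Pre}_A(T)} := \fanBr{A}\lX \Phi_T$, and analogously $\fanBrOp{A}(\Phi_{T_1} \lU \Phi_{T_2})$ and $\fanBr{A}\lG \Phi_T$ for the fixed-point rules). A straightforward induction shows that every subformula of $\Phi_T$ is itself some canonical formula $\Phi_{T'}$, and that distinct canonical formulas realise distinct sets; hence $|\Phi_T| = |\mathsf{SubF}(\Phi_T)| \le |\mathsf{Acc}(\sample)| \le 2^{|Q|}$. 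If the sample is separable, a separating set $T$ exists by the first step, and $\Phi_T$ is then a consistent formula of size at most $2^{|Q|}$, as required. I expect the main obstacle to be the correctness of the two nested fixed points: proving that the inner least/greatest fixed points computed by Algorithms~\ref{alg:algo_U},~\ref{alg:algo_G} on the combined structure produce exactly the realisable sets $\mathsf{Sat}_Q(\fanBrOp{A}(\Phi_1 \lU \Phi_2))$ and $\mathsf{Sat}_Q(\fanBr{A}\lG\Phi)$, while simultaneously controlling the coalitions, so that the outer fixed point defining $\mathsf{Acc}(\sample)$ both terminates and captures precisely the accepted sets.
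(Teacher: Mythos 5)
Your proposal is correct and follows essentially the same route as the paper: it introduces the family of accepted sets $\mathsf{Acc}(\sample)\subseteq 2^Q$, computes it by closing under the available operators (using Algorithms~\ref{alg:algo_U} and~\ref{alg:algo_G} for $\lU$ and $\lG$ over all coalitions) until stabilisation within $2^{|Q|}$ rounds, proves correctness by double inclusion, and derives the size bound via canonical witness formulas whose subformulas are themselves canonical, exactly as in Lemmas~\ref{lem:acc1} and~\ref{lem:acc2}. No substantive differences from the paper's argument.
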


To establish this theorem, we are actually to prove the lemma below.
\begin{lemma}
	\label{lem:decide_exponential_time_ATLn}
	Consider any subset of operators $\mathsf{O} \subseteq \mathsf{OP}$. and a sample $S = (P,N)$ of CGS. We can decide in time $O(2^{k+2n} \cdot n^2 \cdot M \cdot r)$ if there is an $\mathbf{ATL}(\mathsf{O})$-formula consistent with $S$. Furthermore, if it is the case, then there is an $\mathbf{ATL}(\mathsf{O})$-formula of size at most $2^n$ that is consistent with $S$.
\end{lemma}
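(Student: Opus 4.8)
The plan is to mirror the approach of Lemma~\ref{lem:charac_distinguish}, but since in an arbitrary fragment we may lack negation or the $\lX$ operator, tracking distinguishable \emph{pairs} of states no longer suffices. Instead I would track \emph{acceptable sets of states}: let $\mathsf{Acc}(S) \subseteq 2^Q$ be the collection of subsets $T \subseteq Q$ for which there is an $\mathbf{ATL}(\mathsf{O})$-formula $\Phi$ with $\{ q \in Q \mid q \models \Phi \} = T$. The sample $S$ is separable with $\mathbf{ATL}(\mathsf{O})$ precisely when $\mathsf{Acc}(S)$ contains a set $T$ that accepts every positive structure and rejects every negative one, i.e.\ $\bigcup_{C \in P} I_C \subseteq T$ while $I_C \not\subseteq T$ for every $C \in N$. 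Thus the whole problem reduces to computing $\mathsf{Acc}(S)$ and scanning it for such a $T$.

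To compute $\mathsf{Acc}(S)$, I would define a monotone closure operator $\mathsf{Upd}$ on families in $2^{2^Q}$ that saturates a family under the operators available in $\mathsf{O}$: starting from the base family $\{ \{ q \mid p \in \pi(q) \} : p \in \prop \} \cup \{ \emptyset, Q \}$, one adds $Q \setminus T$ when $\neg \in \mathsf{O}$, $T_1 \cap T_2$ when $\wedge \in \mathsf{O}$, $\mathsf{Pre}_A(T)$ for each coalition $A \subseteq \Ag$ when $\lX \in \mathsf{O}$, and, for $\lU$ and $\lG$, the least and greatest fixed points produced by Algorithms~\ref{alg:algo_U} and~\ref{alg:algo_G} on the current sets and each coalition $A$. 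Iterating $\mathsf{Upd}$ from the base family reaches a fixed point after at most $2^n$ rounds, since the family grows monotonically and is bounded by $|2^Q| = 2^n$; I would then prove that this fixed point equals $\mathsf{Acc}(S)$. One inclusion is by induction on the round number: every set added at a round is realized by a formula obtained by applying the corresponding operator to witnesses of its operands, where the correctness of Algorithms~\ref{alg:algo_U} and~\ref{alg:algo_G}, namely Equations~(\ref{eq:sat-EU}) and~(\ref{eq:sat-EG}), is exactly what guarantees that the computed fixed points coincide with $\mathsf{Sat}(\fanBrOp{A}(\Phi_1 \lU \Phi_2))$ and $\mathsf{Sat}(\fanBr{A}\lG\Phi)$. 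The converse inclusion is by structural induction on $\mathbf{ATL}(\mathsf{O})$-formulas: the $\mathsf{Sat}$ set of every formula is closed under the same operations and hence lies in the computed family.

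For the size bound, while computing $\mathsf{Acc}(S)$ I would record a witnessing formula for each set, sharing common subformulas through the derivation DAG: when a set $T$ is first added, its witness is the single operator application built on top of the already-fixed witnesses of its operands. Since distinct subformulas then correspond to distinct sets of $\mathsf{Acc}(S)$, and there are at most $2^n$ such sets, the witness of a separating set $T$ has at most $2^n$ distinct subformulas, giving $|\Phi| \leq 2^n$. The running time is dominated by the closure computation: there are at most $2^n$ candidate sets and at most $2^k$ coalitions, each application of Algorithm~\ref{alg:algo_U} or~\ref{alg:algo_G} runs $|Q| = n$ iterations, and each iteration computes some $\mathsf{Pre}_A$ in time $O(M \cdot r)$; combining this with the ranging over pairs of sets for the binary operators and the rounds needed to stabilize yields the stated bound $O(2^{k+2n} \cdot n^2 \cdot M \cdot r)$.

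I expect the main obstacle to be the two inclusions establishing the equality $\mathsf{Acc}(S) = \mathsf{Upd}^*(\text{base})$ in the presence of the fixed-point operators and coalitions: unlike the $\lX$-only case of Lemma~\ref{lem:charac_distinguish}, I cannot restrict attention to $\mathsf{RelAg}(q,q')$, so I must range over all $2^k$ coalitions and argue carefully that the algorithmic least and greatest fixed points faithfully realize the semantics of $\lU$ and $\lG$ on the finite state space $Q$. Pinning down the exact time bound, in particular the $n^2$ factor, will likewise require a somewhat delicate amortized accounting of the closure rounds against the per-operator fixed-point cost.
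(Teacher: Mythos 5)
Your proposal is correct and follows essentially the same route as the paper: it defines the same family $\mathsf{Acc}(S,\mathsf{O})$ of realizable $\mathsf{Sat}$-sets, saturates the base family $\{\mathsf{Sat}(p) \mid p \in \prop\}$ under per-operator, per-coalition update maps (using Algorithms~\ref{alg:algo_U} and~\ref{alg:algo_G} for the fixed-point operators) for at most $2^n$ rounds, proves the two inclusions by induction on rounds and on formulas respectively, extracts the $2^n$ size bound from a witness function whose subformulas all lie in the image of that function, and concludes by scanning for a set $T$ with $I_{C_P} \subseteq T$ for all positive and $I_{C_N} \not\subseteq T$ for all negative structures. The only deviations (seeding the family with $\emptyset$ and $Q$, and the phrasing of the subformula-counting step) are cosmetic.
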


The complexity result extends \cite[Theorem 3]{DBLP:journals/pacmpl/KrogmeierM23} to the case of ATL formulas and CGS. %There is also a slight improvement complexity wise, since 
For the bound on the size of the formulas, this extends \cite[Corollary 1]{DBLP:journals/corr/abs-2402-06366} to any fragment of ATL, although it is not clear whether our bound and the one from \cite[Corollary 1]{DBLP:journals/corr/abs-2402-06366} are comparable or not.

First of all, for any subset of operators $\mathsf{O} \subseteq \mathsf{OP}$, let us define formally the set $\mathsf{Acc}(S,\mathsf{O}) \subseteq 2^Q$ of accepted sets of states by $\mathbf{ATL}(\mathsf{O})$-formulas.
\begin{definition}
	Given any sample $S = (P,N)$ of CGS and ATL formula $\Phi$, we let $\SAT(\Phi) \subseteq Q$ denote the set of states that satisfy the formula $\Phi$: $\SAT(\Phi) := \{ q \in Q \mid q \models \Phi \}$. Then, for any subset of operators $\mathsf{O} \subseteq \mathsf{OP}$, we define the set $\mathsf{Acc}(S,\mathsf{O}) \subseteq 2^Q$ as follows:
	\begin{equation*}
		\mathsf{Acc}(S,\mathsf{O}) := \{ \SAT(\Phi) \mid \Phi \in \mathbf{ATL}(\mathsf{O}) \} \subseteq 2^Q
	\end{equation*}
\end{definition}

Our goal is to able to compute the set $\mathsf{Acc}(S,\mathsf{O})$, To do so, as in the previous subsection, we are going to iteratively compute an approximation of that set until we reach a fixed point. We define below the update function that we will consider. 
\begin{definition}
	Consider a sample $S = (P,N)$ of CGS and a coalition of agents $A \subseteq \Ag$. For all subsets $T \subseteq Q$, we let:
	\begin{itemize}
		\item $\mathsf{Upd}_{\neg}(S,T) := Q \setminus T$;
		\item $\mathsf{Upd}_{\lX}(S,A,T) := \mathsf{Pre}_A(T)%\{ q \in Q \mid \exists \alpha \in \mathsf{Act}_A(q),\; \mathsf{Succ}(q,\alpha) \subseteq T \}
		$;
		\item $\mathsf{Upd}_{\lG}(S,A,T) := \mathsf{ComputeSAT}_{\lG}(S,A,T)$ which refers to the output of Algorithm~\ref{alg:algo_G} on $(S,A,T)$;
	\end{itemize}
	
	For all pairs of subsets $T,T' \subseteq Q$, we let:
	\begin{itemize}
		\item $\mathsf{Upd}_{\wedge}(S,T,T') := T \cap T'$;
		\item $\mathsf{Upd}_{\lU}(S,A,T,T') := \mathsf{ComputeSAT}_{\lU}(S,A,T,T')$, which  refers to the output of Algorithm~\ref{alg:algo_U} on $(S,A,T,T')$;
	\end{itemize}
	
	Then, for any subset of operators $\mathsf{O} \subseteq \mathsf{OP}$ and for all $R \subseteq 2^Q$, we let $\mathsf{Upd}_{\mathsf{O}}(R) \subseteq 2^Q$ be equal to:
	\begin{equation*}
		\mathsf{Upd}_{\mathsf{O}}(R) := \bigcup_{A \subseteq \Ag} \left(\bigcup_{T \in R} \bigcup_{o \in \mathsf{O} \cap \{ \neg,\lX,\lG \}} \mathsf{Upd}_{o}(S,A,T) \cup \bigcup_{T,T' \in R} \bigcup_{o \in \mathsf{O} \cap \{ \wedge,\lU \}} \mathsf{Upd}_{o}(S,A,T,T') \right)
	\end{equation*}
\end{definition}

%Let us show that 
As intended, this definition matches the semantics of the operators $\neg,\wedge,\lX,\lG,\lU$. 
\begin{lemma}
	\label{lem:proper_behavior}
	Consider a sample $S = (P,N)$ of CGS and any ATL formula $\Phi$. 
	\begin{itemize}
		\item If $\Phi = \neg \Phi'$, then $\SAT(\Phi) = \mathsf{Upd}_{\neg}(S,\SAT(\Phi'))$.
		\item If $\Phi = \fanBr{A} \lX \Phi'$, then $\SAT(\Phi) = \mathsf{Upd}_{\lX}(S,A,\SAT(\Phi'))$.
		\item If $\Phi = \fanBr{A} \lG \Phi'$, then $\SAT(\Phi) = \mathsf{Upd}_{\lG}(S,A,\SAT(\Phi'))$.
		\item If $\Phi = \Phi_1 \wedge \Phi_2$, then $\SAT(\Phi) = \mathsf{Upd}_{\wedge}(S,\SAT(\Phi_1),\SAT(\Phi_2))$.
		\item If $\Phi = \fanBr{A} \Phi_1 \lU \Phi_2$, then $\SAT(\Phi) = \mathsf{Upd}_{\lU}(S,A,\SAT(\Phi_1,\SAT(\Phi_2))$.
	\end{itemize}
\end{lemma}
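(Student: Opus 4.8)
The plan is to treat the five cases separately, noting that three of them are immediate from the inductive definition of $\SAT$ (the display preceding Equation~(\ref{eq:sat-EU}) together with Equations~(\ref{eq:sat-EU}),~(\ref{eq:sat-EG})), while the two fixed-point cases require verifying that Algorithms~\ref{alg:algo_U} and~\ref{alg:algo_G} return the correct extremal fixed point. For the negation, conjunction, and next cases, the definitions $\mathsf{Upd}_{\neg}(S,T) = Q \setminus T$, $\mathsf{Upd}_{\wedge}(S,T,T') = T \cap T'$, and $\mathsf{Upd}_{\lX}(S,A,T) = \mathsf{Pre}_A(T)$ coincide verbatim with the clauses $\SAT(\neg\Phi') = Q \setminus \SAT(\Phi')$, $\SAT(\Phi_1 \wedge \Phi_2) = \SAT(\Phi_1) \cap \SAT(\Phi_2)$, and $\SAT(\fanBr{A}\lX\Phi') = \mathsf{Pre}_A(\SAT(\Phi'))$, so nothing beyond substituting $T = \SAT(\Phi')$ (and $T' = \SAT(\Phi_2)$) is needed.

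For the $\lU$ and $\lG$ cases the one technical ingredient is the monotonicity of $\mathsf{Pre}_A$: if $T \subseteq T'$ then $\mathsf{Pre}_A(T) \subseteq \mathsf{Pre}_A(T')$, which is immediate from the definition since $\mathsf{Succ}(q,\alpha) \subseteq T \subseteq T'$. First I would analyze Algorithm~\ref{alg:algo_U}. Writing $S_0 := \SAT(\Phi_2)$ and $S_{k+1} := S_k \cup (\SAT(\Phi_1) \cap \mathsf{Pre}_A(S_k))$ for the successive values of the variable $S$, monotonicity makes the chain $S_0 \subseteq S_1 \subseteq \cdots$ weakly increasing and bounded by $Q$; since each strict inclusion adds at least one state, a fixed point $S^\ast$ is reached after at most $|Q|$ iterations, so the \emph{for} loop indeed returns $S^\ast$. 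I would then show $S^\ast$ is the \emph{smallest} set satisfying conditions (1) and (2) of Equation~(\ref{eq:sat-EU}): it satisfies them because $\SAT(\Phi_2) = S_0 \subseteq S^\ast$ and, at the fixed point, $\SAT(\Phi_1) \cap \mathsf{Pre}_A(S^\ast) \subseteq S^\ast$; and by induction on $k$, $S_k \subseteq T$ for every $T$ satisfying (1)--(2), the inductive step combining monotonicity of $\mathsf{Pre}_A$ with condition (2). Hence $S^\ast = \SAT(\fanBr{A}(\Phi_1 \lU \Phi_2))$, which is exactly $\mathsf{Upd}_{\lU}(S,A,\SAT(\Phi_1),\SAT(\Phi_2))$.

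The $\lG$ case is dual. With $S_0 := \SAT(\Phi')$ and $S_{k+1} := S_k \cap \mathsf{Pre}_A(S_k)$, the chain is now weakly decreasing, again stabilizes within $|Q|$ steps to a fixed point $S^\ast$ satisfying conditions (1)--(2) of Equation~(\ref{eq:sat-EG}), and the analogous induction shows $T \subseteq S_k$ for every $T$ satisfying those conditions; thus $S^\ast$ is the \emph{largest} such set and equals $\SAT(\fanBr{A}\lG\Phi') = \mathsf{Upd}_{\lG}(S,A,\SAT(\Phi'))$.

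I do not anticipate a genuine obstacle, since the statement essentially re-certifies the correctness of the standard fixed-point ATL model-checking procedure of~\cite{DBLP:journals/jacm/AlurHK02}. The only point requiring mild care is the termination bound, namely that exactly $|Q|$ iterations of the \emph{for} loops suffice to reach the fixed point; this follows from the monotone-chain argument above (each strict inclusion consumes a state of the finite set $Q$) rather than from the \emph{while}-loop stopping test used in the original reference.
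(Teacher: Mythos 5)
Your proof is correct and takes essentially the same route as the paper: the paper's own proof of Lemma~\ref{lem:proper_behavior} simply notes that the $\neg,\wedge,\lX$ cases are immediate from the semantics and defers the $\lG$ and $\lU$ cases to the fixed-point computation of~\cite{DBLP:journals/jacm/AlurHK02}, which is precisely the monotone-iteration argument (stabilization of the increasing/decreasing chain within $|Q|$ steps, plus minimality/maximality of the resulting fixed point) that you spell out explicitly. Nothing to correct.
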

\begin{proof}
	The case of the operators $\neg,\wedge$ and $\lX$ comes directly from the definition of the semantics of these operators. The case of the operators $\lG$ and $\lU$ comes from the fixed point computation algorithm~\cite[Fig. 3]{DBLP:journals/jacm/AlurHK02} (as mentioned in the main part of the paper).
\end{proof}

We can now define the iterative approximations of the set $\mathsf{Acc}(S,\mathsf{O})$.
\begin{definition}
	\label{def_approx_acc}
	Consider a sample $S = (P,N)$ of CGS. We let $n := |Q|$. For any subset of operators $\mathsf{O} \subseteq \mathsf{OP}$, we define $(\mathsf{Apr}_i(S,\mathsf{O}))_{1 \leq i \leq 2^n} \subseteq (2^Q)^{2^n}$ as follows:
	\begin{itemize}
		\item $\mathsf{Apr}_1(S,\mathsf{O}) := \cup_{b \in \prop}\{ q \in Q \mid b \in \pi(q) \} = \cup_{b \in \prop} \SAT(b)$;
		\item for all $1 \leq i \leq 2^n-1$, $\mathsf{Apr}_{i+1}(S,\mathsf{O}) := \mathsf{Apr}_i(S,\mathsf{O}) \cup \mathsf{Upd}_{\mathsf{O}}(\mathsf{Apr}_i(S,\mathsf{O}))$. 
	\end{itemize}
	We let $\mathsf{Apr}(S,\mathsf{O}) := \mathsf{Apr}_{2^n}(S,\mathsf{O})$. 
\end{definition}

Our goal is now to show that $\mathsf{Apr}(S,\mathsf{O}) = \mathsf{Acc}(S,\mathsf{O})$. This will be done in two different lemmas. We start with the easy inclusion.
\begin{lemma}
	\label{lem:acc1}
	Given any sample $S = (P,N)$ of CGS and any subset of operators $\mathsf{O} \subseteq \mathsf{OP}$, we have: 
	\begin{equation*}
		\mathsf{Acc}(S,\mathsf{O}) \subseteq \mathsf{Apr}(S,\mathsf{O})
	\end{equation*}
\end{lemma}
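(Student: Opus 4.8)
The plan is to prove, by structural induction on the formula $\Phi \in \mathbf{ATL}(\mathsf{O})$, the statement that $\SAT(\Phi) \in \mathsf{Apr}(S,\mathsf{O})$; the desired inclusion $\mathsf{Acc}(S,\mathsf{O}) \subseteq \mathsf{Apr}(S,\mathsf{O})$ then follows immediately, since $\mathsf{Acc}(S,\mathsf{O})$ is by definition exactly the collection of sat-sets $\SAT(\Phi)$ for $\Phi \in \mathbf{ATL}(\mathsf{O})$. The crucial preliminary observation, which is what lets the induction go through for formulas of arbitrary size, is that $\mathsf{Apr}(S,\mathsf{O})$ is a \emph{fixed point} of the update operator, i.e.\ $\mathsf{Upd}_{\mathsf{O}}(\mathsf{Apr}(S,\mathsf{O})) \subseteq \mathsf{Apr}(S,\mathsf{O})$.

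First I would establish this fixed-point property. By Definition~\ref{def_approx_acc}, the sequence $(\mathsf{Apr}_i(S,\mathsf{O}))_i$ is non-decreasing for $\subseteq$, since $\mathsf{Apr}_{i+1} = \mathsf{Apr}_i \cup \mathsf{Upd}_{\mathsf{O}}(\mathsf{Apr}_i) \supseteq \mathsf{Apr}_i$. Each $\mathsf{Apr}_i(S,\mathsf{O})$ is a subset of $2^Q$, a set with exactly $2^{|Q|} = 2^n$ elements, so an increasing chain of such subsets can grow strictly at most $2^n$ times. Consequently there is an index $i \leq 2^n$ with $\mathsf{Apr}_i = \mathsf{Apr}_{i+1}$, that is, $\mathsf{Upd}_{\mathsf{O}}(\mathsf{Apr}_i) \subseteq \mathsf{Apr}_i$. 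Since the recurrence keeps the set constant once it stops growing, $\mathsf{Apr}(S,\mathsf{O}) = \mathsf{Apr}_{2^n}(S,\mathsf{O}) = \mathsf{Apr}_i(S,\mathsf{O})$ is stable, and in particular $\mathsf{Upd}_{\mathsf{O}}(\mathsf{Apr}(S,\mathsf{O})) \subseteq \mathsf{Apr}(S,\mathsf{O})$.

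Next I would run the structural induction. For the base case $\Phi = p$ with $p \in \prop$, Definition~\ref{def_approx_acc} gives $\SAT(p) \in \mathsf{Apr}_1(S,\mathsf{O}) \subseteq \mathsf{Apr}(S,\mathsf{O})$. For the inductive step, assume the claim for the immediate subformulas of $\Phi$. Each of the five shapes $\neg \Phi'$, $\Phi_1 \wedge \Phi_2$, $\fanBr{A}\lX\Phi'$, $\fanBr{A}\lG\Phi'$ and $\fanBr{A}(\Phi_1\lU\Phi_2)$ is covered by Lemma~\ref{lem:proper_behavior}, which expresses $\SAT(\Phi)$ as the matching single-operator update $\mathsf{Upd}_{o}$ applied to $\SAT(\Phi')$, respectively to $\SAT(\Phi_1)$ and $\SAT(\Phi_2)$. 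Since $\Phi \in \mathbf{ATL}(\mathsf{O})$, its top operator $o$ lies in $\mathsf{O}$; and because $\mathsf{Upd}_{\mathsf{O}}$ ranges over all coalitions $A \subseteq \Ag$ and all operators of $\mathsf{O}$ applied to members of its argument, the induction hypothesis (that the sat-sets of the subformulas are in $\mathsf{Apr}(S,\mathsf{O})$) yields $\SAT(\Phi) \in \mathsf{Upd}_{\mathsf{O}}(\mathsf{Apr}(S,\mathsf{O}))$. By the fixed-point property this is contained in $\mathsf{Apr}(S,\mathsf{O})$, closing the induction.

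The only genuinely delicate point is the stabilization bound: one must argue that $2^n$ iterations already reach the fixed point, and this follows purely from the fact that the ambient lattice $2^Q$ has $2^n$ elements, so no strictly increasing chain of subsets of $2^Q$ starting from the (non-empty, whenever $\prop \neq \emptyset$) set $\mathsf{Apr}_1(S,\mathsf{O})$ can avoid stabilizing by step $2^n$. A minor bookkeeping matter is the Boolean constants $\true$ and $\false$ if they are admitted in the fragment: their sat-sets $Q$ and $\emptyset$ must either be seeded into $\mathsf{Apr}_1(S,\mathsf{O})$ or shown to be derivable, but this does not affect the structure of the argument. Everything else is a routine unfolding of Definition~\ref{def_approx_acc} together with Lemma~\ref{lem:proper_behavior}.
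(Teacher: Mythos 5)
Your proposal is correct and follows essentially the same route as the paper's proof: first establishing that $\mathsf{Apr}(S,\mathsf{O})$ is closed under $\mathsf{Upd}_{\mathsf{O}}$ via the stabilization of the non-decreasing chain in $2^Q$, then performing the structural induction on $\mathbf{ATL}(\mathsf{O})$-formulas using Lemma~\ref{lem:proper_behavior}. Your side remark about seeding $\true$ and $\false$ is a reasonable piece of bookkeeping that the paper itself leaves implicit.
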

We proceed similarly than for the proof of Lemma~\ref{lem:charac_distinguish}.
\begin{proof}
	Let us first show that $\mathsf{Upd}_{\mathsf{O}}(\mathsf{Apr}(S,\mathsf{O})) \subseteq \mathsf{Apr}(S,\mathsf{O})$. Clearly, for all $1 \leq i \leq 2^n-1$, we have $\mathsf{Apr}_i(S,\mathsf{O})\subseteq \mathsf{Apr}_{i+1}(S,\mathsf{O})$. Therefore, since $|2^Q| = 2^n$, if $\mathsf{Apr}_{2^n}(S,\mathsf{O}) \neq 2^Q$, then there is some $1 \leq k \leq 2^n-1$ such that $\mathsf{Apr}_k(S,\mathsf{O}) = \mathsf{Apr}_{k+1}(S,\mathsf{O})$ and $\mathsf{Upd}_{\mathsf{O}}(\mathsf{Apr}_k(S,\mathsf{O})) \subseteq \mathsf{Apr}_k(S,\mathsf{O})$. In such a case, for all $k \leq i \leq 2^n$, we have $\mathsf{Apr}_i(S,\mathsf{O}) = \mathsf{Apr}_k(S,\mathsf{O})$ and $\mathsf{Upd}_{\mathsf{O}}(\mathsf{Apr}_i(S,\mathsf{O})) \subseteq \mathsf{Apr}_i(S,\mathsf{O})$. Hence, we do have $\mathsf{Upd}_{\mathsf{O}}(\mathsf{Apr}(S,\mathsf{O})) \subseteq \mathsf{Apr}(S,\mathsf{O})$.
	
	We now show by induction on $\mathbf{ATL}(\mathsf{O})$-formulas $\Phi$ the property $\mathcal{P}(\Phi)$: $\SAT(\Phi) \in \mathsf{Apr}(S,\mathsf{O})$. 
	
	Consider an $\mathbf{ATL}(\mathsf{O})$-formula $\Phi \in \prop$. By definition, we have $\SAT(\Phi) \in \mathsf{Apr}_1(S,\mathsf{O}) \subseteq \mathsf{Apr}(S,\mathsf{O})$. Hence, $\mathcal{P}(\Phi)$ holds. 
	
	Consider any unary operator $u \in \{ \neg,\lX,\lG\}$. If $u \in \mathsf{O}$, then for any $\mathbf{ATL}(\mathsf{O})$-formula $\Phi = (\fanBr{A}) \; u \; \Phi'$ (where $\fanBr{A}$ appears, for some coalition of agents $A \subseteq \Ag$, iff $u \neq \neg$), if $\SAT(\Phi') \in \mathsf{Apr}(S,\mathsf{O})$, then by Lemma~\ref{lem:proper_behavior}, we have $\SAT(\Phi) \in \mathsf{Upd}_{\mathsf{O}}(\mathsf{Apr}(S,\mathsf{O})) \subseteq \mathsf{Apr}(S,\mathsf{O})$. That is, $\mathcal{P}(\Phi')$ implies $\mathcal{P}(\Phi)$.
	
	Similarly, consider any binary operator $b \in \{ \wedge,\lU\}$. If $b \in \mathsf{O}$, then for any $\mathbf{ATL}(\mathsf{O})$-formula $\Phi = (\fanBr{A}) \Phi_1 \; b \; \Phi_2$ (where $\fanBr{A}$ appears, for some coalition of agents $A \subseteq \Ag$, iff $b \neq \wedge$), if $\SAT(\Phi_1),\SAT(\Phi_2) \in \mathsf{Apr}(S,\mathsf{O})$, then by Lemma~\ref{lem:proper_behavior}, we have $\SAT(\Phi) \in \mathsf{Upd}_{\mathsf{O}}(\mathsf{Apr}(S,\mathsf{O})) \subseteq \mathsf{Apr}(S,\mathsf{O})$. That is, $\mathcal{P}(\Phi_1)$ and $\mathcal{P}(\Phi_2)$ imply $\mathcal{P}(\Phi)$.
	
	Therefore, $\mathcal{P}(\Phi)$ holds for all $\mathbf{ATL}(\mathsf{O})$-formulas $\Phi$. Therefore, $\mathsf{Acc}(S,\mathsf{O}) \subseteq \mathsf{Apr}(S,\mathsf{O})$.
\end{proof}

We now state and prove a lemma establishing the reverse inclusion. We additionally establish in this lemma a bound on the size of the $\mathbf{ATL}(\mathsf{O})$-formulas that are sufficient to consider% that we need to consider
.
\begin{lemma}
	\label{lem:acc2}
	Consider a sample $S = (P,N)$ of CGS and a subset of operators $\mathsf{O} \subseteq \mathsf{OP}$. For all $T \in \mathsf{Apr}(S,\mathsf{O})$, there is an $\mathbf{ATL}(\mathsf{O})$-formula $\Phi$ of size at most $2^n$ such that $\SAT(\Phi) = T$.
	
	In particular, this implies $\mathsf{Apr}(S,\mathsf{O}) \subseteq \mathsf{Acc}(S,\mathsf{O})$. 
\end{lemma}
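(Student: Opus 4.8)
The plan is to assign to every set $T \in \mathsf{Apr}(S,\mathsf{O})$ a canonical $\mathbf{ATL}(\mathsf{O})$-formula $\Phi_T$ with $\SAT(\Phi_T) = T$, doing so in a way that makes the size bound fall out of the fact that $|\Phi|$ counts \emph{distinct} subformulas (the syntax-DAG size) together with the observation that $\mathsf{Apr}(S,\mathsf{O}) \subseteq 2^Q$ has at most $2^n$ elements. Concretely, I would define $\Phi_T$ by induction on $\mathrm{lev}(T) := \min\{ i \mid T \in \mathsf{Apr}_i(S,\mathsf{O})\}$. For $\mathrm{lev}(T) = 1$ we have $T = \SAT(b)$ for some $b \in \prop$ and set $\Phi_T := b$. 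For $\mathrm{lev}(T) = i+1$, membership $T \in \mathsf{Upd}_{\mathsf{O}}(\mathsf{Apr}_i(S,\mathsf{O}))$ means $T$ is produced by one of the update functions applied to argument sets of level at most $i$, together with some coalition $A$ where relevant; I then set $\Phi_T$ to be the matching operator applied to the already-defined canonical formulas of those arguments, e.g.\ $\Phi_T := \fanBr{A}(\Phi_{T_1} \lU \Phi_{T_2})$ when $T = \mathsf{Upd}_{\lU}(S,A,T_1,T_2)$, and analogously for $\neg$, $\wedge$, $\fanBr{A}\lX$, and $\fanBr{A}\lG$.

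The correctness part $\SAT(\Phi_T) = T$ is then immediate from Lemma~\ref{lem:proper_behavior}, by induction on the level: the semantics of each operator matches exactly the corresponding update function. This already yields the inclusion $\mathsf{Apr}(S,\mathsf{O}) \subseteq \mathsf{Acc}(S,\mathsf{O})$, since $T = \SAT(\Phi_T)$ witnesses $T \in \mathsf{Acc}(S,\mathsf{O})$.

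For the size bound, the crucial invariant to carry through the induction is that every subformula of $\Phi_T$ is itself one of the canonical formulas, i.e.\ $\msf{SubF}(\Phi_T) \subseteq \{ \Phi_{T'} \mid T' \in \mathsf{Apr}(S,\mathsf{O})\}$. This holds because $\msf{SubF}(\Phi_T)$ equals $\{\Phi_T\}$ together with the subformula sets of its (canonical) operands, which are canonical by the induction hypothesis. Since $\SAT(\Phi_T) = T$, the map $T \mapsto \Phi_T$ is injective, so distinct canonical formulas correspond to distinct sets; hence $|\Phi_T| = |\msf{SubF}(\Phi_T)| \le |\mathsf{Apr}(S,\mathsf{O})| \le |2^Q| = 2^n$.

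I expect the main obstacle to be essentially bookkeeping: insisting that the construction \emph{always} plug in canonical operands (rather than arbitrary formulas realizing the argument sets) so that the subformula-containment invariant, and therefore the counting argument, goes through. It is exactly here that the DAG-based size measure is indispensable---an analogous argument with tree size would let each binary update roughly double the formula size and destroy the $2^n$ bound.
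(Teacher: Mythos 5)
Your proposal is correct and follows essentially the same route as the paper: the paper likewise defines a canonical map $f:\mathsf{Apr}(S,\mathsf{O})\rightarrow\mathbf{ATL}(\mathsf{O})$ by induction on the approximation index, carries the invariant $\mathsf{SubF}(f(T))\subseteq f(\mathsf{Apr}_i(S,\mathsf{O}))$ through the case split on the update functions, uses Lemma~\ref{lem:proper_behavior} for $\SAT(f(T))=T$, and concludes $|\Phi|=|\mathsf{SubF}(\Phi)|\leq|\mathsf{Apr}_{2^n}(S,\mathsf{O})|\leq 2^n$. Your identification of the DAG-based size measure and the canonical-operand discipline as the crux is exactly right (the injectivity remark is harmless but not needed, since bounding by the cardinality of the image of $\mathsf{Apr}(S,\mathsf{O})$ under $f$ already suffices).
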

\begin{proof}
	%We define inductively on $\mathsf{Apr}(S,\mathsf{O}) = \cup_{1 \leq i \leq 2^n} \mathsf{Apr}_i(S,\mathsf{O})$ a function $f: \mathsf{Apr}(S,\mathsf{O}) \rightarrow \mathbf{ATL}(\mathsf{O})$ such that that satisfies the property $\mathcal{P}(i)$: for all $T \in \mathsf{Apr}_i(S,\mathsf{O})$,  $\mathsf{SubF}(f(T)) \subseteq f(\mathsf{Apr}_i(S,\mathsf{O}))$.
	
	We define a function $f: \mathsf{Apr}(S,\mathsf{O}) \rightarrow \mathbf{ATL}(\mathsf{O})$ such that, for all $T \in \mathsf{Apr}(S,\mathsf{O})$, we have $\SAT(f(T)) = T$. We define this function inductively on $1 \leq i \leq 2^n$ such that it satisfies the property $\mathcal{P}(i)$: for all $T \in \mathsf{Apr}_i(S,\mathsf{O})$,  $\mathsf{SubF}(f(T)) \subseteq f(\mathsf{Apr}_i(S,\mathsf{O}))$.
	
	Consider any set $T \in \mathsf{Apr}_1(S,\mathsf{O})$. By definition, there is some proposition $p \in \prop$ such that $T = \SAT(p)$. We let $f(T) := p$, which ensures that the property $\mathcal{P}(1)$ holds. 
	
	Assume now that the function $f$ is defined on $\mathsf{Apr}_i(S,\mathsf{O})$ for some $1 \leq i \leq 2^n-1$ and that $\mathcal{P}(i)$ holds. Let us define $f$ on $\mathsf{Apr}_{i+1}(S,\mathsf{O})$. Let $T \in \mathsf{Apr}_{i+1}(S,\mathsf{O}) \setminus \mathsf{Apr}_i(S,\mathsf{O})$. There are several (quite similar) cases:
	\begin{itemize}
		\item Assume that $\neg \in \mathsf{O}$ and that $T = \mathsf{Upd}_{\neg}(S,T')$ for some $T' \in \mathsf{Apr}_i(S,\mathsf{O})$% and that $\mathcal{P}(T')$ holds
		. Then, we let $f(T) := \neg f(T') \in \mathbf{ATL}(\mathsf{O})$. By Lemma~\ref{lem:proper_behavior} and $\mathcal{P}(i)$, we have $\SAT(f(T)) = \mathsf{Upd}_{\neg}(S,\SAT(f(T'))) = \mathsf{Upd}_{\neg}(S,T') = T$. Furthermore, $\mathsf{SubF}(f(T)) = \{f(T)\} \cup \mathsf{SubF}(f(T')) \subseteq f(\mathsf{Apr}_{i+1}(S,\mathsf{O}))$. %Hence, $\mathcal{P}(T)$ holds. 
		\item Let $\bullet \in \{\lX,\lG\}$. Assume that $\bullet \in \mathsf{O}$ and that $T = \mathsf{Upd}_{\bullet}(S,A,T')$ for some $T' \in \mathsf{Apr}_i(S,\mathsf{O})$ and $A \subseteq \Ag$% and that $\mathcal{P}(T')$ holds
		. Then, we let $f(T) := \fanBr{A} \bullet f(T') \in \mathbf{ATL}(\mathsf{O})$. By Lemma~\ref{lem:proper_behavior} and $\mathcal{P}(i)$, we have $\SAT(f(T)) = \mathsf{Upd}_{\bullet}(S,A,\SAT(f(T'))) = \mathsf{Upd}_{\bullet}(S,A,T') = T$. Furthermore, $\mathsf{SubF}(f(T)) = \{f(T)\} \cup \mathsf{SubF}(f(T')) \subseteq f(\mathsf{Apr}_{i+1}(S,\mathsf{O}))$. %Hence, $\mathcal{P}(T)$ holds.
		\item Assume that $\wedge \in \mathsf{O}$ and that $T = \mathsf{Upd}_{\wedge}(S,T_1,T_2)$ for some $T_1,T_2 \in \mathsf{Apr}_i(S,\mathsf{O})$% and that $\mathcal{P}(T_1)$ and $\mathcal{P}(T_2)$ hold
		. Then, we let $f(T) := f(T_1) \wedge f(T_2) \in \mathbf{ATL}(\mathsf{O})$. By Lemma~\ref{lem:proper_behavior} and $\mathcal{P}(i)$, we have $\SAT(f(T)) = \mathsf{Upd}_{\wedge}(S,\SAT(f(T_1)),\SAT(f(T_2))) = \mathsf{Upd}_{\wedge}(S,T_1,T_2) = T$. Furthermore, $\mathsf{SubF}(f(T)) = \{f(T)\} \cup \mathsf{SubF}(f(T_1)) \cup \mathsf{SubF}(f(T_2)) \subseteq f(\mathsf{Apr}_{i+1}(S,\mathsf{O}))$. %Hence, $\mathcal{P}(T)$ holds. 
		\item Assume that $\lU \in \mathsf{O}$ and that $T = \mathsf{Upd}_{\lU}(S,A,T_1,T_2)$ for some $T_1,T_2 \in \mathsf{Apr}_i(S,\mathsf{O})$ and $A \subseteq \Ag$% and that $\mathcal{P}(T_1)$ and $\mathcal{P}(T_2)$ hold
		. Then, we let $f(T) := \fanBr{A} f(T_1) \lU f(T_2) \in \mathbf{ATL}(\mathsf{O})$. By Lemma~\ref{lem:proper_behavior} and $\mathcal{P}(i)$, we have $\SAT(f(T)) = \mathsf{Upd}_{\lU}(S,A,\SAT(f(T_1)),\SAT(f(T_2))) = \mathsf{Upd}_{\lU}(S,A,T_1,T_2) = T$. Furthermore, $\mathsf{SubF}(f(T)) = \{f(T)\} \cup \mathsf{SubF}(f(T_1)) \cup \mathsf{SubF}(f(T_2)) \subseteq f(\mathsf{Apr}_{i+1}(S,\mathsf{O}))$. %Hence, $\mathcal{P}(T)$ holds. 
	\end{itemize}
	Therefore, $\mathcal{P}(i+1)$ holds. In fact, $\mathcal{P}(i)$ holds for all $1 \leq i \leq 2^n$. 
	
	Hence, for all $T \in \mathsf{Apr}(S,\mathsf{O}) = \mathsf{Apr}_{2^n}(S,\mathsf{O})$, there is an $\mathbf{ATL}(\mathsf{O})$-formula $\Phi$ such that $\SAT(\Phi) = T$ and $\mathsf{SubF}(\Phi) \subseteq f(\mathsf{Apr}_{2^n}(S,\mathsf{O}))$. Therefore, $|\Phi| = |\mathsf{SubF}(\Phi)| \leq |\mathsf{Apr}_{2^n}(S,\mathsf{O})| \leq |2^Q| = 2^n$. 
\end{proof}

The exponential bound on the size of formulas is established in Lemma~\ref{lem:acc2}. We can then deduce Lemma~\ref{lem:decide_exponential_time_ATLn} by carefully looking at the complexity of computing the set $\mathsf{Apr}_i(S,\mathsf{O})$. This is done in Appendix~\ref{subsec:complexity_exponential}. Then, Theorem~\ref{thm:bound_size_formulas} follows directly from Lemma~\ref{lem:decide_exponential_time_ATLn}.

\section{Experimental Evaluation}
\label{sec:implem}

\begin{figure}[t]
			\begin{tikzpicture}
			\begin{axis}[%[log origin=infty] 
			height=42mm,
			width=50mm,
			xmin=20, ymin=-10, ymax=2000,
			enlarge x limits=true, enlarge y limits=true,
			xlabel = {Number of examples},
			ylabel = {Runtime (in secs)},
			xtick={20,40,60,80,100,120},%
			ytick={0,1000,2000},
			yticklabels = {0,1K,2K},
			ytickten= {0, 1, 2, 3},%
			%extra x ticks={2000}, extra x tick labels={\strut TO},%
			extra y ticks={2400}, %extra y tick labels=\empty%, 
			extra y tick labels={\strut TO},
			xminorticks=false,
			yminorticks=false,
			xlabel near ticks,
			ylabel near ticks,
			%xtick style={draw=major ticks},
			%ytick style={draw=none},
			label style={font=\footnotesize},
			x tick label style={font={\strut\tiny}},
			y tick label style={font={\strut\tiny}},		
			x label style = {yshift=1mm},
			legend columns = 1,
			legend style={title={CTL learning}, at={(3.17,1.18)}, font=\scriptsize, draw=none, fill=none},
			title = {CTL learning},
			title style = {font=\footnotesize, inner sep=-4pt}
			]
			
			\addlegendimage{empty legend}
			\addlegendentry{\hspace{-2mm}\textbf{CTL formulas}}
			
			\addplot[color=red,mark=x,opacity=0.7]
			plot coordinates {
				(20,10.48)
				(40,39.29)
				(60,82.13)
				(80,142.38)
				(100,216.02)
				(120,315.88)
			};
			\addlegendentry{$\lA\lG(\lA\lF(p))$}
			\addplot[mark=*,green,opacity =0.7] plot coordinates {
				(20,110.39)
				(40,305.97)
				(60,634.72)
				(80,801)
				(100,1456.74)
				(120,1415.79)
			};
			\addlegendentry{$\lA\lG(\neg p\!\lor\!\neg q)$}
			
			\addplot[mark=+,orange,opacity =0.7] plot coordinates {
				(20,101.77)
				(40,243.21)
				(60,544.60)
				(80,822.465)
				(100,1328.61)
				(120,1430.26)
			};
			\addlegendentry{$\lA\lG(p\!\rightarrow\!\lA\lF(q))$}
			 
			\addplot[mark=diamond,blue,opacity =0.7] plot coordinates {
				(20,11.315)
				(40,340.74)
				(60,740.52)
				(80,1020.16)
				(100,1472.84)
				(120,1590.76)
			};
			\addlegendentry{$\lA\lG(\lA\lF(p)\!\land\!\lA\lF(q))$}

			%				\addplot[mark=+,green,opacity =0.7] plot coordinates {
			%					(20,10.62)
			%					(40,38.68)
			%					(60,85.08)
			%					(80,147.78)
			%					(100,238.15)
			%					(120,341.37)
			%				};
			%				\addlegendentry{$\lA\lG(\lE\lF(q))$}
			
%			\addplot[mark=*,green,opacity =0.7] plot coordinates {
%				(20,95.01)
%				(40,206.38)
%				(60,450.86)
%				(80,767.88)
%				(100,1126.43)
%				(120,1372.21)
%			};
%			\addlegendentry{$\lA\lG(p\!\lor\!\lA\lX(\lnot q))$}
0			\end{axis}
			\hspace{35mm}
			\begin{axis}[%[log origin=infty] 
			height=42mm,
			width=50mm,
			xmin=10, ymin=0, ymax=2000,
			enlarge x limits=true, enlarge y limits=true,
			xlabel = {Number of examples},
			ylabel = {},
			xtick={10,20,30,40,50,60},%
			ytick={0,1000,2000},
			%ytickten= {0, 1, 2, 3},%
			%extra x ticks={2000}, extra x tick labels={\strut TO},%
			%extra y ticks={2400}, %extra y tick labels=\empty%, 
			%extra y tick labels={\strut TO},
			xminorticks=false,
			yminorticks=false,
			xlabel near ticks,
			ylabel near ticks,
			%xtick style={draw=major ticks},
			%ytick style={draw=none},
			label style={font=\footnotesize},
			x tick label style={font={\strut\tiny}},
			yticklabels = {},
			y tick label style={font={\strut\tiny}},		
			x label style = {yshift=1mm},
			legend columns = 1,
			legend style={at={(1.65,0.35)},font=\scriptsize,
				anchor=north, draw=none, fill=none},
			title = {ATL learning},
			title style = {font=\footnotesize, inner sep=-4pt}
			]
						
			\addlegendimage{empty legend}
			\addlegendentry{\hspace{-3mm}\textbf{ATL formulas}}
			
%			\addplot[color=olive,mark=x,opacity =0.8]
%			plot coordinates {
%				(10, 5.5)
%				(20, 21.8)
%				(30, 47.69)
%				(40, 84.12)
%				(50, 131.62)
%				(60, 189.72)
%			};
%			\addlegendentry{$\fanBr{1}\lX(p)$}
			
			\addplot[mark=+,olive,opacity =0.7] plot coordinates {
								(10, 4.9)
								(20, 22.72)
								(30, 48.80)
								(40, 82.75)
								(50, 131.186)
								(60, 186.65)
							};
			\addlegendentry{$\fanBr{0}\lF(q)$}
			
			\addplot[mark=star,magenta,opacity =0.7] plot coordinates {
				(10, 45.74)
				(20, 182.86)
				(30, 415.181)
				(40, 748.00)
				(50, 1155.71)
				(60, 1484.91)
			};
			\addlegendentry{$\fanBr{}\lG(p\!\rightarrow\!\fanBr{1}\lG(p))$}

			\addplot[mark=triangle,black,opacity =0.7] plot coordinates {
				(10, 41.59)
				(20, 392.17)
				(30, 835.11)
				(40, 1424.98)
				(50, 1586.71)
				(60, 1752.46)
				
			};
			\addlegendentry{$\fanBr{}\lG(p\rightarrow \fanBr{0,\!1}\lF(q))$}
			
				\addplot[mark=*,teal,opacity =0.7] plot coordinates {
				(10, 90.92)
				(20, 1012.32)
				(30, 1690.90)
				(40, 1698.69)
				(50, 1893.51)
				(60, 1924.78)
			};
			\addlegendentry{$\fanBr{}\lG((p\!\wedge\!\neg q)\!\rightarrow\!\fanBr{1}\lG(p))$}

			%			\addplot[mark=triangle,black,opacity =0.7] plot coordinates {
			%				(10, 107.31)
			%				(20, 391.87)
			%				(30, 803.79)
			%				(40, 1425.39)
			%				(50, 1613.98)
			%				(60, 1788.75)
			%			};
			%			\addlegendentry{$\fanBr{}\lG(p\!\rightarrow\!\fanBr{1}\lX(q))$ }
				
			\end{axis}
			
			\end{tikzpicture}

	\caption{Runtime of CTL and ATL learning algorithms on samples with varying number of examples (considering structures of size $\leq 20$).}
	\label{fig:runtime_results}
\end{figure}
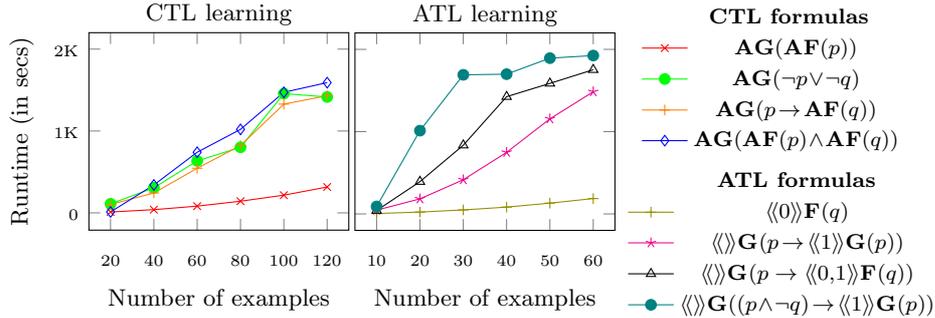

To test the ability of our learning algorithms (from Section~\ref{sec:sat-based}), we implement them in an open-source prototype\footnote{\url{https://github.com/rajarshi008/learning-branching-time}}.
The prototype is developed in Python3 and utilizes the PySMT library~\cite{pysmt2015}, providing us access to an array of SAT solvers.
Moreover, it offers configurable parameters for learning, including options to specify the model type (KS or CGS) and the formula type (CTL or ATL).
%\renewcommand{\arraystretch}{1.2}
%\begin{table}[h]
%	\vspace{-3mm}
%	\centering
%	\caption{Examples of CTL and ATL formulas used for generating benchmarks}
%	\label{tab:TL-patterns}
%	\setlength{\tabcolsep}{8pt}
%	%\resizebox{\linewidth}{!}{%
%	\begin{tabular}{|c|c|}
%		\hline
%		CTL formulas & ATL formulas \\
%		\hline
%		$\lA\lG(\lA\lF(p))$ & $\fanBr{1}\lX(p)$ \\
%		$\lA\lG(p \rightarrow \lA\lF(q))$ & $\fanBr{}\lG(p\rightarrow \fanBr{1}\lG(p))$ \\
%		$\lA\lG(\neg p \lor \neg q)$ & $\fanBr{}\lG(p\rightarrow \fanBr{0,\!1}\lF(q))$  \\
%		$\lA\lG(\lA\lF(p) \land \lA\lF(q))$ &  $\fanBr{}\lG((p\wedge \neg q)\rightarrow \fanBr{1}\lG(p))$ \\
%		\hline
%	\end{tabular}
%	\vspace{-1mm}
%\end{table}

To test various aspects of our algorithms, we rely on synthetic benchmarks generated from a set of chosen formulas as is common in the literature~\cite{flie,CamachoM19}.
We first identified some common CTL and ATL formulas from standard sources: the book on model-checking by Baier and Katoen~\cite[Example 6.3]{model-checking-book} for CTL and the seminal work by Alur et al.~\cite[Examples 3.1,3.2]{AlurATL} for ATL.
Fig.~(\ref{fig:runtime_results}) displays a selection of formulas (see Appendix~\ref{subsec:formula-list} for the full list); we used abstract propositions and players in the formulas to maintain uniformity.
The CTL formulas reflect properties occurring in distributed systems, such as mutual exclusion, request-response progress, etc.
The ATL formulas describe various properties of a train-controller (multi-agent) system.

We generated samples following a random generation technique.
To construct a structure $C$ for a sample, we iteratively added states to $C$ ensuring its connectedness; assigned random propositions to each state of $C$; and added random edges, labeled with action tuples if generating a CGS.
For CGSs, we focussed on generating turn-based games, i.e., games where only one player has actions in each state, due to their prevalence in verification and synthesis.
We split the randomly generated structures into $P$ and $N$ based on the selected formula.

Overall, from six CTL formulas and six ATL formulas, we generated two benchmark suites, the first one consisting of 144 samples of KSs and the second one consisting of 144 samples of CGSs, respectively.
The number of examples in the suites ranges from 20 to 120 and 10 to 60, respectively.
Moreover, the sizes of the KSs range from 1 to 40, while the size of the CGSs range from 1 to 20. 

All tests were run on Intel Xeon Gold 6142 CPU (at 2.6~GHz) using up to 10~GB of RAM using MathSAT solver~\cite{DBLP:conf/tacas/CimattiGSS13}\footnote{MathSAT~\cite{DBLP:conf/tacas/CimattiGSS13} performed better than Z3~\cite{MouraB08} and Boolector~\cite{DBLP:conf/tacas/BrummayerB09} in our experiments.}, with a timeout of 2400 seconds.

How effective are the algorithms in learning CTL/ATL formulas?
To answer this, we ran the CTL and ATL learning algorithms on the first and second benchmark suites, respectively.
Fig.~\ref{fig:runtime_results} depicts runtimes of both algorithms for samples generated from a selection of formulas (full results are available in Appendix~\ref{subsec:full-experiments}).
Both algorithms exhibit reasonable runtime performance for small samples.
As the samples size increases, the runtime also increases, more prominently for the larger formulas in our selection.
For the smaller formulas, however, the runtime of the algorithms remains reasonable even when the samples size increases.

\begin{figure}[t]
	\centering
	\begin{subfigure}[b]{0.4\textwidth}
		\centering
		\begin{tikzpicture}
		\begin{loglogaxis}[
		height=40mm,
		width=40mm,
		xmin=1e-1, ymin=1e-1,
		xmax=3600,ymax=3600,
		enlarge x limits=true, enlarge y limits=true,
		xlabel = {Runtime with $\lU$},
		ylabel = {Runtime without $\lU$},
		y label style={xshift=1mm,yshift=-2mm},
		xtickten={-1, 0, 1, 2},%
		ytickten={-1, 0, 1, 2},%
		extra x ticks={3600}, extra x tick labels={\strut TO},%
		extra y ticks={3600}, %extra y tick labels=\empty%, 
		extra y tick labels={\strut TO},
		xminorticks=false,
		yminorticks=false,
		xlabel near ticks,
		ylabel near ticks,
		title = {CTL learning},
		title style = {font=\footnotesize, inner sep=-4pt},
		%xtick style={draw=major ticks},
		%ytick style={draw=none},
		label style={font=\footnotesize},
		x tick label style={font={\strut\tiny}},
		y tick label style={font={\strut\tiny}},		
		x label style = {yshift=1mm},
		]
		
		\addplot[
		color=black,
		scatter=false,
		only marks,
		mark=x,
		mark size=1.2,
		]
		table [col sep=comma,x={Total_Time_U},y={Total_Time_WU}] {data/CTL_comp_runtimes.csv};
		
		% Diagonal
		\draw[black!25] (rel axis cs:0, 0) -- (rel axis cs:1, 1);
		
		\end{loglogaxis}
		\end{tikzpicture}
		\caption{Improvement by dropping $\lU$}
		\label{fig:improve1}
	\end{subfigure}
	\hskip .5cm
	\begin{subfigure}[b]{0.4\textwidth}
		\centering
		\begin{tikzpicture}
		\begin{loglogaxis}[
		height=40mm,
		width=40mm,
		xmin=1e-1, ymin=1e-1,
		xmax=3600,ymax=3600,
		enlarge x limits=true, enlarge y limits=true,
		xlabel = {Runtime with $\Omega_{q,i}^{\mathsf{Pre}}$},
		ylabel = {Runtime with $\bar{\Omega}_{q,i}^{\mathsf{Pre}}$},
		xtickten={-1, 0, 1, 2},%
		ytickten={-1, 0, 1, 2},
		extra x ticks={3600}, extra x tick labels={\strut TO},%
		extra y ticks={3600}, %extra y tick labels=\empty%,
		extra y tick labels={\strut TO}, 
		extra y tick labels={\strut TO},
		xminorticks=false,
		yminorticks=false,
		xlabel near ticks,
		ylabel near ticks,
		%xtick style={draw=major ticks},
		%ytick style={draw=none},
		label style={font=\footnotesize},
		x tick label style={font={\strut\tiny}},		
		x label style = {yshift=1mm},
		y tick label style={font={\strut\tiny}},
		title = {ATL learning},
		title style = {font=\footnotesize, inner sep=-4pt}
		]
		
		\addplot[
		scatter=false,
		only marks,
		mark=x,
		mark size=1.2,
		]
		table [col sep=comma ,x={Total_Time},y={Total_Time_TB}] {data/ATL_comp_runtimes.csv};
		%			
		% Diagonal
		\draw[black!25] (rel axis cs:0, 0) -- (rel axis cs:1, 1);
		
		\end{loglogaxis}
		\end{tikzpicture}
		\caption{Turn-based games improvement}
		\label{fig:improve2}
	\end{subfigure}
	\caption{Runtime improvement with optimized encodings for CTL and ATL learning on first and second benchmark suites, respectively. The runtimes are in seconds and \textquotedblleft TO\textquotedblright{} represents timeout.}
	\label{fig:improvements}
\end{figure}
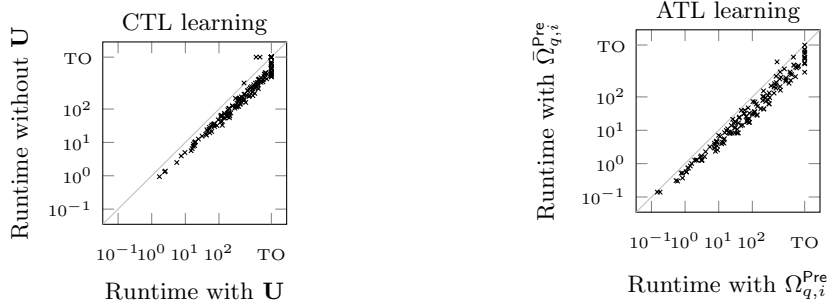

Which parts of the SAT encoding contribute significantly towards the runtime?
To understand this, we profiled both algorithms to identify the constraints responsible for significant runtime increases.
Notably, generating constraints for the $\lU$ operators (i.e., $\lE\lU$, $\lA\lU$ for CTL and $\fanBr{\cdot}\lU$ for ATL) turned out to be the most time-consuming among all operators.
Consequently, we compared runtime performance with and without these operators. (Note that we included $\lF$ and $\lG$ operators in both cases.) 
Learning without the $\lU$ operator is often justified, since several properties do not require the $\lU$ operator~\cite{scarlet}.
Fig.~\ref{fig:improve1} depicts this comparison for CTL learning; the average runtime improvement is 46\%.

Moreover, computing the constraints for $\Omega_{q,i}^{\mathsf{Pre}}(y)$ happens to be expensive due to the nesting of conjunction and disjunction of propositional formulas.
As a result, we noticed an improvement when we used the following optimized encoding designed specifically for turn-based games:
\[\bar{\Omega}_{q,i}^{\mathsf{Pre}}(y_S) \coloneqq [A_{i, \sigma(q)} \Rightarrow \bigvee_{\alpha\in \mathsf{Act}_\Ag(q)} y_S(\delta(q,\alpha)] \wedge [\neg A_{i, \sigma(q)} \Rightarrow \bigwedge_{\alpha\in\mathsf{Act}_\Ag(q)} y_S(\delta(q,\alpha)],\] where $\sigma(q)$ denotes the Agent owning the state $q$.
Fig.~\ref{fig:improve2} illustrates this improvement; the average runtime improvement is 52\%.

Overall, we demonstrated that our learning algorithms can successfully infer several CTL and ATL formulas from samples of varying sizes.

\section{Conclusion}
In this work, we address the passive learning problem for CTL and ATL, proposing a constraint-solving technique that encodes search and model-checking of formulas in propositional logic. 
We additionally investigate the separability of ATL formulas and develop decision procedures for it. 
Our experimental evaluations validate the efficacy of our algorithms in inferring useful CTL/ATL formulas.

As future work, we like to explore the computational hardness of the passive learning problems for CTL/ATL, similar to~\cite{DBLP:conf/icgi/FijalkowL21}.
We also plan to improve our prototype by adding heuristics, such as the ones discussed in~\cite{DBLP:journals/corr/abs-2402-06366}.
Finally, we like to lift our techniques to probabilistic logics such as PCTL~\cite{DBLP:journals/fac/HanssonJ94} and PATL~\cite{DBLP:conf/fskd/ChenL07a}.

\bibliographystyle{splncs04}
\bibliography{refs}

\appendix
%!TEX root = LearnATLarXiv.tex
\appendix
\section{Structural conditions}
\label{appen:structural_cond}
As mentioned in the main part of the paper, we impose structural constraints on the introduced variables to ensure that they do encode an ATL formula.
The constraints are similar to the ones proposed by Neider and Gavran~\cite{flie}. We define:
\begin{align*}
	\Omega_\Lambda & := \Big[ \bigwedge_{1\leq i \leq n} \bigvee_{\lambda \in \Lambda} x_{i,\lambda} \Big] \land \Big[\bigwedge_{1 \leq i \leq n} \bigwedge_{\lambda \neq \lambda' \in \Lambda} \lnot x_{i,\lambda} \lor \lnot x_{i,\lambda'}  \Big],%\label{eq:label-unique}
	\\
	\Omega_l & := \Big[ \bigwedge_{2\leq i \leq n} \bigvee_{1\leq j \leq i} l_{i,j} \Big] \land \Big[\bigwedge_{2 \leq i \leq n} \bigwedge_{1 \leq j < j' < i} \lnot l_{i,j} \lor \lnot l_{i,j'}  \Big],%\label{eq:left-unique}
	\\
	\Omega_r & := \Big[ \bigwedge_{2\leq i \leq n} \bigvee_{1\leq j \leq i} r_{i,j} \Big] \land \Big[\bigwedge_{2 \leq i \leq n} \bigwedge_{1 \leq j < j' < i} \lnot r_{i,j} \lor \lnot r_{i,j'}  \Big],%\label{eq:right-unique}
	\\
	\Omega_\prop & := \bigvee_{p\in\prop} x_{1,p}%\label{eq:only-prop}
\end{align*}

%where $\Lambda = \{\neg,\lor,\fanBr{\cdot}\lX,\fanBr{\cdot}\lU,\fanBr{\cdot}\lG \}$. 
The formula $\Omega_\Lambda$ ensures that each node is labeled by exactly one operator or one proposition. The formulas $\Omega_l$ and $\Omega_r$ enforce that each node (except for Node 1) has exactly one left and exactly one right child, respectively.
Finally, the formula $\Omega_\prop$ ensures that Node~$1$ is labeled by a proposition. 

Based on a satisfying valuation $v$ of $\Omega^{\mathsf{str}}$, one can straightforwardly construct a unique ATL formula as follows: label each Node~$i$ with the unique operator $\lambda$ for which $v(x_{i,\lambda})=1$, taking into account the variables $(A_{i,a})_{a \in \Ag}$ if $\lambda \in \{\fanBr{\cdot}\lX,\fanBr{\cdot}\lG,\fanBr{\cdot}\lU\}$, and mark the left and right children of the node with the unique Nodes $j$ and $j'$ for which $v(l_{i,j})=1$ and $v(r_{i,j'})=1$, respectively.

\section{Correctness of the encoding}
%\label{sec:appen_correctness}
\label{subsec:appen_correctness}
Consider a concurrent game structure $C = \langle Q,I,k,\prop, \pi, \sigma, d,\delta \rangle$ and some integer $n \in \N^+$. We consider the formulas $\propformula^{\mathsf{str}}_n$ and $\propformula^{\mathsf{sem}}_n$. We let $X_n(C)$ denote the variables occurring in these formulas. We have:
\begin{align*}
	X_n(C) = & X^\mathsf{Var}_n \cup X^{\mathsf{Struct}}_C
\end{align*}
with
\begin{align*}
	X^\mathsf{Var}_n := \; & \{ x_{i,\lambda} \mid 1 \leq i \leq n,\; \lambda \in \Lambda \} \cup \{ A_{i,a} \mid 1 \leq i \leq n,\; a \in \Ag \} \cup \\
	& \{ l_{i,j},r_{i,j} \mid 1 \leq i \leq n,\; 1 \leq j \leq i-1 \}
\end{align*}
and 
\begin{align*}
	X^{\mathsf{Struct}}_{C} := & \; \{ y^C_{i,q} \mid 1 \leq i \leq n,\; q \in Q \} \cup \\
	& \; \{ y^C_{i,q,k} \mid 1 \leq i \leq n,\; q \in Q,\; 1 \leq k \leq |Q|+1 \}
\end{align*}

The proof of Proposition~\ref{prop:pass-learning-correctness} is rather long and somewhat tedious. However, note that there is not much difficulty involved. The length of the proof is due to the several cases of all the operators to handle.

First of all, we define formally the ATL formula that a valuation $v: X^\mathsf{Var}_n \rightarrow \B$ satisfying the $\propformula^{\mathsf{str}}_n$ encodes.
\begin{definition}
	\label{def:formula_from_valuation}
	Consider a valuation $v: X^\mathsf{Var}_n \rightarrow \B$ satisfying the formula $\propformula^{\mathsf{str}}_n$. We define $\msf{Op}: \llbracket 1,n \rrbracket \rightarrow \prop \cup \msf{Op}(\prop,\Ag)_{\mathsf{Un}} \cup \msf{Op}(\prop,\Ag)_{\mathsf{Bin}}$ with $\msf{Op}(\prop,\Ag)_{\mathsf{Un}} := \{\neg,\fanBr{A} \lX,\fanBr{A} \lG \mid A \subseteq \Ag \}$ and $\msf{Op}(\prop,\Ag)_{\mathsf{Bin}} := \{\wedge,\fanBr{A} \lU \mid A \subseteq \Ag \}$. Let $i \in \llbracket 1,n \rrbracket$. Since $v \models \propformula^{\mathsf{str}}_n$, there is a unique $\lambda \in \Lambda$ such that $v(x_{i,\lambda}) = 1$. Then:
	\begin{itemize}
		\item If $\lambda \in \prop \cup \{ \neg,\wedge \}$, we let $\msf{Op}(i) := \lambda$;
		\item If $\lambda = \fanBr{\cdot} \bullet$ for some $\bullet \in \{ \lX,\lG,\lU \}$, we let $\msf{Op}(i) := \fanBr{A_i} \bullet$ with $A_i \subseteq \Ag$ such that for all $a \in \Ag$, we have $a \in A_i$ if and only if $v(A_{i,a}) = 1$.
	\end{itemize}

	We also let $r: \llbracket 2,n \rrbracket \rightarrow \llbracket 1,n \rrbracket$ be such that, for all $i \in \llbracket 2,n \rrbracket$, $r(i) \in \llbracket 1,i-1 \rrbracket$ is such that $v(r_{i,r(i)}) = 1$. Note that there is exactly one such $r(i) \in \llbracket 1,n \rrbracket$ because $v \models \propformula^{\mathsf{str}}_n$. We define similarly $l: \llbracket 2,n \rrbracket \rightarrow \llbracket 1,n \rrbracket$. 
	\iffalse
	We also define the set $\mathsf{Defined}^v \subseteq \llbracket 1,n \rrbracket$ as the smallest subset of $\llbracket 1,n \rrbracket$ such that:
	\begin{enumerate}
		\item $n \in \mathsf{Defined}^v$;
		\item $\{ l(i) \mid i \in \mathsf{Defined}^v,\; \msf{Op}(i) \in \msf{Op}(\prop,\Ag)_{\mathsf{Un}} \} \subseteq \mathsf{Defined}^v$;
		%for all $i \in \mathsf{Defined}^v$, if $\msf{Op}(i) \in \{\neg,\fanBr{A} \lX,\fanBr{A} \lG \mid A \subseteq \Ag \}$, then $l(i) \in \mathsf{Defined}^v$;
		\item $\{ l(i),r(i) \mid i \in \mathsf{Defined}^v,\; \msf{Op}(i) \in \msf{Op}(\prop,\Ag)_{\mathsf{Bin}} \} \subseteq \mathsf{Defined}^v$
		%for all $i \in \mathsf{Defined}^v$, if $\msf{Op}(i) \in \{\wedge,\fanBr{A} \lU \mid A \subseteq \Ag \}$, then $l(i),r(i) \in \mathsf{Defined}^v$.
	\end{enumerate}
	\fi
	
	Finally, we define the function $\Phi^v_n: \llbracket1,n \rrbracket \rightarrow ATL$ as follows. For all $i \in \llbracket1,n \rrbracket$:
	\begin{itemize}
		\item If $\msf{Op}(i) \in \prop$, then $\Phi^v_n(i) := \msf{Op}(i)$;
		\item If $\msf{Op}(i) \in %\{\neg,\fanBr{A} \lX,\fanBr{A} \lG \mid A \subseteq \Ag \}
		\msf{Op}(\prop,\Ag)_{\mathsf{Un}}$, then $\Phi^v_n(i) := \msf{Op}(i) \; \Phi^v_n(l(i))$;
		\item If $\msf{Op}(i) \in %\{\wedge,\fanBr{A} \lU \mid A \subseteq \Ag \}
		\msf{Op}(\prop,\Ag)_{\mathsf{Bin}}$, then $\Phi^v_n(i) := \Phi^v_n(l(i)) \; \msf{Op}(i) \; \Phi^v_n(r(i))$.
	\end{itemize}
	
	One can show by induction (on the tree structure induced by the children relations given by the functions $l$ and $r$) that we do have that 
	%	$l$ and $r$) that, indeed, 
	for all $i \in \llbracket1,n \rrbracket$, $\Phi^v_n(i)$ is an ATL formula.
	
	We can finally define the ATL formula $\Phi^v$ that $v$ encodes: $\Phi^v := \Phi^v_n(n)$.
\end{definition}

\begin{remark}
	A valuation satisfying the formula $\propformula^{\mathsf{str}}_n$ encodes a single ATL formula. 
	On the other hand, given an ATL formula $\Phi$, there exist (several) valuations satisfying the formula $\propformula^{\mathsf{str}}_n$ that encode the formula $\Phi$. We do not give a proof of this statement here, a similar statement is proved in~\cite[Lemma 1]{flie}. We define such a valuation by considering a syntax DAG of $\Phi$. 
\end{remark}

Let us now show the formula $\propformula^{\mathsf{sem}}_n$ does capture the semantics of ATL formula. 
\begin{lemma}
	\label{lem:central_lemma}
	Consider a valuation $v: X^\mathsf{Var}_n \rightarrow \B$ satisfying the formula $\propformula^{\mathsf{str}}_n$. For all $q \in Q$, we have:
	\begin{equation*}
		q \models \Phi^v \text{ iff } v(y_{n,q}^C) = 1
	\end{equation*}
\end{lemma}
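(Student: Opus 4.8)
The plan is to prove, by induction on the syntax DAG of $\Phi^v$, the stronger statement that for \emph{every} node $i \in \brck{1,n}$ and every state $q \in Q$ we have $v(y^C_{i,q}) = 1$ if and only if $q \models \Phi^v_n(i)$ (equivalently $q \in \Sat(\Phi^v_n(i))$); the lemma is then the special case $i = n$. Here I tacitly read the hypothesis as: $v$ is extended to the structural variables $X^{\mathsf{Struct}}_C$ so that $v \models \propformula^{\mathsf{str}}_n \wedge \propformula^{\mathsf{sem}}_n$ (the statement refers to $v(y^C_{n,q})$, which lies in $X^{\mathsf{Struct}}_C$). The induction follows the well-founded order on nodes given by the child functions $l$ and $r$ from Definition~\ref{def:formula_from_valuation}, since each child carries a strictly smaller identifier. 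A preliminary and reusable ingredient is the correctness of the $\mathsf{Pre}$-gadget: for any predicate $y_S$ encoding a set $S \subseteq Q$ and any node $i$, the formula $\Omega^{\mathsf{Pre},C}_{q,i}(y_S)$ is satisfied by $v$ if and only if $q \in \mathsf{Pre}_{A_i}(S)$, where $A_i := \{ a \in \Ag \mid v(A_{i,a}) = 1 \}$. I would prove this by observing that the inner implication $\bigwedge_{a}(A_{i,a} \Rightarrow \alpha_a = \alpha'_a)$ restricts the universally quantified $\alpha'$ to exactly those action tuples agreeing with $\alpha$ on the coalition $A_i$, i.e. to the tuples whose successor lies in $\mathsf{Succ}(q,\alpha|_{A_i})$; hence the disjunct for $\alpha$ holds iff $\mathsf{Succ}(q,\alpha|_{A_i}) \subseteq S$, and the whole disjunction holds iff some coalition action forces $S$ in one step.

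With this gadget in hand, the base case and the Boolean/next cases are routine. If node $i$ is labelled by a proposition $p$, then $\Omega^{\mathsf{sem}}_{\prop,C}$ forces $v(y^C_{i,q}) = 1 \iff p \in \pi(q) \iff q \models p$. If $i$ is labelled $\neg$ (resp. $\wedge$), the conjunct of $\Omega^{\mathsf{sem}}_{\neg,C}$ (resp. $\Omega^{\mathsf{sem}}_{\wedge,C}$) selected by the variables $l_{i,\cdot}$ (resp. $l_{i,\cdot}$ and $r_{i,\cdot}$) gives $v(y^C_{i,q}) \iff \neg v(y^C_{l(i),q})$ (resp. $\iff v(y^C_{l(i),q}) \wedge v(y^C_{r(i),q})$), and the induction hypothesis on the children together with the ATL semantics closes the case. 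For $i$ labelled $\fanBr{A}\lX$, $\Omega^{\mathsf{sem}}_{\lX,C}$ gives $v(y^C_{i,q}) \iff \Omega^{\mathsf{Pre},C}_{q,i}(y^C_{l(i),\cdot})$; applying the $\mathsf{Pre}$-gadget with $S = \Sat(\Phi^v_n(l(i)))$ (valid by the induction hypothesis) and the identity $\Sat(\fanBr{A}\lX\Phi') = \mathsf{Pre}_A(\Sat(\Phi'))$ yields the claim.

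The main obstacle is the two fixed-point cases $\fanBr{A}\lU$ and $\fanBr{A}\lG$, where I must relate the auxiliary variables $y^C_{i,q,k}$ to the iterates of Algorithms~\ref{alg:algo_U} and~\ref{alg:algo_G}. For $\fanBr{A}(\Phi_1 \lU \Phi_2)$ with $j = l(i)$ and $j' = r(i)$, I would first prove by an inner induction on $k \in \brck{0,|Q|}$ that $\{ q \mid v(y^C_{i,q,k}) = 1 \}$ equals the set $S_k$ computed after $k$ iterations of Algorithm~\ref{alg:algo_U}: the base clause $y^C_{i,q,0} \Leftrightarrow y^C_{j',q}$ gives $S_0 = \Sat(\Phi_2)$ by the outer hypothesis on $j'$, and the recurrence $y^C_{i,q,k+1} \Leftrightarrow [y^C_{i,q,k} \vee (y^C_{j,q} \wedge \Omega^{\mathsf{Pre}}_{q,i}(y^C_{i,\cdot,k}))]$, combined with the $\mathsf{Pre}$-gadget and the outer hypothesis on $j$, reproduces exactly the update $S_{k+1} = S_k \cup (\Sat(\Phi_1) \cap \mathsf{Pre}_A(S_k))$. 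The key finiteness point is that these iterates form a monotone increasing chain inside the finite set $Q$, so the computation reaches its least fixed point in at most $|Q|$ steps; thus $S_{|Q|} = \Sat(\fanBr{A}(\Phi_1 \lU \Phi_2))$, and the final clause $y^C_{i,q} \Leftrightarrow y^C_{i,q,|Q|}$ transfers this to $y^C_{i,q}$. The $\fanBr{A}\lG$ case is symmetric, using $\Omega^{\mathsf{sem}}_{\lG,C}$, the monotone \emph{decreasing} iterates $S_{k+1} = S_k \cap \mathsf{Pre}_A(S_k)$ of Algorithm~\ref{alg:algo_G}, and the fact that a decreasing chain in $2^Q$ stabilises within $|Q|$ steps at the greatest fixed point. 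Assembling all cases completes the induction, and evaluating at $i = n$ gives the lemma.
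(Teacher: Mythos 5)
Your proposal is correct and follows essentially the same route as the paper: a correctness lemma for the $\Omega^{\mathsf{Pre},C}_{q,i}$ gadget, inner inductions on $k$ identifying the variables $y^C_{i,q,k}$ with the iterates of Algorithms~\ref{alg:algo_U} and~\ref{alg:algo_G}, and an outer structural induction on the syntax DAG proving $v(y^C_{i,q})=1$ iff $q\models\Phi^v_n(i)$ for every node $i$. Your reading of the hypothesis as requiring $v$ to be defined on $X_n(C)$ and to satisfy $\propformula^{\mathsf{str}}_n\wedge\propformula^{\mathsf{sem}}_n$ is exactly what the paper's own auxiliary lemmas assume, so that clarification is warranted rather than a deviation.
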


We are going to prove this lemma by induction on the tree structure of the set $\llbracket1,n \rrbracket$. Before we proceed to the proof of this lemma, we state and prove intermediate lemmas handling the temporal and strategic ATL operators on their own.

First of all, we tackle the formula $\Omega_{q,i}^{\mathsf{Pre}}(y)$ in the lemma below.
\begin{lemma}
	\label{lem:pre}
	Consider a valuation $v: X_n(C) \rightarrow \B$ satisfying the formula $\propformula^{\mathsf{str}}_n \wedge \propformula^{\mathsf{sem}}_n$. Let $q \in Q$ and $i \in \llbracket 1,n \rrbracket$. For all predicates $Y \subseteq Q$, defined with $(v(y_q))_{q \in Q}$, we have (recall that $A_i$ is defined in Definition~\ref{def:formula_from_valuation}):
	\begin{equation*}
		v \models \Omega_{q,i}^{\mathsf{Pre}}(y) \text{ iff } q \in \mathsf{Pre}_{A_i}(Y)
	\end{equation*} 
\end{lemma}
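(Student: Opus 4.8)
The plan is to prove the equivalence by directly unwinding the propositional semantics of $\Omega_{q,i}^{\mathsf{Pre}}(y)$ and matching it against the game-theoretic definition of $\mathsf{Pre}_{A_i}$. The central observation, which I would record first as a short preliminary remark, is that although the formula quantifies via the outer disjunction and inner conjunction over \emph{full} action tuples $\alpha,\alpha' \in \mathsf{Act}_\Ag(q)$ ranging over all agents, the guard $\bigwedge_{a \in \Ag}(A_{i,a} \Rightarrow (\alpha_a = \alpha'_a))$ evaluates, under $v$, to true precisely when $\alpha$ and $\alpha'$ agree on every agent $a$ with $v(A_{i,a}) = 1$, i.e.\ on every agent of the coalition $A_i$ (recall $A_i$ is the coalition from Definition~\ref{def:formula_from_valuation}). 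Thus the only relevant data in $\alpha$ is its projection $(\alpha_a)_{a \in A_i} \in \mathsf{Act}_{A_i}(q)$, and the guarded inner conjunction effectively ranges over all full tuples extending a fixed coalition move. I would also fix the convention that $y(\delta(q,\alpha'))$ holds under $v$ iff $\delta(q,\alpha') \in Y$, where $Y = (v(y))^{-1}[\true]$.

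For the forward direction, I would assume $v \models \Omega_{q,i}^{\mathsf{Pre}}(y)$ and extract a witnessing full tuple $\alpha \in \mathsf{Act}_\Ag(q)$ for the outer disjunction. Setting $\beta := (\alpha_a)_{a \in A_i} \in \mathsf{Act}_{A_i}(q)$, I would show $\mathsf{Succ}(q,\beta) \subseteq Y$: every state in $\mathsf{Succ}(q,\beta)$ is of the form $\delta(q,\alpha')$ for some full tuple $\alpha'$ with $(\alpha'_a)_{a \in A_i} = \beta = (\alpha_a)_{a \in A_i}$, so such $\alpha'$ satisfies the guard, and the inner conjunction then forces $\delta(q,\alpha') \in Y$. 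This yields $q \in \mathsf{Pre}_{A_i}(Y)$ by definition of $\mathsf{Pre}$.

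For the backward direction, I would assume $q \in \mathsf{Pre}_{A_i}(Y)$, obtain a coalition move $\beta \in \mathsf{Act}_{A_i}(q)$ with $\mathsf{Succ}(q,\beta) \subseteq Y$, and choose as the disjunction witness any full tuple $\alpha \in \mathsf{Act}_\Ag(q)$ with $(\alpha_a)_{a \in A_i} = \beta$ (agents outside $A_i$ playing arbitrary legal actions). For an arbitrary $\alpha'$ satisfying the guard, agreement on $A_i$ gives $(\alpha'_a)_{a \in A_i} = \beta$, hence $\delta(q,\alpha') \in \mathsf{Succ}(q,\beta) \subseteq Y$, so the guarded implication holds for every $\alpha'$; thus the $\alpha$-disjunct is satisfied and $v \models \Omega_{q,i}^{\mathsf{Pre}}(y)$.

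The main obstacle is bookkeeping rather than conceptual: one must carefully match the encoding's ``full tuple plus equality guard'' formulation with the definition $\mathsf{Succ}(q,\beta) = \{\, \delta(q,(\beta,\gamma)) \mid \gamma \in \mathsf{Act}_{\Ag \setminus A_i}(q) \,\}$, verifying that every full tuple agreeing with $\beta$ on $A_i$ decomposes uniquely as $(\beta,\gamma)$ and conversely, and that $\delta(q,\alpha')$ is always defined since $\alpha' \in Q_{\mathsf{Act}}(q)$. Some care is also needed for the edge case $A_i = \emptyset$, where the guard is vacuously true, so the inner conjunction ranges over all tuples, and $\mathsf{Act}_\emptyset(q) = \{\epsilon\}$; this is exactly consistent with $\mathsf{Pre}_\emptyset(Y)$ requiring \emph{all} successors of $q$ to lie in $Y$.
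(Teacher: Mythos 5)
Your proof is correct and follows essentially the same route as the paper's: both arguments unwind the propositional semantics of $\Omega_{q,i}^{\mathsf{Pre}}(y)$, observe that the equality guard restricts the inner conjunction exactly to the full action tuples agreeing with the witness $\alpha$ on the coalition $A_i$, and identify those tuples with the successors $\mathsf{Succ}(q,\alpha_i)$. The only cosmetic difference is that the paper proves one biconditional claim per disjunct $\Omega_\alpha$ and then takes the disjunction, whereas you split the argument into two implications; the content is identical, and your explicit treatment of the $A_i = \emptyset$ edge case is a harmless bonus.
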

\begin{proof}	
	Let $\alpha \in \mathsf{Act}_{\Ag}(q)$. Let $\Omega_{\alpha}$ denote the conjunction in $\Omega_{q,i}^{\mathsf{Pre}}(y)$ evaluated with the tuple of actions $\alpha \in \mathsf{Act}_{\Ag}(q)$. 
	%Assume that $v \models \Omega_{q,i}^{\mathsf{Pre}}(y)$. 
	%Let $\alpha \in \mathsf{Act}_{\Ag}(q)$ satisfying the disjunction of $\Omega_{q,i}^{\mathsf{Pre}}(y)$. 
	We let $\alpha_{i} \in \mathsf{Act}_{A_i}$ be such that, for all $a \in A_i$, we have $\alpha_{i}(a) := \alpha(a)$. We claim that $v \models \Omega_\alpha$ if and only if for all $q' \in \mathsf{Succ}(q,\alpha_i)$, we have $q' \in Y$ (i.e. $v(y_{q'}) = 1$). We denote by $\mathsf{Act}_{\Ag}(\alpha_i)$ the set of tuples of actions that coincide with $\alpha_i$ on $A_i$. Then, $v \models \Omega_\alpha$ iff for all $\alpha' \in \mathsf{Act}_{\Ag}$ such that, for all $a \in \Ag$, if $v(A_{i,a}) = 1$, then $\alpha(a) = \alpha'(a)$, we have $v(y_{\delta(q,\alpha')}) = 1$. By definition of $A_i$, this is equivalent to for all $\alpha' \in \mathsf{Act}_{\Ag}(\alpha_i)$, we have $v(y_{\delta(q,\alpha')}) = 1$. This is equivalent to, for all tuple of actions $\alpha_i' \in \mathsf{Act}_{\Ag \setminus A_i}$, we have $v(y_{\delta(q,(\alpha_i,\alpha_i'))}) = 1$. That is, for all $q' \in \mathsf{Succ}(q,\alpha_i)$, we have $q' \in Y$. We have established that $v \models \Omega_\alpha$ if and only if $\mathsf{Succ}(q,\alpha_i) \subseteq Y$ for all tuples of actions $\alpha \in \mathsf{Act}_{\Ag}(q)$. The lemma follows. 
	%. Let $q' \in \mathsf{Succ}(q,\alpha_i)$. There is some $\alpha_i' \in \mathsf{Act}_{\Ag \setminus A_i}$ such that $q' = \delta(q,\alpha')$ for $\alpha' := (\alpha_i,\alpha_i') \in \mathsf{Act}_{\Ag}$. That is, $\alpha'$ is a tuple of actions that coincides with $\alpha_i$ and $A_i$ and that coincides with $\alpha_i'$ on $\Ag \setminus \alpha_i$. By definition, for all $a \in \Ag$, if $v(A_{i,a}) = 1$, then $a \in A_i$ and $\alpha'(a) = \alpha_i(a)$. Therefore, since $v \models \Omega_{q,i}^{\mathsf{Pre}}(y)$, it follows that $v(y_{q'}) = v(y_{\delta(q,\alpha')}) = 1$. 
\end{proof}

This will allow us to formally tackle the case of the operator $\lX$ in the proof of Lemma~\ref{lem:central_lemma}. Let us consider the slightly more involved case of the operator $\lU$. 

\iffalse
The soundness of the semantics of the formula $\Omega^{\mathsf{sem}}_{\lX}$ follows.
\begin{lemma}
	Consider a valuation $v: X_n(C) \rightarrow \B$ satisfying the formula $\propformula^{\mathsf{str}}_n \wedge \propformula^{\mathsf{sem}}_n$. Let $q \in Q$ and $i \in \msf{Defined}^v$. If $\msf{Op}(i) = \fanBr{A_i} \lX$, then:
	\begin{equation*}
		q \models \Phi^v(i) \text{ if and only if }v(y_{i,q}^C) = 1
	\end{equation*} 
\end{lemma}
\begin{proof}
	
\end{proof}
\fi

%To show that our encoding indeed captures the meaning of the variables $y^{C}_{i,s,k}$, we present the following lemmas.
\begin{lemma}\label{lem:U-correctness}
	Consider a valuation $v: X_n(C) \rightarrow \B$ satisfying the formula $\propformula^{\mathsf{str}}_n \wedge \propformula^{\mathsf{sem}}_n$. Let $i \in \msf{Defined}^v$. Assume that $\msf{Op}(i) = \fanBr{A_i} \lU$ and that, for all $q \in Q$, we have $v(y_{q,l(i)}^C) = 1$ if and only if $q \models \Phi^v(l(i))$ and $v(y_{q,r(i)}^C) = 1$ if and only if $q \models \Phi^v(r(i))$. Then, for all $q \in Q$, we have:
	\begin{equation*}
		q \models \Phi^v(i) \text{ if and only if }v(y_{i,q}^C) = 1
	\end{equation*} 
	%Let $v$ be an assignment such that $v(x_{i,\lE\lU})=1$, $v(l_{i,j})=1$ and $v(r_{i,j'})=1$.
	%Then, for any state $s\in S$ in $M$ and $k\in \{1,\dots,|S|+1\}$, we have: $v(y^{M}_{i,s,k})=1$ if and only if there exists a path prefix $\pi = s_0 s_1 s_2\ldots s_t$ in $M$ where $s=s_0$ and $0\leq t < k$, such that $v(y^{M}_{j,\overline{s}})=1$ for all $\overline{s} \in \{s_0, s_1, \ldots, s_{t-1}\}$ and $v(y^{M}_{j',s_t})=1$.
\end{lemma}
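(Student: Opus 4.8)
The plan is to show that the auxiliary variables $y^C_{i,q,k}$ are forced by $\propformula^{\mathsf{sem}}_n$ to track, step by step, the iterates of Algorithm~\ref{alg:algo_U}, and then to invoke the correctness of that algorithm. Write $\Phi_1 := \Phi^v(l(i))$ and $\Phi_2 := \Phi^v(r(i))$, so that $\Phi^v(i) = \fanBr{A_i}(\Phi_1 \lU \Phi_2)$. By hypothesis, for every $q \in Q$ we have $v(y^C_{l(i),q}) = 1 \Leftrightarrow q \models \Phi_1$ and $v(y^C_{r(i),q}) = 1 \Leftrightarrow q \models \Phi_2$; equivalently, the predicates defined by these variables are exactly $\Sat(\Phi_1)$ and $\Sat(\Phi_2)$. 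Since $i \in \msf{Defined}^v$ with $\msf{Op}(i) = \fanBr{A_i}\lU$, the structural variables satisfy $v(x_{i,\fanBr{\cdot}\lU}) = v(l_{i,l(i)}) = v(r_{i,r(i)}) = 1$, so the conjunct of $\Omega^{\mathsf{sem}}_{\lU,C}$ instantiated at $j = l(i)$ and $j' = r(i)$ is active under $v$.

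First I would define, for each $0 \leq k \leq |Q|$, the set $S_k := \{ q \in Q \mid v(y^C_{i,q,k}) = 1 \}$, and prove by induction on $k$ that $S_k$ coincides with the value of the variable $S$ after $k$ iterations of Algorithm~\ref{alg:algo_U} run with coalition $A_i$, $\Sat(\Phi_1)$, and $\Sat(\Phi_2)$. For the base case, the clause $y^C_{i,q,0} \Leftrightarrow y^C_{r(i),q}$ gives $S_0 = \Sat(\Phi_2)$, which is exactly the initialization. For the inductive step, the clause $y^C_{i,q,k+1} \Leftrightarrow [ y^C_{i,q,k} \vee [ y^C_{l(i),q} \wedge \Omega_{q,i}^{\mathsf{Pre}}(y^C_{i,\cdot,k}) ] ]$ together with Lemma~\ref{lem:pre} — applied to the predicate $S_k$, which is legitimate since $A_i$ is the coalition attached to node $i$ — yields $S_{k+1} = S_k \cup ( \Sat(\Phi_1) \cap \mathsf{Pre}_{A_i}(S_k) )$, precisely the update performed in the loop body.

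It then remains to conclude. After $|Q|$ iterations Algorithm~\ref{alg:algo_U} returns $\Sat(\fanBr{A_i}(\Phi_1 \lU \Phi_2)) = \Sat(\Phi^v(i))$; this is the correctness of the fixed-point procedure of Alur et al.~\cite[Fig.~3]{DBLP:journals/jacm/AlurHK02}, which holds because the iterates form a nondecreasing chain inside the finite set $Q$ and therefore stabilize at the least fixed point within $|Q|$ steps. Hence $S_{|Q|} = \Sat(\Phi^v(i))$. Finally, the conjunct $y^C_{i,q} \Leftrightarrow y^C_{i,q,|Q|}$ of $\Omega^{\mathsf{sem}}_{\lU,C}$ gives $v(y^C_{i,q}) = 1 \Leftrightarrow q \in S_{|Q|} \Leftrightarrow q \models \Phi^v(i)$, which is the claim.

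I expect the only delicate point to be the justification that $|Q|$ iterations of the \emph{for} loop suffice to reach the least fixed point; this is where the bound $k \leq |Q|$ in the encoding is essential, and it rests on monotonicity of the iterates together with the finiteness of $Q$. Everything else is a routine transcription of the propositional biconditionals into set-theoretic updates, with Lemma~\ref{lem:pre} handling the $\mathsf{Pre}_{A_i}$ operator.
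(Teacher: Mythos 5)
Your proposal is correct and follows essentially the same route as the paper's proof: an induction on $k$ showing that the variables $y^C_{i,q,k}$ track the iterates of Algorithm~\ref{alg:algo_U}, with Lemma~\ref{lem:pre} handling the $\mathsf{Pre}_{A_i}$ step and the final biconditional $y^C_{i,q} \Leftrightarrow y^C_{i,q,|Q|}$ closing the argument. The only cosmetic difference is that you define $S_k$ from the valuation and match it to the algorithm, whereas the paper defines $S_k$ from the algorithm and matches it to the valuation.
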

\begin{proof}
	Recall that we have:
	\begin{equation*}
		\Phi^v(i) = \fanBr{A_i} \; \Phi^v(l(i)) \; \lU \; \Phi^v(r(i))
	\end{equation*}

	\iffalse
	Let us define inductively sets that capture the steps taken by Algorithm~\ref{alg:algo_U} to compute $\mathsf{SAT}_C(\Phi^v(i))$. We define the sets $(S_k)_{0 \leq k \leq |Q|}$ as follows:
	\begin{itemize}
		\item $S_0 := \mathsf{SAT}_C(\Phi^v(r(i)))$.
		\item for all $0 \leq k \leq |Q|-1$, $S_{k+1} := S_k \cup \{ q \in Q \mid q \in \mathsf{SAT}_C(\Phi^v(l(i))) \cap  \mathsf{Pre}_{A_i}(S)\}$ 
	\end{itemize}
	Then, $\mathsf{SAT}_C(\Phi^v(i)) = S_{|Q|}$. 
	\fi
	
	Consider Algorithm~\ref{alg:algo_U}. For all $0 \leq j \leq |Q|$, we let $S_j \subseteq Q$ denote the set $S$ when exiting the \emph{for} loop when $k = j$. When $j = 0$, this corresponds to initialization of the set $S$ before the \emph{for} loop is ever entered. We show by induction on $0 \leq k \leq |Q|$ the property $\mathcal{P}(k)$: for all states $q \in Q$, we have $v(y^{C}_{i,q,k}) = 1$ if and only if $q \in S_k$. 
	\begin{itemize}
		\item Let $q \in Q$. We have $S_0 = \mathsf{SAT}_C(\Phi^v(r(i)))$, i.e. $q \in S_0$ iff $q \models \Phi^v(r(i))$. Since $v \models \propformula^{\mathsf{sem}}_{\lU}$, we have $v(y_{i,q,0}^C) = v(y_{r(i),q}^C)$. Furthermore, by assumption, $v(y_{q,r(i)}^C) = 1$ if and only if $q \models \Phi^v(r(i))$. Therefore, $q \in S_0$ if and only if  $v(y_{i,q,0}^C) = 1$. Hence, $\mathcal{P}(0)$ holds.
		\item Assume now that $\mathcal{P}(k)$ holds for some $0 \leq k \leq |Q|-1$. Let $q \in Q$. %Let us show that $q\in S_{k+1} \setminus S_k$ if and only if $v(y^{C}_{i,q,k}) = 0$ and $v(y^{C}_{i,q,k+1}) =1$. 
		Assume that $q \in S_{k+1}$. If $q \in S_k$, then by $\mathcal{P}(k)$, we have $v(y^{C}_{i,q,k}) = 1$. Since $v \models \propformula^{\mathsf{sem}}_{\lU}$, this implies $v(y^{C}_{i,q,k+1}) =1$. Assume now that $q \in S_{k+1} \setminus S_k$. In that case, we have $q \in \mathsf{SAT}_C(\Phi^v(l(i)))$, i.e. $q \models \Phi^v(l(i))$. Therefore, by assumption, $v(y_{q,l(i)}^C) = 1$. In addition, we also have $q \in \mathsf{Pre}_{A_i}(S_k)$. Therefore, by Lemma~\ref{lem:pre}, this implies $v \models \Omega_{q,i}^{\mathsf{Pre}}(y)$ for a collection of variables $(y_q)_{q \in Q}$ such that, for all $q' \in Q$, we have $v(y_{q'}) = 1$ if and only if $q' \in S_k$. By $\mathcal{P}(k)$, this is verified by the collection $(y_{i,q,k}^C)_{q\in Q}$. Therefore, $v \models \Omega_{q,i}^{\mathsf{Pre}}(y_{i,\cdot,k}^C)$ and since $v \models \propformula^{\mathsf{sem}}_{\lU}$, we have $v(y^{C}_{i,q,k+1}) =1$. 
		
		Assume now $v(y^{C}_{i,q,k+1}) = 1$. Since $v \models \propformula^{\mathsf{sem}}_{\lU}$, we have $v(y^{C}_{i,q,k}) = 1$ or $v \models y_{q,l(i)}^C \wedge \Omega_{q,i}^{\mathsf{Pre}}(y_{i,\cdot,k}^C)$. In the first case, this implies $q \in S_k$, by $\mathcal{P}(k)$. In the second case, this implies both $v(y_{q,l(i)}^C) =1$, i.e. %$q \models \Phi^v(l(i))$ (by assumption) or 
		$q \in \SAT_C(\Phi^v(l(i)))$, and by Lemma~\ref{lem:pre} and $\mathcal{P}(k)$, $q \in \mathsf{Pre}_{A_i}(S_k)$. Therefore, in both cases, we have $q \in S_{k+1}$. Thus, $\mathcal{P}(k+1)$ holds.
	\end{itemize}
	In fact, $\mathcal{P}(k)$ holds for all $0 \leq k \leq |Q|$. Therefore, by $\mathcal{P}(|Q|)$ and since $v \models \propformula^{\mathsf{sem}}_{\lU}$, for all $q \in Q$, we have $q \in S_{|Q|} = \SAT_C(\Phi^v(i))$, i.e. $q \models \Phi^v(i)$, if and only if $v(y^{C}_{i,q}) = v(y^{C}_{i,q,|Q|}) = 1$. 
	\iffalse
	The proof requires to proceed via induction on the parameter $k$.
	\begin{itemize}
		\item As the base case, let $k=1$. 
		Using the constraint~\ref{eq:EU}, we have $v(y^{M}_{i,s,1})=v(y^{M}_{j',s})$. 
		Thus, $v(y^{M}_{i,s,1})=1$ if and only if there is a trivial path $\pi=s$ (with parameter $t<1$) such that $v(y^{M}_{j',s})=1$.
		
		\item As the induction hypothesis, let for any $s\in S$ and $k\leq K$ the following hold: $v(y^{M}_{i,s,K})=1$ if and only if there is a path prefix $\pi=s s_1 s_2 \ldots s_t$, $t< K$ such that $v(y^{M}_{j,\overline{s}})=1$ for all $\overline{s} \in \{s, s_1, \ldots, s_{t-1}\}$ and $v(y^{M}_{j',s_t})=1$.
		Now, using the constraint~\ref{eq:EU}, $v(y^{M}_{i,s,K+1})=1$ if and only if (i) $v(y^{M}_{i,s,K})=1$, meaning, there is a path prefix  $\pi=s s_1 s_2 \ldots s_t$, $t<K<K+1$ such that $v(y^{M}_{j,\overline{s}})=1$ for all $\overline{s} \in \{s, s_1, \ldots, s_{t-1}\}$ and $v(y^{M}_{j',s_t})=1$; or (ii) $v(y^{M}_{j',s})=1$ and for some $s'\in\post(s)$, $v(y^{M}_{i,s',K})=1$, meaning, there is a path prefix $\pi=s s' \ldots s_t$, $t<K+1$ such that $v(y^{M}_{j,s'})=1$ for all $\overline{s} \in \{s, s_1, \ldots, s_{t-1}\}$ and $v(y^{M}_{j',s_t})=1$.
	\end{itemize}
	\fi
\end{proof}

Let us now tackle the similar case of the operator $\lG$. 
\begin{lemma}\label{lem:G-correctness}
	Consider a valuation $v: X_n(C) \rightarrow \B$ satisfying the formula $\propformula^{\mathsf{str}}_n \wedge \propformula^{\mathsf{sem}}_n$. Let $i \in \msf{Defined}^v$. Assume that $\msf{Op}(i) = \fanBr{A_i} \lG$ and that, for all $q \in Q$, we have $v(y_{q,l(i)}^C) = 1$ if and only if $q \models \Phi^v(l(i))$. Then, for all $q \in Q$, we have:
	\begin{equation*}
		q \models \Phi^v(i) \text{ if and only if }v(y_{i,q}^C) = 1
	\end{equation*} 
	%Let $v$ be an assignment such that $v(x_{i,\lE\lU})=1$, $v(l_{i,j})=1$ and $v(r_{i,j'})=1$.
	%Then, for any state $s\in S$ in $M$ and $k\in \{1,\dots,|S|+1\}$, we have: $v(y^{M}_{i,s,k})=1$ if and only if there exists a path prefix $\pi = s_0 s_1 s_2\ldots s_t$ in $M$ where $s=s_0$ and $0\leq t < k$, such that $v(y^{M}_{j,\overline{s}})=1$ for all $\overline{s} \in \{s_0, s_1, \ldots, s_{t-1}\}$ and $v(y^{M}_{j',s_t})=1$.
\end{lemma}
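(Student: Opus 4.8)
The plan is to adapt the argument of Lemma~\ref{lem:U-correctness} almost verbatim, the only changes being that the iterates are now combined by a conjunction (an intersection of sets) rather than a union, and that there is a single child $l(i)$ instead of two. Recall that here $\Phi^v(i) = \fanBr{A_i} \lG \Phi^v(l(i))$ and that $\Sat(\Phi^v(i))$ is computed by Algorithm~\ref{alg:algo_G}. For $0 \leq j \leq |Q|$, I would let $S_j \subseteq Q$ denote the value of the variable $S$ in Algorithm~\ref{alg:algo_G} upon exiting the iteration $k = j$ of the \emph{for} loop, with $S_0 := \Sat(\Phi^v(l(i)))$ being the initialization; by correctness of the fixed-point model-checking procedure we have $S_{|Q|} = \Sat(\Phi^v(i))$, and the iterates satisfy $S_{k+1} = S_k \cap \mathsf{Pre}_{A_i}(S_k)$ and decrease, $S_0 \supseteq S_1 \supseteq \cdots$ (the conceptual difference from the increasing iterates of the $\lU$ case).

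The core of the proof is the property $\mathcal{P}(k)$, for $0 \leq k \leq |Q|$: for every state $q \in Q$, we have $v(y^C_{i,q,k}) = 1$ if and only if $q \in S_k$, which I would establish by induction on $k$. For the base case $k = 0$, the clause $y^C_{i,q,0} \Leftrightarrow y^C_{l(i),q}$ of $\propformula^{\mathsf{sem}}_{\lG}$, together with the hypothesis that $v(y^C_{q,l(i)}) = 1$ iff $q \models \Phi^v(l(i))$, gives exactly $v(y^C_{i,q,0}) = 1$ iff $q \in S_0$. For the inductive step, assuming $\mathcal{P}(k)$, I would invoke the clause $y^C_{i,q,k+1} \Leftrightarrow [y^C_{i,q,k} \wedge \Omega_{q,i}^{\mathsf{Pre}}(y^C_{i,\cdot,k})]$: the first conjunct $v(y^C_{i,q,k}) = 1$ corresponds, by $\mathcal{P}(k)$, to $q \in S_k$; the second conjunct $v \models \Omega_{q,i}^{\mathsf{Pre}}(y^C_{i,\cdot,k})$ corresponds, by Lemma~\ref{lem:pre} applied to the predicate encoded by $(v(y^C_{i,q',k}))_{q' \in Q}$ — which by $\mathcal{P}(k)$ is precisely $S_k$ — to $q \in \mathsf{Pre}_{A_i}(S_k)$. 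Conjoining the two equivalences yields $v(y^C_{i,q,k+1}) = 1$ iff $q \in S_k \cap \mathsf{Pre}_{A_i}(S_k) = S_{k+1}$, which is $\mathcal{P}(k+1)$.

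Finally, from $\mathcal{P}(|Q|)$ and the clause $y^C_{i,q} \Leftrightarrow y^C_{i,q,|Q|}$ of $\propformula^{\mathsf{sem}}_{\lG}$, I obtain $v(y^C_{i,q}) = 1$ iff $q \in S_{|Q|} = \Sat(\Phi^v(i))$, i.e.\ iff $q \models \Phi^v(i)$, which is the claim. I expect no serious obstacle: the argument is strictly simpler than the $\lU$ case, as the update is a single intersection involving only one child formula. The only points demanding mild care are the correct invocation of Lemma~\ref{lem:pre} (ensuring the predicate fed to $\Omega_{q,i}^{\mathsf{Pre}}$ is exactly the iterate $S_k$, which is precisely where $\mathcal{P}(k)$ is used) and the reliance on stabilization of the greatest-fixed-point iteration within $|Q|$ steps so that $S_{|Q|}$ genuinely equals $\Sat(\Phi^v(i))$; both are standard.
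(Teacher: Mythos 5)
Your proposal is correct and follows essentially the same route as the paper's proof: the paper likewise tracks the iterates $S_j$ of Algorithm~\ref{alg:algo_G}, proves by induction the invariant $\mathcal{P}(k)$ that $v(y^C_{i,q,k})=1$ iff $q\in S_k$ using the base clause, the step clause together with Lemma~\ref{lem:pre}, and concludes via $\mathcal{P}(|Q|)$ and the clause $y^C_{i,q}\Leftrightarrow y^C_{i,q,|Q|}$. No discrepancies worth noting.
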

The proof of this lemma is very close to the proof of Lemma~\ref{lem:U-correctness}.
\begin{proof}
	Recall that we have:
	\begin{equation*}
		\Phi^v(i) = \fanBr{A_i} \; \lG \; \Phi^v(l(i))
	\end{equation*}
	
	Consider Algorithm~\ref{alg:algo_G}. For all $0 \leq j \leq |Q|$, we let $S_j \subseteq Q$ denote the set $S$ when exiting the \emph{for} loop when $k = j$. When $j = 0$, this corresponds to initialization of the set $S$ before the \emph{for} loop is ever entered. We show by induction on $0 \leq k \leq |Q|$ the property $\mathcal{P}(k)$: for all states $q \in Q$, we have $v(y^{C}_{i,q,k}) = 1$ if and only if $q \in S_k$. 
	\begin{itemize}
		\item Let $q \in Q$. We have $S_0 = \mathsf{SAT}_C(\Phi^v(l(i)))$, i.e. $q \in S_0$ iff $q \models \Phi^v(l(i))$. Since $v \models \propformula^{\mathsf{sem}}_{\lG}$, we have $v(y_{i,q,0}^C) = v(y_{l(i),q}^C)$. Furthermore, by assumption, $v(y_{q,l(i)}^C) = 1$ if and only if $q \models \Phi^v(l(i))$. Therefore, $q \in S_0$ if and only if  $v(y_{i,q,0}^C) = 1$. Hence, $\mathcal{P}(0)$ holds.
		\item Assume now that $\mathcal{P}(k)$ holds for some $0 \leq k \leq |Q|-1$. Let $q \in Q$. %Let us show that $q\in S_{k+1} \setminus S_k$ if and only if $v(y^{C}_{i,q,k}) = 0$ and $v(y^{C}_{i,q,k+1}) =1$. 
		Assume that $q \in S_{k+1}$. In that case, we have $q \in S_k$, and hence by $\mathcal{P}(k)$, we have $v(y^{C}_{i,q,k}) = 1$. %Since $v \models \propformula^{\mathsf{sem}}_{\lG}$, this implies $v(y^{C}_{i,q,k+1}) =1$. 
		In addition, %Assume now that $q \in S_{k+1} \setminus S_k$. In that case, we have $q \in \mathsf{SAT}_C(\Phi^v(l(i)))$, i.e. $q \models \Phi^v(l(i))$. Therefore, by assumption, $v(y_{q,l(i)}^C) = 1$. In addition, 
		we also have $q \in \mathsf{Pre}_{A_i}(S_k)$. Therefore, by Lemma~\ref{lem:pre}, this implies $v \models \Omega_{q,i}^{\mathsf{Pre}}(y)$ for a collection of variables $(y_q)_{q \in Q}$ such that, for all $q' \in Q$, we have $v(y_{q'}) = 1$ if and only if $q' \in S_k$. By $\mathcal{P}(k)$, this is verified by the collection $(y_{i,q,k}^C)_{q\in Q}$. Therefore, $v \models \Omega_{q,i}^{\mathsf{Pre}}(y_{i,\cdot,k}^C)$. Since $v \models \propformula^{\mathsf{sem}}_{\lG}$, it follows that $v(y^{C}_{i,q,k+1}) = 1$. 
		
		Assume now $v(y^{C}_{i,q,k+1}) = 1$. Since $v \models \propformula^{\mathsf{sem}}_{\lG}$, we have $v(y^{C}_{i,q,k}) = 1$ and $v \models \Omega_{q,i}^{\mathsf{Pre}}(y_{i,\cdot,k}^C)$. Hence, by $\mathcal{P}(k)$, we have $q \in S_k$. In addition, by Lemma~\ref{lem:pre} and $\mathcal{P}(k)$, we have $q \in \mathsf{Pre}_{A_i}(S_k)$. Therefore, $q \in S_{k+1}$. Thus, $\mathcal{P}(k+1)$ holds.
	\end{itemize}
	In fact, $\mathcal{P}(k)$ holds for all $0 \leq k \leq |Q|$. By $\mathcal{P}(|Q|)$ and since $v \models \propformula^{\mathsf{sem}}_{\lG}$, for all $q \in Q$, we have $q \in S_{|Q|} = \SAT_C(\Phi^v(i))$, i.e. $q \models \Phi^v(i)$, if and only if $v(y^{C}_{i,q}) = v(y^{C}_{i,q,|Q|}) = 1$. 
\end{proof}

\iffalse
\begin{lemma}\label{lem:EG-correctness}
	Let $v$ be an assignment such that $v(x_{i,\lE\lG})=1$ and $v(l_{i,j})=1$.
	Then, for any state $s\in S$ in $M$ and $k\in \{1,\dots,|S|+1\}$, we have: $v(y^{M}_{i,s,k})=1$ if and only if there exists a path prefix $\pi = s_0 s_1 s_2\ldots s_t$ in $M$ where $s=s_0$ and $t \geq k-1$, such that $v(y^{M}_{j,\overline{s}})=1$ for all $\overline{s} \in \{s_0, s_1, \ldots, s_{t}\}$.
\end{lemma}
\begin{proof}
	The proof requires an induction on the parameter $k$ similar to the previous proof.
	\begin{itemize}
		\item As the base case, let $k=1$. 
		Using the constraint~\ref{eq:EG}, we have $v(y^{M}_{i,s,1})=v(y^{M}_{j,s})$. 
		Thus, $v(y^{M}_{i,s,1})=1$ if and only if there is a trivial path $\pi=s$ (with parameter $t\geq 0$) such that $v(y^{M}_{j,s})=1$.
		
		\item As the induction hypothesis, let for any $s\in S$ and $k\leq K$ the following hold: $v(y^{M}_{i,s,K})=1$ if and only if there is a path prefix $\pi=s s_1 s_2 \ldots s_t$, $t\geq K-1$ such that $v(y^{M}_{j,\overline{s}})=1$ for all $\overline{s} \in \{s, s_1, \ldots, s_{t}\}$.
		Now, using the constraint~\ref{eq:EG}, $v(y^{M}_{i,s,K+1})=1$ if and only if  $v(y^{M}_{j,s})=1$ and for some $s'\in\post(s)$, $v(y^{M}_{i,s',K})=1$, meaning, there is a path prefix $\pi=s s' \ldots s_t$, $t\geq K$ such that $v(y^{M}_{j,s'})=1$ for all $\overline{s} \in \{s, s_1, \ldots, s_{t}\}$.
	\end{itemize}
\end{proof}
\fi

We can now proceed to the proof of Lemma~\ref{lem:central_lemma}.
\begin{proof}
	Let us define a tree structure (with possibly muiltiple roots) on $\llbracket 1,n \rrbracket$ has a tree structure. For all $i \in \llbracket1,n \rrbracket$:
	\begin{itemize}
		\item If $\msf{Op}(i) \in \prop$, then $i$ is a leaf;
		\item If $\msf{Op}(i) \in \msf{Op}(\prop,\Ag)_{\mathsf{Un}}$, then $i$ has one child $l(i)$;
		\item If $\msf{Op}(i) \in \msf{Op}(\prop,\Ag)_{\mathsf{Bin}}$, then $i$ has two children $l(i),r(i)$.
	\end{itemize}
	On this tree structure, let us prove by induction the property $\mathcal{P}(i)$ for $i \in \llbracket1,n \rrbracket$: for all $q \in Q$, we have $q \models \Phi^v(i) \text{ iff } v(y_{i,q}^C) = 1$.
	
	\begin{itemize}
		\item Let $i \in \llbracket1,n \rrbracket$ be a leaf. In that case, we have $\Phi^v(i) = \mathsf{Op}(i) \in \prop$. Since $v \models \Omega_\prop^{\mathsf{sem}}$, for all $q \in Q$, we have $v(y_{i,q}^C) = 1$ if and only if $\Phi^v(i) \in \pi(q)$, i.e. $q \models \Phi^v(i)$. Hence, $\mathcal{P}(i)$ holds.
		\item Let $i \in \llbracket1,n \rrbracket$ be with a single child $l(i) \in \llbracket1,n \rrbracket$. Assume that $\mathcal{P}(l(i))$ holds. There are three cases:
		\begin{itemize}
			\item Assume that $\mathsf{Op}(i) = \neg$. Since $v \models \Omega_\neg^{\mathsf{sem}}$, we have $v(y_{i,q}^C) = 1 - v(y_{l(i),q}^C)$. Therefore, by $\mathcal{P}(l(i))$, for all $q \in Q$, we have $q \models \Phi^v(i) = \neg \Phi^v(l(i))$ iff $v(y_{l(i),q}^C) = 0$ iff $v(y_{i,q}^C) = 1$. 
			\item Assume that $\mathsf{Op}(i) = \fanBr{A_i} \lX$. Since $v \models \Omega_{\lX}^{\mathsf{sem}}$, for all $q \in Q$, we have $v(y_{i,q}^C) = 1$ iff $v \models \Omega_{q,i}^{\mathsf{Pre}}(y_{l(i)}^C)$ iff (by Lemma~\ref{lem:pre} and $\mathcal{P}(l(i))$) $q \in \mathsf{Pre}_{A_i}(\SAT_C(\Phi^v(l(i))))$ iff $q \models \fanBr{A_i} \lX \Phi^v(l(i)) = \Phi^v(i)$.
			\item Assume that $\mathsf{Op}(i) = \fanBr{A_i} \lG$. By Lemma~\ref{lem:G-correctness} and $\mathcal{P}(l(i))$, for all $q \in Q$, we have $q \models \Phi^v(i)$ iff $v(y_{i,q}^C) = 1$.  
		\end{itemize}
		Hence, $\mathcal{P}(i)$ holds.
		\item Let $i \in \llbracket1,n \rrbracket$ be with two children $l(i),r(i) \in \llbracket1,n \rrbracket$. Assume that $\mathcal{P}(l(i))$ and $\mathcal{P}(r(i))$ hold. There are two cases:
		\begin{itemize}
			\item Assume that $\mathsf{Op}(i) = \wedge$. Since $v \models \Omega_\wedge^{\mathsf{sem}}$, it follows that $v(y_{i,q}^C) = \min(v(y_{l(i),q}^C),v(y_{r(i),q}^C))$. Therefore, by $\mathcal{P}(l(i))$ and $\mathcal{P}(r(i))$, for all $q \in Q$, we have $q \models \Phi^v(i) = \Phi^v(l(i)) \wedge \Phi^v(r(i))$ iff $v(y_{l(i),q}^C) = 1$ and $v(y_{r(i),q}^C) = 1$ iff $v(y_{i,q}^C) = 1$. 
			\item Assume that $\mathsf{Op}(i) = \fanBr{A_i} \lU$. By Lemma~\ref{lem:U-correctness} and $\mathcal{P}(l(i))$ and $\mathcal{P}(r(i))$, for all $q \in Q$, we have $q \models \Phi^v(i)$ iff $v(y_{i,q}^C) = 1$.  
		\end{itemize}
		Hence, $\mathcal{P}(i)$ holds.
	\end{itemize}
	Therefore, $\mathcal{P}(i)$ holds for all $i \in \llbracket1,n \rrbracket$. The lemma is then given by $\mathcal{P}(n)$ since $\Phi^v = \Phi^v(n)$. 
\end{proof}

We can finally prove Proposition~\ref{prop:pass-learning-correctness}. 
\begin{proof}
	Assume that there is an ATL formula $\Phi$ of size $n$ consistent with $\sample$. Consider a valuation $v: X^\mathsf{Var}_n \rightarrow \B$ such that $\Phi = \Phi^v$. Let $C \in P \cup N$. Let us show that we can define $v$ on $X_C^{\mathsf{Struct}}$ such that $v \models \Omega_{n,C}^{\mathsf{sem}}$. As for the proof of Lemma~\ref{lem:central_lemma}, we define $v$ by induction on the tree structure of the set $\llbracket1,n \rrbracket$. 	%Let $i \in \llbracket1,n \rrbracket$. 
	\begin{itemize}
		\item Let $i \in \llbracket1,n \rrbracket$ be a leaf.  %$\mathsf{Op}(i) \in \prop$, 
		For all $q \in Q$, we set $v(y_{i,q}^C) := 1$ if and only if $p \in \pi(q)$. That way, the formula $\Omega^{\mathsf{sem}}_{\prop}$ is satisfied by $v$ for the index $i$. For all $0 \leq k \leq |Q|$, $v(y_{i,q,k}^C)$ is defined arbitrarily.
		\item Let $i \in \llbracket1,n \rrbracket$ be with a single child $l(i) \in \llbracket1,i-1 \rrbracket$. Assume that $v$ is defined on  $X_{C,l(i)}^{\mathsf{Struct}}$. There are three cases:
		\begin{itemize}
			\item Assume that $\mathsf{Op}(i) = \neg$. For all $q \in Q$, we let $v(y_{i,q}^C) := 1 - v(y_{l(i),q}^C)$. That way, the formula $\Omega^{\mathsf{sem}}_{\neg}$ is satisfied by $v$ for the index $i$. For all $0 \leq k \leq |Q|$, $v(y_{i,q,k}^C)$ is defined arbitrarily.
			\item Assume that $\mathsf{Op}(i) = \fanBr{A_i} \lX$. For all $q \in Q$, we set $v(y_{i,q}^C) = 1$ iff $v \models \Omega_{q,i}^{\mathsf{Pre}}(y_{l(i)}^C)$. That way, the formula $\Omega^{\mathsf{sem}}_{\lX}$ is satisfied by $v$ for the index $i$. For all $0 \leq k \leq |Q|$, $v(y_{i,q,k}^C)$ is defined arbitrarily.
			\item Assume that $\mathsf{Op}(i) = \fanBr{A_i} \lG$. Let $q \in Q$. We set $v(y^C_{i,q,0}) := v(y^C_{l(i),q})$. Furthermore, for all $0 \leq k \leq |Q|-1$, if $v \models \Omega_{q,i}^{\mathsf{Pre}}(y^C_{i,\cdot,k})$, we set $v(y^C_{i,q,k+1}) := v(y^C_{i,q,k})$, otherwise $v(y^C_{i,q,k+1}) := 0$. Finally, we set $v(y^C_{i,q}) := v(y^C_{i,q,|Q|})$. That way, the formula $\Omega^{\mathsf{sem}}_{\lG}$ is satisfied by $v$ for the index $i$.   
		\end{itemize}
		\item Let $i \in \llbracket1,n \rrbracket$ be with two children $l(i),r(i) \in \llbracket1,i-1 \rrbracket$. Assume that $v$ is defined on $X_{C,l(i)}^{\mathsf{Struct}}$ and $X_{C,r(i)}^{\mathsf{Struct}}$. There are two cases:
		\begin{itemize}
			\item Assume that $\mathsf{Op}(i) = \wedge$. For all $q \in Q$, we set $v(y_{i,q}^C) := \min(v(y_{l(i),q}^C),v(y_{r(i),q}^C))$. That way, the formula $\Omega^{\mathsf{sem}}_{\wedge}$ is satisfied by $v$ for the index $i$. For all $0 \leq k \leq |Q|$, $v(y_{i,q,k}^C)$ is defined arbitrarily.
			\item Assume that $\mathsf{Op}(i) = \fanBr{A_i} \lU$. We set $v(y^C_{i,q,0}) := v(y^C_{r(i),q})$. Furthermore, for all $0 \leq k \leq |Q|-1$, if $v \not\models \Omega_{q,i}^{\mathsf{Pre}}(y^C_{i,\cdot,k})$, we set $v(y^C_{i,q,k+1}) := v(y^C_{i,q,k})$, otherwise $v(y^C_{i,q,k+1}) := \max(v(y^C_{i,q,k}),v(y^C_{l(i),q}))$. Finally, we set $v(y^C_{i,q}) := v(y^C_{i,q,|Q|})$. That way, the formula $\Omega^{\mathsf{sem}}_{\lU}$ is satisfied by $v$ for the index $i$.   
		\end{itemize}
	\end{itemize} 
	
	We have shown that we can define the valuation $v$ on $X_C^{\mathsf{Struct}}$ such that $v \models \Omega_{n,C}^{\mathsf{sem}}$. For all $C \in P \cup N$, we define on $X_C^{\mathsf{Struct}}$ such that $v \models \Omega_{n,C}^{\mathsf{sem}}$. Then, since $\Phi$ is consistent with $S$, for all positive structures $C \in P$ and initial states $q \in I$, we have $s \models \Phi$. Therefore, by Lemma~\ref{lem:central_lemma}, $v(y^C_{n,s}) = 1$. On the other hand, for all negative structures $C \in N$, there is an initial state $q \in I$ such that we have $s \not\models \Phi$. Therefore, by Lemma~\ref{lem:central_lemma}, $v(y^C_{n,s}) = 0$. Hence, we have $v \models \Omega^{\mathsf{con}}_n$. That is, $v \models \Omega^{\S}_n$. 
	
	Now, let us assume that $\Omega^{\S}_n$ is satisfiable. Consider a valuation $v$ such that $v \models \Omega^{\S}_n$. Since $v \models \Omega_n^{\mathsf{str}}$, there is an ATL formula $\Phi^v$ of size at most $n$ that this valuation encodes. Consider a positive (resp. negative) structure $C \in P$. Since $v \models \Omega^{\mathsf{con}}_n$ and by Lemma~\ref{lem:central_lemma} (since $v \models \Omega_n^{\mathsf{sem}}$), we have $C \models \Phi$ (resp. $C \not\models \Phi$). Hence the valuation $v$ is consistent with $S$.
\end{proof}

\section{Deciding separability}
\label{sec:appen_separability}
\subsection{Proof of Lemma~\ref{lem:ATLn_ok}}
\label{proof:lem_ATLn_ok}
The proof of this lemma is not complicated, but is a little long due the need to handle all operators, and especially the operators $\lG$ and $\lU$. 
\begin{lemma}
	\label{lem:equiv_U_n_X}
	Let $n \in \N$. Consider a set of agents $\Ag$, a coalition of agents $A \subseteq \Ag$ and an ATL formula $\Phi = \fanBr{A} \Phi_1 \lU \Phi_2$. 
	
	For all $k \in \N$, we define the ATL formula $\Phi_{\lX,k}(A,\Phi_1,\Phi_2)$. We set:
	\begin{itemize}
		\item $\Phi_{\lX,0}(A,\Phi_1,\Phi_2) := \Phi_2$;
		\item for all $k \in \N$, $\Phi_{\lX,k+1}(A,\Phi_1,\Phi_2) := \Phi_{\lX,k}(A,\Phi_1,\Phi_2) \lor (\Phi_1 \wedge \fanBr{A}\lX \Phi_{\lX,k}(A,\Phi_1,\Phi_2))$.
	\end{itemize}

	Let $n \in \N$. For all CGS $C$ with $n$ states, for all states $q \in C$ and for all $k \geq n$, we have $q \models \Phi$ if and only if $q \models \Phi_{\lX,k}(A,\Phi_1,\Phi_2)$. 
\end{lemma}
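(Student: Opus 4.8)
The plan is to show that the formula $\Phi_{\lX,k}$ precisely captures the $k$-th approximant produced by the fixed-point model-checking procedure (Algorithm~\ref{alg:algo_U}) for $\Phi = \fanBrOp{A}(\Phi_1 \lU \Phi_2)$, and then to exploit the fact that this iteration stabilizes within $n = |Q|$ steps to the set $\Sat(\Phi)$. Throughout I abbreviate $\Phi_{\lX,k}(A,\Phi_1,\Phi_2)$ by $\Phi_{\lX,k}$.

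First I would fix a CGS $C$ with $n$ states and, mirroring Algorithm~\ref{alg:algo_U}, define the approximant sets $S_0 := \Sat(\Phi_2)$ and $S_{k+1} := S_k \cup (\Sat(\Phi_1) \cap \mathsf{Pre}_A(S_k))$. The first key step is to prove, by induction on $k$, that $\Sat(\Phi_{\lX,k}) = S_k$. The base case is immediate since $\Phi_{\lX,0} = \Phi_2$. For the inductive step, I would unfold the definition $\Phi_{\lX,k+1} = \Phi_{\lX,k} \lor (\Phi_1 \wedge \fanBr{A}\lX \Phi_{\lX,k})$ and apply the $\Sat$-equations for $\lor$, $\wedge$ and $\fanBr{A}\lX$ from the main text (in particular $\Sat(\fanBr{A}\lX\Psi) = \mathsf{Pre}_A(\Sat(\Psi))$), together with the induction hypothesis $\Sat(\Phi_{\lX,k}) = S_k$. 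This yields $\Sat(\Phi_{\lX,k+1}) = S_k \cup (\Sat(\Phi_1) \cap \mathsf{Pre}_A(S_k)) = S_{k+1}$, as desired.

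The second step is the stabilization argument. The sequence $(S_k)_k$ is monotonically nondecreasing by construction, and each $S_k \subseteq Q$ with $|Q| = n$; hence the chain can strictly increase at most $n$ times, so $S_n = S_{n+1} = \cdots$. Since Algorithm~\ref{alg:algo_U} runs its loop $|Q| = n$ times and returns exactly $S_n$, and this output equals the least fixed point characterizing $\Sat(\Phi)$ as in equation~\eqref{eq:sat-EU}, we have $S_n = \Sat(\Phi)$. Combining both steps, for every $k \ge n$ we obtain $\Sat(\Phi_{\lX,k}) = S_k = S_n = \Sat(\Phi)$; that is, $q \models \Phi_{\lX,k}$ iff $q \in \Sat(\Phi)$ iff $q \models \Phi$, which is precisely the claim.

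The main obstacle is making the stabilization precise: one must justify that an increasing chain of subsets of an $n$-element set reaches its limit by index $n$ (each non-final iteration adds at least one new state), and that this limit is the \emph{least} fixed point, i.e.\ that Algorithm~\ref{alg:algo_U} correctly computes $\Sat(\Phi)$. Both facts are standard, so I would state them briefly and defer to the correctness of the fixed-point procedure of Alur et al.\ already invoked in the paper. The translation of $\Phi_{\lX,k}$ into the approximants via the $\Sat$-equations is the conceptual heart of the argument, but it is routine once the approximants $S_k$ are set up to match the algorithm.
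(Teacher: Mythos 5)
Your proof is correct and follows essentially the same route as the paper: identify $\Phi_{\lX,k}$ with the $k$-th approximant $S_k$ of the fixed-point iteration by induction on $k$, then use the fact that an increasing chain in a set of $n$ states stabilizes by step $n$ at $\Sat(\Phi)$. The only cosmetic difference is that you extend the set-level approximants $S_k$ to all $k$ and conclude $\Sat(\Phi_{\lX,k}) = S_k = S_n$ directly, whereas the paper handles $k > n$ by a second induction at the level of formulas using the closure property $\Sat(\Phi_1) \cap \mathsf{Pre}_A(S_n) \subseteq S_n$; both arguments rest on the same two facts and are equally valid.
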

\begin{proof}
	Consider a CGS $C$ with $n$ states. As for the proof of Lemma~\ref{lem:U-correctness}, let us consider Algorithm~\ref{alg:algo_U}. For all $0 \leq j \leq |Q|$, we let $S_j \subseteq Q$ denote the set $S$ when exiting the \emph{for} loop when $k = j$. When $j = 0$, this corresponds to initialization of the set $S$ before the \emph{for} loop is ever entered. We show by induction on $0 \leq k \leq n$ the property $\mathcal{P}(k)$: $S_k = \SAT_C(\Phi_{\lX,k}(A,\Phi_1,\Phi_2))$.
	
	For $k = 0$, we have $\Phi_{\lX,k}(A,\Phi_1,\Phi_2) = \Phi_2$, hence $\mathcal{P}(0)$ straightforwardly holds. Assume now that $\mathcal{P}(k)$ holds for some $k \in \N$. Let $q \in Q$. We have:
	\begin{align*}
		q \in S_{k+1} & \text{ iff } q \in S_k \text{ or }q \in \SAT_C(\Phi_1) \cap \mathsf{Pre}_A(S_k) \\
		& \text{ iff } q \models \Phi_{\lX,k}(A,\Phi_1,\Phi_2) \text{ or }q \models \Phi_{\lX,k}(A,\Phi_1,\Phi_2) \text{ and } q \models \fanBr{A}\lX \Phi_{\lX,k}(A,\Phi_1,\Phi_2) \\
		& \text{ iff } q \models \Phi_{\lX,k+1}(A,\Phi_1,\Phi_2)
	\end{align*}
	Thus, $\mathcal{P}(k+1)$ follows. In fact, $\mathcal{P}(k)$ holds for all $0 \leq k \leq n$. Since $S_n = \SAT_C(\Phi)$, it follows that we have $\SAT_C(\Phi_{\lX,n}(A,\Phi_1,\Phi_2)) = \SAT_C(\Phi)$. 
	
	Let us show that $\SAT_C(\Phi_1) \cap \mathsf{Pre}_A(S_n) \subseteq S_n = \SAT_C(\Phi)$. There are two cases:
	\begin{itemize}
		\item Either $S_n = Q$, and in that case the inclusion is obvious. 
		\item Or $S_n \neq Q$, and in that case, since for all $0 \leq k \leq n-1$, we have $S_k \subseteq S_{k+1}$, and $|Q| = n$, it follows that there is some $k \leq n-1$ such that $S_k = S_{k+1}$. That is, $\SAT_C(\Phi_1) \cap \mathsf{Pre}_A(S_k) \subseteq S_k$. This straightforwardly implies that, for all $k \leq i \leq n$, we have $S_k = S_i$ and $\SAT_C(\Phi_1) \cap \mathsf{Pre}_A(S_i) \subseteq S_i$. 
	\end{itemize}

	Now, consider any $k \in \N$ and assume that $\SAT_C(\Phi_{\lX,k}(A,\Phi_1,\Phi_2)) = \SAT_C(\Phi)$. Let us show that $\SAT_C(\Phi_{\lX,k+1}(A,\Phi_1,\Phi_2)) = \SAT_C(\Phi)$. Clearly, $\Phi_{\lX,k}(A,\Phi_1,\Phi_2) \implies \Phi_{\lX,k+1}(A,\Phi_1,\Phi_2)$, hence $\SAT_C(\Phi) %= \SAT_C(\Phi_{\lX,k}(A,\Phi_1,\Phi_2)) 
	\subseteq \SAT_C(\Phi_{\lX,k+1}(A,\Phi_1,\Phi_2))$. On the other hand, consider any $q \in \SAT_C(\Phi_{\lX,k+1}(A,\Phi_1,\Phi_2))$. If $q \notin \SAT_C(\Phi)$, then %$q \models \Phi_1 \wedge \fanBr{A}\lX \Phi_{\lX,k}(A,\Phi_1,\Phi_2)$, i.e. 
	$q \in \SAT_C(\Phi_1) \cap \mathsf{Pre}_A(\SAT_C(%\fanBr{A}\lX \Phi_{\lX,k}(A,\Phi_1,\Phi_2)
	\Phi))% = \SAT_C(\Phi_1) \cap \mathsf{Pre}_A(\SAT_C(\Phi)) \subseteq \SAT_C(\Phi)
	$. 
	
	Hence, we do have: $\SAT_C(\Phi_{\lX,k+1}(A,\Phi_1,\Phi_2)) = \SAT_C(\Phi)$. Therefore, for all $k \geq n$, we have $\SAT_C(\Phi_{\lX,k}(A,\Phi_1,\Phi_2)) = \SAT_C(\Phi)$. The lemma follows.
\end{proof}

We state a lemma analogous to Lemma~\ref{lem:equiv_U_n_X} for the operator $\lG$.
\begin{lemma}
	\label{lem:equiv_G_n_X}
	Consider a set of agents $\Ag$, a coalition of agents $A \subseteq \Ag$ and an ATL formula $\Phi = \fanBr{A} \lG \Phi'$. 
	
	For all $k \in \N$, we define the ATL formula $\Phi_{\lX,k}(A,\Phi')$. We set:
	\begin{itemize}
		\item $\Phi_{\lX,0}(A,\Phi') := \Phi'$;
		\item for all $k \in \N$, $\Phi_{\lX,k+1}(A,\Phi') := \Phi_{\lX,k}(A,\Phi') \wedge \fanBr{A}\lX \Phi_{\lX,k}(A,\Phi')$.
	\end{itemize}
	
	Let $n \in \N$. For all CGS $C$ with $n$ states, for all states $q \in C$ and for all $k \geq n$, we have $q \models \Phi$ if and only if $q \models \Phi_{\lX,k}(A,\Phi')$. 
\end{lemma}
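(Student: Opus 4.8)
The plan is to follow the proof of Lemma~\ref{lem:equiv_U_n_X} essentially verbatim in spirit, replacing the least-fixed-point iteration of Algorithm~\ref{alg:algo_U} by the greatest-fixed-point iteration of Algorithm~\ref{alg:algo_G}. Fix a CGS $C$ with $n$ states, and for $0 \le j \le |Q|$ let $S_j \subseteq Q$ denote the value of the variable $S$ in Algorithm~\ref{alg:algo_G} after the iteration with $k = j$ (so $S_0 = \SAT_C(\Phi')$ is the value before the loop, and $S_{j+1} = S_j \cap \mathsf{Pre}_A(S_j)$). First I would prove by induction on $0 \le k \le n$ the statement $S_k = \SAT_C(\Phi_{\lX,k}(A,\Phi'))$.

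For the base case, $\Phi_{\lX,0}(A,\Phi') = \Phi'$ gives $\SAT_C(\Phi_{\lX,0}(A,\Phi')) = \SAT_C(\Phi') = S_0$. For the inductive step, using the identities $\SAT_C(\Psi_1 \wedge \Psi_2) = \SAT_C(\Psi_1) \cap \SAT_C(\Psi_2)$ and $\SAT_C(\fanBr{A}\lX\Psi) = \mathsf{Pre}_A(\SAT_C(\Psi))$ together with the induction hypothesis $S_k = \SAT_C(\Phi_{\lX,k}(A,\Phi'))$, I would compute $\SAT_C(\Phi_{\lX,k+1}(A,\Phi')) = \SAT_C(\Phi_{\lX,k}(A,\Phi')) \cap \mathsf{Pre}_A(\SAT_C(\Phi_{\lX,k}(A,\Phi'))) = S_k \cap \mathsf{Pre}_A(S_k) = S_{k+1}$. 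Since Algorithm~\ref{alg:algo_G} returns $S_{|Q|} = S_n = \SAT_C(\Phi)$, this already yields $\SAT_C(\Phi_{\lX,n}(A,\Phi')) = \SAT_C(\Phi)$.

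It then remains to extend the equality to all $k \ge n$. Here I would exploit that the sequence $(S_j)_j$ is non-increasing, since $S_{j+1} = S_j \cap \mathsf{Pre}_A(S_j) \subseteq S_j$. As $|S_0| \le |Q| = n$, the sequence must satisfy $S_n = S_{n+1}$ (otherwise $S_0 \supsetneq \cdots \supsetneq S_{n+1}$ would be a strictly decreasing chain of $n+2$ subsets of $Q$, forcing $|S_0| \ge n+1$); in particular $S_n = S_n \cap \mathsf{Pre}_A(S_n)$, which is exactly $S_n \subseteq \mathsf{Pre}_A(S_n)$. Then, by induction on $k \ge n$, assuming $\SAT_C(\Phi_{\lX,k}(A,\Phi')) = S_n$, the same two semantic identities give $\SAT_C(\Phi_{\lX,k+1}(A,\Phi')) = S_n \cap \mathsf{Pre}_A(S_n) = S_n = \SAT_C(\Phi)$. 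Hence for every $q \in Q$ and every $k \ge n$ we have $q \models \Phi$ iff $q \models \Phi_{\lX,k}(A,\Phi')$, as claimed.

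The main obstacle—though a mild one, mirroring the corresponding subtlety in Lemma~\ref{lem:equiv_U_n_X}—is the stabilization argument that upgrades the equality at the single index $n$ to equality for all $k \ge n$. Whereas the $\lU$ case deals with an increasing chain bounded above, here the chain is decreasing, and I must argue that $n$ iterations suffice to reach the greatest fixed point, so that $S_n$ is genuinely $\mathsf{Pre}_A$-stable; everything else is a routine transcription of the $\lU$ argument with $\cap$ and $\mathsf{Pre}_A$ in place of the union-based update.
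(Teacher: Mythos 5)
Your proposal is correct and follows essentially the same route as the paper's proof: the same induction showing $S_k = \SAT_C(\Phi_{\lX,k}(A,\Phi'))$ for $k \le n$, followed by a stabilization argument establishing $S_n \subseteq \mathsf{Pre}_A(S_n)$ and a second induction extending the equality to all $k \ge n$. The only cosmetic difference is that the paper derives stabilization by a case split on whether $S_n = \emptyset$, whereas you use a cardinality count on the strictly decreasing chain; both are valid.
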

We proceed very similarly than the proof of Lemma~\ref{lem:equiv_U_n_X}.
\begin{proof}
	Consider a CGS $C$ with $n$ states. As for the proof of Lemma~\ref{lem:G-correctness}, let us consider Algorithm~\ref{alg:algo_G}. For all $0 \leq j \leq |Q|$, we let $S_j \subseteq Q$ denote the set $S$ when exiting the \emph{for} loop when $k = j$. When $j = 0$, this corresponds to initialization of the set $S$ before the \emph{for} loop is ever entered. We show by induction on $0 \leq k \leq n$ the property $\mathcal{P}(k)$: $S_k = \SAT_C(\Phi_{\lX,k}(A,\Phi'))$.
	
	For $k = 0$, we have $\Phi_{\lX,k}(A,\Phi') = \Phi'$, hence $\mathcal{P}(0)$ straightforwardly holds. Assume now that $\mathcal{P}(k)$ holds for some $k \in \N$. Let $q \in Q$. We have:
	\begin{align*}
		q \in S_{k+1} & \text{ iff } q \in S_k \text{ and }q \in \mathsf{Pre}_A(S_k) \\
		& \text{ iff } q \models \Phi_{\lX,k}(A,\Phi') \text{ and } q \models \fanBr{A}\lX \Phi_{\lX,k}(A,\Phi') \\
		& \text{ iff } q \models \Phi_{\lX,k+1}(A,\Phi')
	\end{align*}
	Thus, $\mathcal{P}(k+1)$ follows. In fact, $\mathcal{P}(k)$ holds for all $0 \leq k \leq n$. Since $S_n = \SAT_C(\Phi)$, it follows that we have $\SAT_C(\Phi_{\lX,n}(A,\Phi')) = \SAT_C(\Phi)$. 
	
	Let us show that $\SAT_C(\Phi) \subseteq \mathsf{Pre}_A(\SAT_C(\Phi))$. We have $S_n = \SAT_C(\Phi)$. There are two cases:
	\begin{itemize}
		\item Either $S_n = \emptyset$, and in that case the inclusion is obvious. 
		\item Or $S_n \neq \emptyset$, and in that case, since for all $0 \leq k \leq n-1$, we have $S_k \supseteq S_{k+1}$, and $|Q| = n$, it follows that there is some $k \leq n-1$ such that $S_k = S_{k+1}$ and $S_k \subseteq \mathsf{Pre}_A(S_k)$. This straightforwardly implies that, for all $k \leq i \leq n$, we have $S_k = S_i$ and $S_i \subseteq \mathsf{Pre}_A(S_i)$. 
	\end{itemize}
	
	Now, consider any $k \in \N$ and assume that $\SAT_C(\Phi_{\lX,k}(A,\Phi')) = \SAT_C(\Phi)$. Let us show that $\SAT_C(\Phi_{\lX,k+1}(A,\Phi')) = \SAT_C(\Phi)$. Clearly, $\Phi_{\lX,k+1}(A,\Phi') \implies \Phi_{\lX,k}(A,\Phi')$, hence $\SAT_C(\Phi_{\lX,k+1}(A,\Phi_1,\Phi_2)) \subseteq \SAT_C(\Phi)$. On the other hand, consider any $q \in \SAT_C(\Phi) \subseteq \mathsf{Pre}_A(\SAT_C(\Phi)) = \mathsf{Pre}_A(\SAT_C(\Phi_{\lX,k}(A,\Phi')))$. Therefore, we have $q \in \SAT_C(\Phi_{\lX,k}(A,\Phi')) \cap \mathsf{Pre}_A(\SAT_C(\Phi_{\lX,k}(A,\Phi'))) = \SAT_C(\Phi_{\lX,k+1}(A,\Phi'))$. 
	
	Hence, we do have: $\SAT_C(\Phi_{\lX,k+1}(A,\Phi')) = \SAT_C(\Phi)$. Therefore, for all $k \geq n$, we have $\SAT_C(\Phi_{\lX,k}(A,\Phi')) = \SAT_C(\Phi)$. The lemma follows.
\end{proof}

We can now proceed to the proof of Lemma~\ref{lem:ATLn_ok}.
\begin{proof}
	Let $n \in \N$. 
	%Consider a sample $S = (P,N)$. Let $n : \max_{C \in P \cup N} |Q_C|$ where $Q_C$ denotes the set of states of the CGS $C$. 
	Let us prove by induction on ATL formulas $\Phi$ the property $\mathcal{P}(\Phi)$: there exists an %if $\Phi$ is consistent with the sample $S$, then there is an 
	ATL-$\lX$ formula $\Phi'$ that is equivalent $\Phi$ on CGS with at most $n$ states.%also consistent with the sample $S$. 
	
	This straightforwardly holds for ATL formulas $\Phi \in \prop$ since such formulas are also ATL-$\lX$ formulas. Then:
	\begin{itemize}
		\item Let $\Phi = \neg \Phi_1$ be an ATL formula and assume that $\mathcal{P}(\Phi_1)$ holds. Consider an ATL-$\lX$ formulas $\Phi_1'$ equivalent to $\Phi_1$ on CGS with at most $n$ states. Then, $\Phi' := \neg \Phi_1'$ is an ATL-$\lX$ formula equivalent to $\Phi$ on CGS with at most $n$ states. Hence, $\mathcal{P}(\Phi)$ holds.
		\item Let $\Phi = \fanBr{A} \lX \Phi_1$ be an ATL formula for a coalition of agents $A \subseteq \Ag$ and assume that $\mathcal{P}(\Phi_1)$ holds. Consider an ATL-$\lX$ formulas $\Phi_1'$ equivalent to $\Phi_1$ on CGS with at most $n$ states. Then, $\Phi' := \fanBr{A} \lX \Phi_1'$ is an ATL-$\lX$ formula equivalent to $\Phi$ on CGS with at most $n$ states. Hence, $\mathcal{P}(\Phi)$ holds.
		\item Let $\Phi = \fanBr{A} \lG \Phi_1$ be an ATL formula for a coalition of agents $A \subseteq \Ag$ and assume that $\mathcal{P}(\Phi_1)$ holds. Consider an ATL-$\lX$ formulas $\Phi_1'$ equivalent to $\Phi_1$ on CGS with at most $n$ states. Then, let $\Phi' := \Phi_{\lX,n}(A,\Phi_1')$ be the ATL-$\lX$ formula from Lemma~\ref{lem:equiv_G_n_X}. By Lemma~\ref{lem:equiv_G_n_X}, $\Phi$ and $\Phi'$ are equivalent on CGS with at most $n$ states. Hence, $\mathcal{P}(\Phi)$ holds.
		\item Let $\Phi = \Phi_1 \wedge \Phi_2$ be an ATL formula and assume that $\mathcal{P}(\Phi_1)$ and $\mathcal{P}(\Phi_2)$ hold. Consider two ATL-$\lX$ formulas $\Phi_1',\Phi_2'$ equivalent to $\Phi_1,\Phi_2$ respectively on CGS with at most $n$ states. Then, $\Phi' := \Phi_1' \wedge \Phi_2'$ is an ATL-$\lX$ formula equivalent to $\Phi$ on CGS with at most $n$ states. Hence, $\mathcal{P}(\Phi)$ holds.
		\item Let $\Phi = \fanBr{A} \Phi_1 \lU \Phi_2$ be an ATL formula for a coalition of agents $A \subseteq \Ag$ and assume that $\mathcal{P}(\Phi_1)$ and $\mathcal{P}(\Phi_2)$ hold. Consider two ATL-$\lX$ formulas $\Phi_1',\Phi_2'$ equivalent to $\Phi_1,\Phi_2$ respectively on CGS with at most $n$ states. Then, let $\Phi' := \Phi_{\lX,n}(A,\Phi_1',\Phi_2')$ be the ATL-$\lX$ formula from Lemma~\ref{lem:equiv_U_n_X}. By Lemma~\ref{lem:equiv_U_n_X}, $\Phi$ and $\Phi'$ are equivalent on CGS with at most $n$ states. Hence, $\mathcal{P}(\Phi)$ holds.
	\end{itemize}
	Overall, the property $\mathcal{P}(\Phi)$ holds for all ATL formula $\Phi$. The lemma follows.
\end{proof}

\iffalse
Now, our goal is to compute the set that we define below. 
\begin{definition}
	Consider a sample $S = (P,N)$ os CGS. %concurrent game structure $C = \langle Q,I,k,\prop, \pi, \sigma, d,\delta \rangle$. 
	We let $\mathsf{Distinguish}(S) \subseteq Q^2$ denote the set:
	\begin{equation*}
		\mathsf{Distinguish}(S) := \{ (q,q') \in Q^2 \mid \exists \Phi \in \text{ ATL-$\lX$ },\; q \models \Phi \Leftrightarrow q' \not\models \Phi \}
	\end{equation*}
\end{definition}
\fi

\subsection{Algorithm to decide the separability with $\neg,\wedge,\lX$}
\label{subsec:algorithms}
\begin{algorithm}[t]
	\caption{Given a sample $S$, and $R \subseteq Q^2$, computes the set $\mathsf{Upd}(R) \subseteq Q^2$}
	\label{algo:compute_update}
	\textbf{Input}: Sample $S$ of CGS and $R \subseteq Q^2$
	\begin{algorithmic}[1]
		\State $\mathsf{Upd}(R) \gets \emptyset$
		\For{$(q,q') \in Q^2$}
		\State $\mathsf{ToAdd} \gets \mathsf{False}$
		\State $\mathsf{RelAg} \gets \emptyset$
		\For{$a \in \Ag$}
		\If{$d(q,a) \cdot d(q',a) \geq 2$}
		\State $\mathsf{RelAg} \gets \mathsf{RelAg} \cup \{a \}$
		\EndIf
		\EndFor
		\For{$A \subseteq \mathsf{RelAg}$}
		\For{$\alpha \in \mathsf{Act}_A(q)$}
		\State $\mathsf{CheckTuplePos} \gets \mathsf{True}$
		\For{$\alpha' \in \mathsf{Act}_{A}(q')$}
		\State $\mathsf{CheckTupleNeg} \gets \mathsf{False}$
		\For{$q_{\alpha'} \in \mathsf{Succ}(q',\alpha')$}
		\State $\mathsf{CheckState} \gets \mathsf{True}$
		\For{$q_{\alpha} \in \mathsf{Succ}(q,\alpha)$}
		\If{$(q_{\alpha},q_{\alpha'}) \notin R$}
		\State $\mathsf{CheckState} \gets \mathsf{False}$%; $\mathsf{break}$
		\EndIf
		\EndFor
		\If{$\mathsf{CheckState}$}
		\State $\mathsf{CheckTupleNeg} \gets \mathsf{True}$%; $\mathsf{break}$
		\EndIf
		\EndFor
		\If{$\lnot \mathsf{CheckTupleNeg}$}
		\State $\mathsf{CheckTuplePos} \gets \mathsf{False}$%; $\mathsf{break}$
		\EndIf
		\EndFor
		\If{$\mathsf{CheckTuplePos}$}
		\State $\mathsf{ToAdd} \gets \mathsf{True}$%; $\mathsf{break}$
		\EndIf
		\EndFor
		\EndFor
		\If{$\mathsf{ToAdd}$}
		\State $\mathsf{Upd}(R) \gets \mathsf{Upd}(R) \cup \{ (q,q'),(q',q)\}$
		\EndIf
		\EndFor
		\Return $\mathsf{Upd}(R)$
	\end{algorithmic}
\end{algorithm}

Our goal is now to design a polynomial time algorithm that decides if there exists a formula that is consistent with $S$. First of all, we design Algorithm~\ref{algo:compute_update} that computes the set $\mathsf{Upd}(R) \subseteq Q^2$ from a set $R \subseteq Q^2$. This algorithm satisfies the lemma below.
\begin{lemma}
	\label{lem:compute_upd}
	Given a sample $S = (P,N)$ of CGS and a subset $R \subseteq Q^2$ of pairs of states as input, Algorithm~\ref{algo:compute_update} returns the set $\mathsf{Upd}(R) \subseteq Q^2$. Furthermore, its complexity is in $O(n^4 \cdot r^6 + n^2 \cdot k)$. 
\end{lemma}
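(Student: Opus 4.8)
The plan is to establish correctness by matching the nested Boolean flags of Algorithm~\ref{algo:compute_update} to the alternating quantifiers in Definition~\ref{def:update_X}, and then to bound the running time by exploiting the restriction to \emph{relevant} agents. For correctness, fix a pair $(q,q')$, a coalition $A \subseteq \mathsf{RelAg}(q,q')$, and an action tuple $\alpha \in \mathsf{Act}_A(q)$. Reading the loops from the inside out, I would show that: after the innermost loop, $\mathsf{CheckState}$ is true iff $(t,q_{\alpha'}) \in R$ for every $t \in \mathsf{Succ}(q,\alpha)$ (initialised to $\mathsf{True}$, falsified by any violating $t$, matching the universal $\forall t$); $\mathsf{CheckTupleNeg}$ is true iff some $q_{\alpha'} \in \mathsf{Succ}(q',\alpha')$ passes this test (initialised to $\mathsf{False}$, matching the existential $\exists t'$); $\mathsf{CheckTuplePos}$ is true iff every $\alpha' \in \mathsf{Act}_A(q')$ yields $\mathsf{CheckTupleNeg}$ (matching the universal $\forall \alpha'$); and finally $\mathsf{ToAdd}$ becomes true iff some choice of $A$ and $\alpha$ makes $\mathsf{CheckTuplePos}$ hold (matching the two existentials $\exists A \subseteq \mathsf{RelAg}(q,q')$, $\exists \alpha$). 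Thus $\mathsf{ToAdd}$ is set exactly when $(q,q')$ satisfies the defining condition of $\mathsf{Upd}(R)$; since the algorithm then adds both $(q,q')$ and $(q',q)$, and $\mathsf{Upd}(R)$ is symmetric by construction, the returned set equals $\mathsf{Upd}(R)$. The one point to state carefully is that the initialisation polarities ($\mathsf{True}$ for the universal flags, $\mathsf{False}$ for the existential ones) are precisely what make the flags compute the intended quantifier.

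For the running time, the crucial observation is that although the algorithm enumerates subsets $A \subseteq \mathsf{RelAg}(q,q')$, there are only polynomially many of them: since every $a \in \mathsf{RelAg}(q,q')$ satisfies $d(q,a)\cdot d(q',a) \geq 2$, we get
\[
2^{|\mathsf{RelAg}(q,q')|} \leq \prod_{a \in \mathsf{RelAg}(q,q')} d(q,a)\,d(q',a) \leq ||q|| \cdot ||q'|| \leq r^2 .
\]
Next, for any coalition $A$ one has $|\mathsf{Act}_A(q)|, |\mathsf{Act}_A(q')| \leq r$, and for any fixed action tuple $|\mathsf{Succ}(q,\alpha)|, |\mathsf{Succ}(q',\alpha')| \leq r$, since each is bounded by the number of joint action profiles $||q||, ||q'|| \leq r$. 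Hence, for a single pair $(q,q')$, the five nested loops over $A$, $\alpha$, $\alpha'$, $q_{\alpha'}$, $q_\alpha$ execute the innermost membership test at most $r^2 \cdot r \cdot r \cdot r \cdot r = r^6$ times. Using a plain set representation of $R \subseteq Q^2$ for which a membership test costs $O(|Q|^2) = O(n^2)$, each pair contributes $O(n^2 \cdot r^6)$; multiplying by the $n^2$ pairs gives $O(n^4 \cdot r^6)$. Precomputing $\mathsf{RelAg}(q,q')$ for all pairs costs $O(n^2 \cdot k)$, yielding the claimed total $O(n^4 \cdot r^6 + n^2 \cdot k)$.

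The main obstacle, and the only genuinely non-routine step, is the coalition-counting bound displayed above: it is what turns a seemingly exponential enumeration over all $2^{|\Ag|}$ coalitions into a polynomial one, and everything hinges on the restriction to $\mathsf{RelAg}(q,q')$ together with the elementary inequality $2^{|\mathsf{RelAg}(q,q')|} \leq ||q||\cdot||q'||$. I would also note the convention $\mathsf{Act}_\emptyset(q) = \{\epsilon\}$, so that the $A = \emptyset$ iteration is well defined and contributes the single empty action tuple. The remaining loop-size bounds and the flag-to-quantifier correspondence are direct, so no further difficulty is expected.
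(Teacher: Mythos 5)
Your proposal is correct and follows essentially the same route as the paper's proof: the same flag-to-quantifier correspondence for correctness, and the same loop-by-loop complexity accounting, including the key bound $2^{|\mathsf{RelAg}(q,q')|} \leq ||q||\cdot||q'|| \leq r^2$ on the coalition enumeration (which the paper obtains by splitting $\mathsf{RelAg}(q,q')$ into the agents with at least two actions at $q$ and at $q'$, a trivial variant of your product inequality) and the $O(n^2)$ cost per membership test in $R$.
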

\begin{proof}
	The steps taken by Algorithm~\ref{algo:compute_update} follow closely the definition of the set $\mathsf{Upd}(R)$. Let us explain what every step of the computation does, along with the complexity of the multiple \emph{for} loops:
	\begin{itemize}
		\item Line 2-38: this \emph{for} loop looks over all pairs of states to add, and is entered $n^2$ times.
		\item Line 4-8: computes the set $\mathsf{RelAg}(q) \cup \mathsf{RelAg}(q')$, takes $O(k)$ steps. 
		\item Line 10-34: this \emph{for} loop looks over coalitions of agents. 
		
		It is entered  $|2^{\mathsf{RelAg}(q) \cup \mathsf{RelAg}(q')}| \leq |2^{\mathsf{RelAg}(q)}| \cdot |2^{\mathsf{RelAg}(q')}|$ times, with $|2^{\mathsf{RelAg}(q)}| \leq ||q|| \leq r$ and $|2^{\mathsf{RelAg}(q')}| \leq ||q'|| \leq r$. Therefore, this loop is entered at most $r^2$ times.
		\item Line 11-33: this \emph{for} loop looks over tuples of actions $\alpha \in \mathsf{Act}_A(q)$% for the coalition $A$ at state $q$
		. The boolean $\mathsf{CheckTuplePos}$ is initially set to true. If a tuple of actions $\alpha' \in \mathsf{Act}_A(q)$ % for the coalition $A$ at the state $q'$ 
		that makes the sets $\mathsf{Succ}(q,\alpha)$ and $\mathsf{Succ}(q,\alpha)$ undistinguishable w.r.t. $R$ is found, then $\mathsf{CheckTuplePos}$ is set to false. In that case, the algorithm, checks for another tuple of actions $\alpha \in \mathsf{Act}_A(q)$% for the coalition $A$ at state $q$
		. If it is not the case, then we have found an appropriate coalition of agents and tuples of actions. Therefore, $(q,q') \in \mathsf{Upd}(R)$, and the boolean $\mathsf{ToAdd}$ is set to true. 
		
		This loop is entered at most $||q|| \leq r$ times.
		\item Line 13-29: this \emph{for} loop looks over tuples of actions $\alpha' \in \mathsf{Act}_A(q')$% for the coalition $A$ at state $q'$
		. The boolean $\mathsf{CheckTupleNeg}$ is initially set to false. If a successor state $q_{\alpha'} \in \mathsf{Succ}(q,\alpha')$ that can be distinguished, w.r.t. $R$, from all the states in $\mathsf{Succ}(q,\alpha)$ is found, then $\mathsf{CheckTupleNeg}$ is set to true. In that case, the algorithm checks that it is also the case for the other tuples of actions $\alpha' \in \mathsf{Act}_A(q')$ %for the coalition $A$ at state $q'$
		. If it is not the case, the tuple of actions $\alpha$ does not work, $\mathsf{CheckTuplePos}$ is set to false, and the algorithm looks for tuple of actions $\alpha$. 
		
		This loop is entered at most $||q'|| \leq r$ times.
		\item Line 15-25: this \emph{for} loop looks over possible successor states $q_{\alpha'} \in \mathsf{Succ}(q,\alpha')$. The boolean $\mathsf{CheckState}$ is initially set to true. If a successor state $q_{\alpha} \in \mathsf{Succ}(q,\alpha)$ that cannot be distinguished, w.r.t. $R$, from $q_{\alpha'}$ is found, then $\mathsf{CheckState}$ is set to false. In that case, the algorithm looks for other successor states $q_{\alpha'} \in \mathsf{Succ}(q,\alpha')$. If it is not the case, then it is possible to distinguish $q_{\alpha'}$ %the sets of successors $\mathsf{Succ}(q,\alpha)$ and 
		the sets of successors $\mathsf{Succ}(q,\alpha)$ w.r.t. $R$. Therefore, the boolean $\mathsf{CheckStateNeg}$ is set to true. 
		
		This loop is entered at most $||q'|| \leq r$ times.
		\item Line 17-21: this \emph{for} loop looks over possible successor states $q_{\alpha} \in \mathsf{Succ}(q,\alpha)$. 
		
		This loop is entered at most $||q|| \leq r$ times. Furthermore, checking that $(q_\alpha,q_{\alpha'}) \notin R$ takes time $O(|R|)$ with $|R| \leq n^2$.
	\end{itemize}
	Overall, this algorithm runs in time $O(n^2 \cdot (k + r^2 \cdot r \cdot r \cdot r \cdot r \cdot n^2) = O(n^2 \cdot k + n^4  \cdot r^6)$. 
\end{proof}

\begin{algorithm}[t]
	\caption{Given a sample $S$, computes the set $\mathsf{Distinguish}(S)$}
	\label{algo:compute_distinguish}
	\textbf{Input}: Sample $S$ of CGS
	\begin{algorithmic}[1]
		\State $R \gets \emptyset$
		\State $n \gets |Q|$
		\For{$(q,q') \in Q^2$}
		\If{$\pi(q) \neq \pi(q')$}
		\State $R \gets R \cup \{ (q,q'),(q',q) \}$
		\EndIf
		\EndFor
		\For{$1 \leq i \leq n^2$}
		\State $R \gets R \cup \mathsf{Upd}(R)$
		\EndFor 
		\Return $R$
	\end{algorithmic}
\end{algorithm}

Algorithm~\ref{algo:compute_distinguish} computes the $\mathsf{Distinguish}(S) \subseteq Q^2$ given a sample $S = (P,N)$ of CGS. It satisfies the lemma below. 
\begin{lemma}
	\label{lem:compute_dist}
	Given a sample $S = (P,N)$ of CGS, Algorithm~\ref{algo:compute_distinguish} returns the set $\mathsf{Distinguish}(S) \subseteq Q^2$. Its complexity is in $O(n^6 \cdot r^6 + n^4 \cdot k + n^2 \cdot p)$. 
\end{lemma}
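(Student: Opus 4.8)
The plan is to prove the statement in two independent parts: first that Algorithm~\ref{algo:compute_distinguish} returns exactly $\mathsf{Distinguish}(S)$, and then that it runs within the claimed time bound. The correctness part will rest almost entirely on Lemma~\ref{lem:charac_distinguish}, which already characterizes $\mathsf{Distinguish}(S)$ as the limit of the increasing sequence $R_0 \subseteq R_1 \subseteq \cdots$ defined by $R_0 := \{(q,q'),(q',q) \mid \pi(q) \neq \pi(q')\}$ and $R_{i+1} := R_i \cup \mathsf{Upd}(R_i)$. So the first task is simply to observe that the algorithm computes exactly this sequence: the initialization \emph{for} loop (lines 3--7) inserts a pair $(q,q')$ together with its mirror $(q',q)$ precisely when $\pi(q) \neq \pi(q')$, which is the definition of $R_0$; and each pass of the update \emph{for} loop (lines 8--10) performs the assignment $R \gets R \cup \mathsf{Upd}(R)$, matching the recurrence term by term.

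The one point requiring care is the number of iterations. The algorithm runs the update loop only $n^2 = |Q|^2$ times, whereas Lemma~\ref{lem:charac_distinguish} reaches the fixed point at a higher index. I would reconcile this with a monotone-stabilization argument (the same one underlying the proof of Lemma~\ref{lem:charac_distinguish}): the sequence $(R_i)_i$ is non-decreasing and every $R_i$ is a subset of $Q^2$, a set of cardinality $|Q^2| = |Q|^2 = n^2$; hence each strict inclusion adds at least one pair, so the chain stabilizes after at most $n^2$ steps, and once $R_i = R_{i+1}$ we have $\mathsf{Upd}(R_i) \subseteq R_i$, forcing all later terms to coincide. Consequently $R_{n^2}$ already equals the fixed point, which by Lemma~\ref{lem:charac_distinguish} is $\mathsf{Distinguish}(S)$; thus the value returned is correct.

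For the complexity, I would bound the two loops separately using the conventions $n = |Q|$, $r = \max_q \|q\|$, $p = |\prop|$, $k = |\Ag|$. The initialization loop runs over the $n^2$ pairs of $Q^2$, and for each pair the test $\pi(q) \neq \pi(q')$ compares two subsets of $\prop$ in time $O(p)$, giving $O(n^2 \cdot p)$. The update loop runs $n^2$ times, and each call to $\mathsf{Upd}(R)$ costs $O(n^4 \cdot r^6 + n^2 \cdot k)$ by Lemma~\ref{lem:compute_upd}; multiplying yields $O(n^6 \cdot r^6 + n^4 \cdot k)$. Summing the two contributions gives $O(n^6 \cdot r^6 + n^4 \cdot k + n^2 \cdot p)$, as claimed. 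The argument is routine once Lemmas~\ref{lem:charac_distinguish} and~\ref{lem:compute_upd} are in hand; the only genuinely non-mechanical step is the iteration-count reconciliation above, so that is where I would be most careful to spell out the stabilization argument.
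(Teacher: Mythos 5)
Your proposal is correct and follows essentially the same route as the paper: correctness by matching the algorithm's two loops to the sequence $(R_i)$ of Lemma~\ref{lem:charac_distinguish}, and complexity by charging $O(p)$ per pair in the initialization loop and invoking Lemma~\ref{lem:compute_upd} for each of the $n^2$ update iterations. Your explicit monotone-stabilization argument reconciling the algorithm's $|Q|^2$ iterations with the index used in Lemma~\ref{lem:charac_distinguish} is a point the paper's proof passes over silently, and is worth spelling out as you do.
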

\begin{proof}
	The steps taken by Algorithm~\ref{algo:compute_distinguish} follow closely the characterization of the set $\mathsf{Distinguish}(S)$ from Lemma~\ref{lem:charac_distinguish}. 
	
	Complexity wise, the \emph{for} loop of line 3-7 is entered $n^2$ times. Furthermore, checking that $\pi(q) \neq \pi(q')$ takes time $O(p)$. In addition, the computation of Line 9 takes time $O(n^4 \cdot r^6 + n^2 \cdot k)$, by Lemma~\ref{lem:compute_upd}, with the for loop of Line 8-10 entered $n^2$ times. The lemma follows.
\end{proof}

\begin{algorithm}[t]
	\caption{Given a sample $S$, return yes iff there is an ATL formula that is consistent with $S$}
	\label{algo:decide_consistent}
	\textbf{Input}: Sample $S$ of CGS
	\begin{algorithmic}[1]
		\State $D \gets \mathsf{Distinguish}(S)$
		\State $n \gets |Q|$
		\For{$C_N \in N$}
		\State $\mathsf{CheckNeg} \gets \mathsf{False}$
		\For{$q_{C_N} \in I_{C_N}$}
		\State $\mathsf{CheckState} \gets \mathsf{True}$
		\For{$C_P \in P$}
		\For{$q_{C_P} \in I_{C_P}$}
		\If{$(q_{C_P},q_{C_N}) \notin D$}
		\State $\mathsf{CheckState} \gets \mathsf{False}$
		\EndIf
		\EndFor
		\EndFor
		\If{$\mathsf{CheckState}$}
		\State $\mathsf{CheckNeg} \gets \mathsf{True}$
		\EndIf
		\EndFor
		\If{$\neg \mathsf{CheckNeg}$}
		\Return No
		\EndIf
		\EndFor
		\Return Yes
	\end{algorithmic}
\end{algorithm}

Finally, Algorithm~\ref{algo:decide_consistent} decides if there exists an ATL formula consistent with a sample $S = (P,N)$ of CGS. It satisfies the lemma below. 
\begin{lemma}
	\label{lem:final}
	Given a sample $S = (P,N)$ of CGS, Algorithm~\ref{algo:decide_consistent} decides if there exists an ATL-$\lX$ formula consistent with $S$. Its complexity is in $O(n^6 \cdot r^6 + n^4 \cdot k + n^2 \cdot p)$. 
\end{lemma}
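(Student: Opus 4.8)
The plan is to observe that Algorithm~\ref{algo:decide_consistent} is a direct implementation of the characterization of separability given in Lemma~\ref{lem:decide_from_distinguish}, so that correctness reduces to that lemma together with the correctness of the subroutine computing the distinguishability relation. First I would invoke Lemma~\ref{lem:compute_dist} to conclude that after Line~1 the set $D$ equals $\mathsf{Distinguish}(S)$. Then I would trace through the nested loops: the outer loop over $C_N \in N$ together with the flag $\mathsf{CheckNeg}$ realizes the existential quantifier $\exists q_{C_N} \in I_{C_N}$, while the two inner loops over $C_P \in P$ and $q_{C_P} \in I_{C_P}$, together with the flag $\mathsf{CheckState}$, realize the universal requirement that $(q_{C_P},q_{C_N}) \in D$ for every such pair. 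Thus $\mathsf{CheckNeg}$ is set to true for a given $C_N$ exactly when some $q_{C_N} \in I_{C_N}$ is distinguishable (w.r.t. $D$) from every initial state of every positive structure, and the algorithm returns Yes iff this holds for all $C_N \in N$ --- which is precisely the condition of Lemma~\ref{lem:decide_from_distinguish}. Correctness of the decision then follows immediately.

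For the complexity, I would argue that the running time is dominated by Line~1. By Lemma~\ref{lem:compute_dist}, computing $D = \mathsf{Distinguish}(S)$ takes $O(n^6 \cdot r^6 + n^4 \cdot k + n^2 \cdot p)$, and it remains only to bound the cost of the loops. The innermost body (the membership test $(q_{C_P},q_{C_N}) \notin D$) is executed once per pair $(q_{C_P},q_{C_N})$ in which $q_{C_P}$ is an initial state of a positive structure and $q_{C_N}$ an initial state of a negative structure; since all such states lie in $Q$ and $n := |Q|$, there are at most $n^2$ such pairs. Treating each membership test against $D$ as costing $O(n^2)$ (exactly as is done in the proof of Lemma~\ref{lem:compute_dist} for the test $(q_\alpha,q_{\alpha'}) \notin R$), the loops contribute at most $O(n^4)$, which is absorbed by the $O(n^6 \cdot r^6)$ term. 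Hence the overall running time is $O(n^6 \cdot r^6 + n^4 \cdot k + n^2 \cdot p)$, as claimed.

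The only genuinely delicate point --- and the step I expect to take the most care --- is verifying that the flag-based loop logic matches the quantifier alternation of Lemma~\ref{lem:decide_from_distinguish} exactly: in particular that $\mathsf{CheckState}$ is re-initialized to $\mathsf{True}$ for each candidate $q_{C_N}$, so that the inner conjunction (over all positive initial states) is evaluated independently for each choice of the existentially quantified state $q_{C_N}$, and that $\mathsf{CheckNeg}$ correctly aggregates the disjunction over those candidates. Everything else is routine bookkeeping, and no combinatorial idea beyond Lemmas~\ref{lem:compute_dist} and~\ref{lem:decide_from_distinguish} is required; the statement of Lemma~\ref{lem:final} is essentially a packaging of those two results into a single decision procedure.
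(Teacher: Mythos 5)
Your proposal is correct and follows essentially the same route as the paper's own proof: correctness is obtained by combining Lemma~\ref{lem:compute_dist} (so that $D = \mathsf{Distinguish}(S)$ after Line~1) with the characterization of Lemma~\ref{lem:decide_from_distinguish}, and the complexity is dominated by Line~1 since the nested loops cost only $O(n^4)$ (at most $n^2$ pairs of initial states, each membership test in $D$ costing $O(n^2)$). Your extra care about the flag re-initialization matching the quantifier alternation is a reasonable elaboration of what the paper leaves implicit.
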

\begin{proof}
	The steps taken by Algorithm~\ref{algo:compute_distinguish} follow the characterization from Lemma~\ref{lem:decide_from_distinguish}. 
	
	Complexity wise, the \emph{for} loop of line 3-21 checks for at most $O(n^2)$ pairs of states, with the check of Line 9 takins time at most $|D|$ with $|D| \leq n^2$. Hence, the complexity of \emph{for} loop is in $O(n^4)$, which is less that the time taken to compute the set $\mathsf{Distinguish}(S)$ at Line 1, which is $O(n^6 \cdot r^6 + n^4 \cdot k + n^2 \cdot p)$ by Lemma~\ref{lem:compute_dist}. 
\end{proof}

%Theorem~\ref{thm:decide_polynomial_time_ATLn} is now a direct consequence of Lemmas~\ref{lem:final} and~\ref{lem:true_ATLn_ok}. Theorem~\ref{thm:decide_polynomial_time} in the main part of the paper directly follows from Theorem~\ref{thm:decide_polynomial_time_ATLn}. 

\subsection{Complexity of deciding the separability of any fragment of ATL}
\label{subsec:complexity_exponential}
We focus here on the complexity of effectively computing the set $\mathsf{Apr}(S,\mathsf{O})$. We start with the complexity of computing the update function.
\begin{lemma}
	\label{lem:comple:compute:upd}
	Consider a sample $S = (P,N)$ of CGS and a subset of operators $\mathsf{O} \subseteq \mathsf{OP}$. For all $R \subseteq 2^Q$, the set $	\mathsf{Upd}_{\mathsf{O}}(R)$ can be computed in time $O(2^{k+2n} \cdot n^2 \cdot M \cdot r)$. 
\end{lemma}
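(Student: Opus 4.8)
The plan is to read the running time straight off the definition of $\mathsf{Upd}_{\mathsf{O}}(R)$, which is a union indexed by a coalition $A \subseteq \Ag$ together with one set $T \in R$ (for the unary operators $\neg,\lX,\lG$) or two sets $T,T' \in R$ (for the binary operators $\wedge,\lU$). First I would record the two counting facts that produce the exponential factors: since $R \subseteq 2^Q$ we have $|R| \leq 2^{|Q|} = 2^n$, and there are $2^k$ coalitions $A \subseteq \Ag$. Hence the defining union ranges over at most $2^k \cdot 2^n = 2^{k+n}$ triples $(A,T,o)$ with $o \in \{\neg,\lX,\lG\}$, and over at most $2^k \cdot 2^{2n} = 2^{k+2n}$ quadruples $(A,T,T',o)$ with $o \in \{\wedge,\lU\}$; the latter count dominates and gives the factor $2^{k+2n}$.

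The core of the argument is to bound the cost of evaluating a single $\mathsf{Upd}_o$ term. The cheap cases are immediate: $\mathsf{Upd}_{\neg}(S,T) = Q \setminus T$ and $\mathsf{Upd}_{\wedge}(S,T,T') = T \cap T'$ cost $O(n)$ (or $O(n^2)$ under the crudest set representation), while $\mathsf{Upd}_{\lX}(S,A,T) = \mathsf{Pre}_A(T)$ is a single $\mathsf{Pre}_A$ evaluation. The expensive cases $\mathsf{Upd}_{\lG}$ and $\mathsf{Upd}_{\lU}$ are the outputs of Algorithms~\ref{alg:algo_G} and~\ref{alg:algo_U}, so I must bound one $\mathsf{Pre}_A(S)$ computation and then one run of these fixed-point loops. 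For $\mathsf{Pre}_A(S)$ I would iterate over each state $q$, each coalition action $\alpha \in \mathsf{Act}_A(q)$, and test $\mathsf{Succ}(q,\alpha) \subseteq S$. Bounding the outer enumeration by $\sum_q |\mathsf{Act}_A(q)| \leq \sum_q \|q\| = M$, and bounding each subset test $\mathsf{Succ}(q,\alpha)\subseteq S$ by $O(r\cdot n)$ (at most $r$ successors, each compared against $S$ of size $\leq n$), yields the crude bound $\mathsf{Pre}_A(S) = O(M \cdot r \cdot n)$. Each of Algorithms~\ref{alg:algo_U} and~\ref{alg:algo_G} runs its \emph{for} loop $|Q| = n$ times, each iteration performing one $\mathsf{Pre}_A$ computation plus an intersection/union of cost $O(n)$; hence a single run costs $O(n \cdot M \cdot r \cdot n) = O(n^2 \cdot M \cdot r)$.

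Finally I would combine the counts. Each of the $O(2^{k+2n})$ terms costs at most $O(n^2 \cdot M \cdot r)$, and accumulating each resulting subset into $\mathsf{Upd}_{\mathsf{O}}(R) \subseteq 2^Q$ costs $O(n)$ per term by keying on the $n$-bit characteristic vector, which is dominated by the per-term fixed-point cost. Multiplying term count by per-term cost gives $O(2^{k+2n} \cdot n^2 \cdot M \cdot r)$; the unary contribution $O(2^{k+n} \cdot n^2 \cdot M \cdot r)$ is absorbed into this. The main obstacle I anticipate is the precise accounting of $\mathsf{Pre}_A$: one must handle the product structure $|\mathsf{Act}_A(q)| \cdot |\mathsf{Act}_{\Ag \setminus A}(q)| = \|q\|$ against the successor sets $\mathsf{Succ}(q,\alpha)$ carefully, so that the factors $M$ and $r$ appear as stated rather than collapsing (a tight analysis already gives $O(M)$ or $O(M\cdot n)$) or overshooting with a spurious extra power. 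Once that single-step bound is fixed, multiplying through by the $n$ fixed-point iterations and the $2^{k+2n}$ union terms is routine bookkeeping.
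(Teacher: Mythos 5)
Your proposal is correct and follows essentially the same route as the paper's proof: bound the cost of a single $\mathsf{Pre}_A$ evaluation by $O(M \cdot r \cdot n)$, multiply by the $n$ iterations of Algorithms~\ref{alg:algo_U} and~\ref{alg:algo_G} to get $O(n^2 \cdot M \cdot r)$ per union term, and then multiply by the number of terms in the defining union of $\mathsf{Upd}_{\mathsf{O}}(R)$. Your term count of $2^{k+2n}$ is the correct one matching the lemma statement (the paper's proof body writes $2^k \cdot 2^{3n}$ in its final line, which appears to be a slip), so nothing is missing.
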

\begin{proof}
	Consider any coalition of agents $A \subseteq \Ag$. Let $T \in R$.
	\begin{itemize}
		\item The set $\mathsf{Upd}_{\neg}(S,T)$ can be computed in time $O(n)$.
		\item To check that any state $q \in Q$ is in $\mathsf{Pre}_{A}(T)$, one can go through all possible tuples of actions $\alpha$ for the coalition $A$ at state $q$ and check if $\mathsf{Succ}(q,\alpha) \subseteq T$. This can be done in time $O(||q|| \cdot ||q|| \cdot |T|)$. Therefore, the set $\mathsf{Upd}_{\lX}(S,A,T)$ can be computed in time $O(M \cdot r \cdot n)$.
		\item As above, given any subset of states $S \subseteq Q$, for all $q \in Q$, checking that $q \in \mathsf{Pre}_A(S)$ can be done in time $O(||q||^2 \cdot |S|)$. Therefore, the set $\mathsf{Upd}_{\lG}(S,A,T)$ can be computed by Algorithm~\ref{alg:algo_G}, in time $O(n \cdot M \cdot r \cdot n)$.
	\end{itemize}
	Let $T,T' \in R$.
	\begin{itemize}
		\item The set $\mathsf{Upd}_{\wedge}(S,T,T')$ can be computed in time $O(n)$.
		\item As above, the set $\mathsf{Upd}_{\lU}(S,A,T,T')$ can be computed by Algorithm~\ref{alg:algo_U} in time $O(n \cdot M \cdot r \cdot n)$.
	\end{itemize}
	
	Therefore, computing the set $\mathsf{Upd}_{\mathsf{O}}(R)$ can be done in time $O(2^k \cdot 2^{3n} \cdot n^2 \cdot M \cdot r)$.
\end{proof}

The complexity of computing the set $\mathsf{Apr}(S,\mathsf{O})$ follows.
\begin{lemma}
	\label{lem:comp}
	Consider a sample $S = (P,N)$ of CGS and a subset of operators $\mathsf{O} \subseteq \mathsf{OP}$. For all $R \subseteq 2^Q$, the set $\mathsf{Apr}(S,\mathsf{O})$ can be computed in time $O(2^{k+2n} \cdot n^2 \cdot M \cdot r)$. 
\end{lemma}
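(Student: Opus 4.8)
The plan is to avoid the naive bound that arises from iterating the update operator verbatim. By Definition~\ref{def_approx_acc}, $\mathsf{Apr}(S,\mathsf{O}) = \mathsf{Apr}_{2^n}(S,\mathsf{O})$ is obtained from $\mathsf{Apr}_1(S,\mathsf{O})$ by applying $\mathsf{Upd}_{\mathsf{O}}$ at most $2^n-1$ times, and a single call to $\mathsf{Upd}_{\mathsf{O}}$ already costs $O(2^{k+2n}\cdot n^2\cdot M\cdot r)$ by Lemma~\ref{lem:comple:compute:upd}. Applying it $2^n$ times blindly would therefore cost an extra factor $2^n$, giving only $O(2^{k+3n}\cdot n^2\cdot M\cdot r)$. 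First I would observe that this is wasteful: at step $i+1$ the operator $\mathsf{Upd}_{\mathsf{O}}$ is recomputed on the entire set $\mathsf{Apr}_i(S,\mathsf{O})$, re-processing every coalition/operand combination already handled in earlier steps.

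So the approach is an incremental, worklist-style closure computation. I would maintain the collection $\mathcal{R}\subseteq 2^Q$ of subsets discovered so far, initialized to $\mathsf{Apr}_1(S,\mathsf{O})$, together with a worklist of subsets not yet processed. Whenever a subset $T$ is dequeued, I would, for every coalition $A\subseteq\Ag$, apply the unary updates $\mathsf{Upd}_o(S,A,T)$ (for $o\in\mathsf{O}\cap\{\neg,\lX,\lG\}$) and, for every already-processed $T'\in\mathcal{R}$, apply the binary updates $\mathsf{Upd}_o(S,A,T,T')$ and $\mathsf{Upd}_o(S,A,T',T)$ (for $o\in\mathsf{O}\cap\{\wedge,\lU\}$); any resulting subset not already in $\mathcal{R}$ is added to $\mathcal{R}$ and to the worklist. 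Representing each subset of $Q$ by its characteristic bit-vector of length $n$ lets me test membership in $\mathcal{R}$ (via an array indexed by the $2^n$ possible vectors) in time dominated by the update itself. Correctness is immediate: the procedure computes exactly the closure of $\mathsf{Apr}_1(S,\mathsf{O})$ under $\mathsf{Upd}_{\mathsf{O}}$, which by the monotonicity argument already used in the proof of Lemma~\ref{lem:acc1} (the sequence $(\mathsf{Apr}_i)_i$ is increasing in the finite lattice $2^{2^Q}$ and stabilizes, with $\mathsf{Upd}_{\mathsf{O}}(\mathsf{Apr})\subseteq\mathsf{Apr}$) coincides with $\mathsf{Apr}(S,\mathsf{O})$.

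The main obstacle, and the crux of the bound, is the amortized accounting that removes the spurious $2^n$ factor. Since a pair of operands is combined only when one of its members is freshly dequeued, and each subset is enqueued at most once, every triple $(A,T,T')$ with $A\subseteq\Ag$ and $T,T'\in 2^Q$ — and likewise every pair $(A,T)$ for the unary updates — is processed at most once over the whole run; this is exactly what the bit-vector deduplication guarantees. There are at most $2^k\cdot 2^n\cdot 2^n = 2^{k+2n}$ such triples. Reusing the per-operation estimates from the proof of Lemma~\ref{lem:comple:compute:upd} (each of $\mathsf{Upd}_{\lG}$ and $\mathsf{Upd}_{\lU}$ runs Algorithms~\ref{alg:algo_G} and~\ref{alg:algo_U} for $|Q|$ steps, each a $\mathsf{Pre}$ computation, at cost $O(n^2\cdot M\cdot r)$, the remaining updates being cheaper), each triple is handled in time $O(n^2\cdot M\cdot r)$. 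The initialization $\mathsf{Apr}_1(S,\mathsf{O})$ costs only $O(p\cdot n)$ and is absorbed. Multiplying the number of triples by the per-triple cost yields the claimed total $O(2^{k+2n}\cdot n^2\cdot M\cdot r)$.
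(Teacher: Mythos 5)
Your proof is correct, and it takes a genuinely different --- and more careful --- route than the paper's. The paper's own proof of this lemma is a one-liner: it computes $\mathsf{Apr}_1(S,\mathsf{O})$ in $O(p\cdot n\cdot 2^n)$ and then appeals to Lemma~\ref{lem:comple:compute:upd} together with the definition $\mathsf{Apr}(S,\mathsf{O})=\mathsf{Apr}_{2^n}(S,\mathsf{O})$, i.e.\ it iterates the update operator $2^n$ times at the stated per-call cost. Taken literally, that argument only yields $O(2^{k+3n}\cdot n^2\cdot M\cdot r)$ --- exactly the spurious factor of $2^n$ you identify at the outset --- so the paper's proof does not, as written, justify the bound in the lemma statement (there is also an internal mismatch in the paper, whose proof of Lemma~\ref{lem:comple:compute:upd} derives $O(2^k\cdot 2^{3n}\cdot n^2\cdot M\cdot r)$ while its statement claims $O(2^{k+2n}\cdot n^2\cdot M\cdot r)$). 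Your worklist formulation repairs this: since each subset of $Q$ is enqueued at most once and binary combinations are formed only against already-discovered operands, every triple $(A,T,T')$ and pair $(A,T)$ is charged at most a constant number of times over the entire run, and the bit-vector indexing keeps membership tests from dominating; the correctness of the closure then follows from the same monotonicity/stabilization argument as in Lemma~\ref{lem:acc1}. What your approach buys is an actual derivation of the claimed $O(2^{k+2n}\cdot n^2\cdot M\cdot r)$ bound; what the paper's buys is brevity, at the cost of (at least) a missing amortization step.
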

\begin{proof}
	Computing the set $\mathsf{Apr}_1(S,\mathsf{O})$ can be done in time $O(p \cdot n \cdot 2^n)$. The lemma then follows by Lemma~\ref{lem:comple:compute:upd} %, that computing the set $\mathsf{Apr}_1(S,\mathsf{O})$ can be done in time $O(2^{k+3n} \cdot (n^2 \cdot M \cdot r + p))$.
	and the definition of the set $\mathsf{Apr}(S,\mathsf{O}) = \mathsf{Apr}_{2^n}(S,\mathsf{O})$.
\end{proof}

We can finally proceed to the proof of Lemma~\ref{lem:decide_exponential_time_ATLn}.
\begin{proof}
	To decide if there is an $\mathbf{ATL}(\mathsf{O})$-formula consistent with $S$, we first compute the set $\mathsf{Apr}(S,\mathsf{O}) = \mathsf{Acc}(S,\mathsf{O})$ (by Lemmas~\ref{lem:acc1} and~\ref{lem:acc2}) in time $O(2^{k+2n} \cdot n^2 \cdot M \cdot r)$ by Lemma~\ref{lem:comp}. Then, it amounts to check that there is some $T \in  \mathsf{Acc}(S,\mathsf{O})$ such that, for all $C_P \in P$ we have $I_{C_P} \subseteq T$ and for all $C_N \in N$ we have $I_{C_N} \not\subseteq T$. This can be done in time $O(2^n \cdot n^2)$.
	
	The bound on the size of the formulas that we need to consider directly follows from Lemmas~\ref{lem:acc2} and the fact that $\mathsf{Acc}(S,\mathsf{O}) = \mathsf{Apr}_{2^n}(S,\mathsf{O})$. 
\end{proof}

\section{Experimental results}
\label{sec:app-experiments}

\subsection{List of formulas}
\label{subsec:formula-list}

We present the entire list of formulas used in the experiments, which is an extension of the formulas presented in Figure~\ref{fig:runtime_results}.%Table~\ref{tab:TL-patterns}.

\begin{table}[h]
	\centering
	\setlength{\tabcolsep}{8pt}
	%\resizebox{\linewidth}{!}{%
	\begin{tabular}{|c|c|}
		\hline
		CTL formulas & ATL formulas \\
		\hline
		$\lA\lG(\lA\lF(p))$ & $\fanBr{1}\lX(p)$ \\
		$\lA\lG(\lE\lF(q))$ & $\fanBr{0}\lF(q)$ \\
		$\lA\lG(\neg p \lor \neg q)$ & $\fanBr{}\lG(p\rightarrow \fanBr{1}\lX(q))$ \\
		$\lA\lG(p \rightarrow \lA\lF(q))$ & $\fanBr{}\lG(p\rightarrow \fanBr{1}\lG(p))$ \\
		$\lA\lG(p \lor \lA\lX(\lnot q))$ & $\fanBr{}\lG(p\rightarrow \fanBr{0,\!1}\lF(q))$  \\
		$\lA\lG(\lA\lF(p) \land \lA\lF(q))$ &  $\fanBr{}\lG((p\wedge \neg q)\rightarrow \fanBr{1}\lG(p))$ \\
		\hline
	\end{tabular}
	\caption{Examples of CTL and ATL formulas used for generating benchmarks}
\end{table}

\subsection{The complete results for Figure~\ref{fig:runtime_results}.}
\label{subsec:full-experiments}

\begin{figure}[t]
	\begin{tikzpicture}
	\begin{axis}[%[log origin=infty] 
	height=42mm,
	width=50mm,
	xmin=20, ymin=-10, ymax=2000,
	enlarge x limits=true, enlarge y limits=true,
	xlabel = {Number of examples},
	ylabel = {Runtime (in secs)},
	xtick={20,40,60,80,100,120},%
	ytick={0,1000,2000},
	yticklabels = {0,1K,2K},
	ytickten= {0, 1, 2, 3},%
	%extra x ticks={2000}, extra x tick labels={\strut TO},%
	extra y ticks={2400}, %extra y tick labels=\empty%, 
	extra y tick labels={\strut TO},
	xminorticks=false,
	yminorticks=false,
	xlabel near ticks,
	ylabel near ticks,
	%xtick style={draw=major ticks},
	%ytick style={draw=none},
	label style={font=\footnotesize},
	x tick label style={font={\strut\tiny}},
	y tick label style={font={\strut\tiny}},		
	x label style = {yshift=1mm},
	legend columns = 1,
	legend style={title={CTL learning}, at={(3.17,1.5)}, font=\scriptsize, draw=none, fill=none},
	title = {CTL learning},
	title style = {font=\footnotesize, inner sep=-4pt}
	]
	
	\addlegendimage{empty legend}
	\addlegendentry{\hspace{-2mm}\textbf{CTL formulas}}
	
	\addplot[color=red,mark=x,opacity=0.7]
	plot coordinates {
		(20,10.48)
		(40,39.29)
		(60,82.13)
		(80,142.38)
		(100,216.02)
		(120,315.88)
	};
	\addlegendentry{$\lA\lG(\lA\lF(p))$}
	\addplot[mark=+,green,opacity =0.7] plot coordinates {
		(20,10.62)
		(40,38.68)
		(60,85.08)
		(80,147.78)
		(100,238.15)
		(120,341.37)
	};
	\addlegendentry{$\lA\lG(\lE\lF(q))$}
	
	\addplot[mark=*,yellow,opacity =0.7] plot coordinates {
		(20,110.39)
		(40,305.97)
		(60,634.72)
		(80,801)
		(100,1456.74)
		(120,1415.79)
	};
	\addlegendentry{$\lA\lG(\neg p\!\lor\!\neg q)$}
	
	\addplot[mark=+,orange,opacity =0.7] plot coordinates {
		(20,101.77)
		(40,243.21)
		(60,544.60)
		(80,822.465)
		(100,1328.61)
		(120,1430.26)
	};
	\addlegendentry{$\lA\lG(p\!\rightarrow\!\lA\lF(q))$}
	
	\addplot[mark=diamond,blue,opacity =0.7] plot coordinates {
		(20,11.315)
		(40,340.74)
		(60,740.52)
		(80,1020.16)
		(100,1472.84)
		(120,1590.76)
	};
	\addlegendentry{$\lA\lG(\lA\lF(p)\!\land\!\lA\lF(q))$}

	\addplot[mark=x,violet,opacity=0.9] plot coordinates {
		(20,95.01)
		(40,206.38)
		(60,450.86)
		(80,767.88)
		(100,1126.43)
		(120,1372.21)
	};
	\addlegendentry{$\lA\lG(p\!\lor\!\lA\lX(\lnot q))$}
	\end{axis}
	
	\hspace{35mm}
	
	\begin{axis}[%[log origin=infty] 
	height=42mm,
	width=50mm,
	xmin=10, ymin=0, ymax=2000,
	enlarge x limits=true, enlarge y limits=true,
	xlabel = {Number of examples},
	ylabel = {},
	xtick={10,20,30,40,50,60},%
	ytick={0,1000,2000},
	%ytickten= {0, 1, 2, 3},%
	%extra x ticks={2000}, extra x tick labels={\strut TO},%
	%extra y ticks={2400}, %extra y tick labels=\empty%, 
	%extra y tick labels={\strut TO},
	xminorticks=false,
	yminorticks=false,
	xlabel near ticks,
	ylabel near ticks,
	%xtick style={draw=major ticks},
	%ytick style={draw=none},
	label style={font=\footnotesize},
	x tick label style={font={\strut\tiny}},
	yticklabels = {},
	y tick label style={font={\strut\tiny}},		
	x label style = {yshift=1mm},
	legend columns = 1,
	legend style={at={(1.65,0.35)},font=\scriptsize,
		anchor=north, draw=none, fill=none},
	title = {ATL learning},
	title style = {font=\footnotesize, inner sep=-4pt}
	]
	
	\addlegendimage{empty legend}
	\addlegendentry{\hspace{-3mm}\textbf{ATL formulas}}
	
	\addplot[color=olive,mark=x,opacity =0.8]
	plot coordinates {
		(10, 5.5)
		(20, 21.8)
		(30, 47.69)
		(40, 84.12)
		(50, 131.62)
		(60, 189.72)
	};
	\addlegendentry{$\fanBr{1}\lX(p)$}
	
	\addplot[mark=+,blue,opacity =0.7] plot coordinates {
		(10, 4.9)
		(20, 22.72)
		(30, 48.80)
		(40, 82.75)
		(50, 131.186)
		(60, 186.65)
	};
	\addlegendentry{$\fanBr{0}\lF(q)$}
	
	\addplot[mark=triangle,red,opacity=0.7] plot coordinates {
		(10, 107.31)
		(20, 391.87)
		(30, 803.79)
		(40, 1425.39)
		(50, 1613.98)
		(60, 1788.75)
	};
	\addlegendentry{$\fanBr{}\lG(p\!\rightarrow\!\fanBr{1}\lX(q))$ }
	
	\addplot[mark=star,magenta,opacity =0.7] plot coordinates {
		(10, 45.74)
		(20, 182.86)
		(30, 415.181)
		(40, 748.00)
		(50, 1155.71)
		(60, 1484.91)
	};
	\addlegendentry{$\fanBr{}\lG(p\!\rightarrow\!\fanBr{1}\lG(p))$}

	\addplot[mark=triangle,black,opacity=0.7] plot coordinates {
		(10, 41.59)
		(20, 392.17)
		(30, 835.11)
		(40, 1424.98)
		(50, 1586.71)
		(60, 1752.46)
		
	};
	\addlegendentry{$\fanBr{}\lG(p\rightarrow \fanBr{0,\!1}\lF(q))$}
	
	\addplot[mark=*,teal,opacity=0.7] plot coordinates {
		(10, 90.92)
		(20, 1012.32)
		(30, 1690.90)
		(40, 1698.69)
		(50, 1893.51)
		(60, 1924.78)
	};
	\addlegendentry{$\fanBr{}\lG((p\!\wedge\!\neg q)\!\rightarrow\!\fanBr{1}\lG(p))$}

	\end{axis}
	
	\end{tikzpicture}
	
	\caption{Runtime of CTL and ATL learning algorithms on samples with varying number of examples (considering structures of size $\leq 20$).}
	\label{fig:app-runtime-comp}
\end{figure}
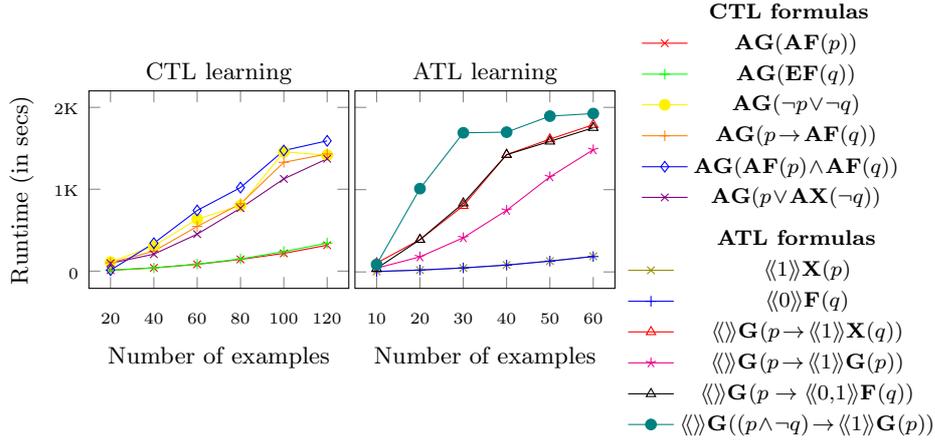
Figure~\ref{fig:app-runtime-comp} presents the entire results of the experiment conducted for Figure~\ref{fig:runtime_results}.
We omitted some formulas from Figure~\ref{fig:runtime_results} since the runtimes for some formulas overlap, especially if the formulas have similar sizes.

In the above figure, we presented the change in runtime with respect to sample sizes.
In Figure~\ref{fig:app-runtime-comp-structures}, we compare the change in runtime with respect to size of the structures in the sample.

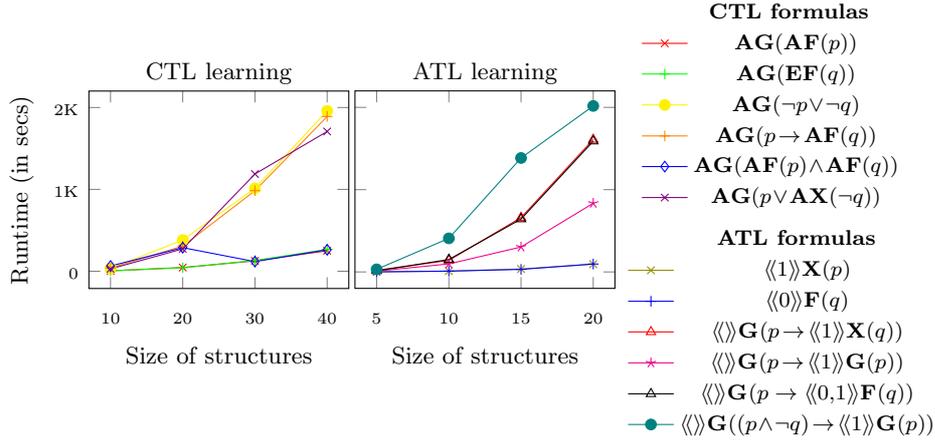
\begin{figure}[t]
	\begin{tikzpicture}
	\begin{axis}[%[log origin=infty] 
	height=42mm,
	width=50mm,
	xmin=10, ymin=-10, ymax=2000,
	enlarge x limits=true, enlarge y limits=true,
	xlabel = {Size of structures},
	ylabel = {Runtime (in secs)},
	xtick={10,20,30,40},%
	ytick={0,1000,2000},
	yticklabels = {0,1K,2K},
	ytickten= {0, 1, 2, 3},%
	%extra x ticks={2000}, extra x tick labels={\strut TO},%
	extra y ticks={2400}, %extra y tick labels=\empty%, 
	extra y tick labels={\strut TO},
	xminorticks=false,
	yminorticks=false,
	xlabel near ticks,
	ylabel near ticks,
	%xtick style={draw=major ticks},
	%ytick style={draw=none},
	label style={font=\footnotesize},
	x tick label style={font={\strut\tiny}},
	y tick label style={font={\strut\tiny}},		
	x label style = {yshift=1mm},
	legend columns = 1,
	legend style={title={CTL learning}, at={(3.17,1.5)}, font=\scriptsize, draw=none, fill=none},
	title = {CTL learning},
	title style = {font=\footnotesize, inner sep=-4pt}
	]
	
	\addlegendimage{empty legend}
	\addlegendentry{\hspace{-2mm}\textbf{CTL formulas}}
	
	\addplot[color=red,mark=x,opacity=0.7]
	plot coordinates {
		(10, 4.62)
		(20, 45.15)
		(30, 127.06)
		(40, 250.02)
	};
	\addlegendentry{$\lA\lG(\lA\lF(p))$}
	\addplot[mark=+,green,opacity =0.7] plot coordinates {
		(10, 6.05)
		(20, 43.24)
		(30, 130.96)
		(40, 267.31)
	};
	\addlegendentry{$\lA\lG(\lE\lF(q))$}
	
	\addplot[mark=*,yellow,opacity =0.7] plot coordinates {
		(10, 35.10)
		(20, 381.25)
		(30, 1012.44)
		(40, 1958.81)
	};
	\addlegendentry{$\lA\lG(\neg p\!\lor\!\neg q)$}
	
	\addplot[mark=+,orange,opacity =0.7] plot coordinates {
		(10, 38.45)
		(20, 306.54)
		(30, 983.69)
		(40, 1890.15)
	};
	\addlegendentry{$\lA\lG(p\!\rightarrow\!\lA\lF(q))$}
	
	\addplot[mark=diamond,blue,opacity =0.7] plot coordinates {
		(10, 64.17)
		(20, 287.88)
		(30, 117.62)
		(40, 261.76)
	};
	\addlegendentry{$\lA\lG(\lA\lF(p)\!\land\!\lA\lF(q))$}

	\addplot[mark=x,violet,opacity=0.9] plot coordinates {
		(10, 28.75)
		(20, 272.63)
		(30, 1187.76)
		(40, 1707.18)
	};
	\addlegendentry{$\lA\lG(p\!\lor\!\lA\lX(\lnot q))$}
	\end{axis}
	
	\hspace{35mm}
	
	\begin{axis}[%[log origin=infty] 
	height=42mm,
	width=50mm,
	xmin=5, ymin=0, ymax=2000,
	enlarge x limits=true, enlarge y limits=true,
	xlabel = {Size of structures},
	ylabel = {},
	xtick={5,10,15,20},%
	ytick={0,1000,2000},
	%ytickten= {0, 1, 2, 3},%
	%extra x ticks={2000}, extra x tick labels={\strut TO},%
	%extra y ticks={2400}, %extra y tick labels=\empty%, 
	%extra y tick labels={\strut TO},
	xminorticks=false,
	yminorticks=false,
	xlabel near ticks,
	ylabel near ticks,
	%xtick style={draw=major ticks},
	%ytick style={draw=none},
	label style={font=\footnotesize},
	x tick label style={font={\strut\tiny}},
	yticklabels = {},
	y tick label style={font={\strut\tiny}},		
	x label style = {yshift=1mm},
	legend columns = 1,
	legend style={at={(1.65,0.35)},font=\scriptsize,
		anchor=north, draw=none, fill=none},
	title = {ATL learning},
	title style = {font=\footnotesize, inner sep=-4pt}
	]
	
	\addlegendimage{empty legend}
	\addlegendentry{\hspace{-3mm}\textbf{ATL formulas}}
	
	\addplot[color=olive,mark=x,opacity =0.8]
	plot coordinates {
		(5, 1.14)
		(10, 10.21)
		(15,35.25)
		(20, 94.36)
	};
	\addlegendentry{$\fanBr{1}\lX(p)$}
	
	\addplot[mark=+,blue,opacity =0.7] plot coordinates {
		(5, 1.02)
		(10, 9.39)
		(15, 31.00)
		(20, 95.95)
	};
	\addlegendentry{$\fanBr{0}\lF(q)$}
	
	\addplot[mark=triangle,red,opacity=0.7] plot coordinates {
		(5, 19.61)
		(10, 139.99)
		(15, 659.93)
		(20, 1609.23)
	};
	\addlegendentry{$\fanBr{}\lG(p\!\rightarrow\!\fanBr{1}\lX(q))$ }
	
	\addplot[mark=star,magenta,opacity =0.7] plot coordinates {
		(5, 10.33)
		(10, 97.44)
		(15, 302.83)
		(20, 834.38)
	};
	\addlegendentry{$\fanBr{}\lG(p\!\rightarrow\!\fanBr{1}\lG(p))$}

	\addplot[mark=triangle,black,opacity=0.7] plot coordinates {
		(5, 12.77)
		(10, 150.81)
		(15, 642.34)
		(20, 1588.90)
		
	};
	\addlegendentry{$\fanBr{}\lG(p\rightarrow \fanBr{0,\!1}\lF(q))$}
	
	\addplot[mark=*,teal,opacity=0.7] plot coordinates {
		(5, 31.55)
		(10, 408.037)
		(15, 1384.81)
		(20, 2017.35)
	};
	\addlegendentry{$\fanBr{}\lG((p\!\wedge\!\neg q)\!\rightarrow\!\fanBr{1}\lG(p))$}

	\end{axis}
	
	\end{tikzpicture}
	
	\caption{Runtime of CTL and ATL learning algorithms on samples with varying size of structures (considering number of examples  $\leq 40$).}
	\label{fig:app-runtime-comp-structures}
\end{figure}

\end{document}